\documentclass[11pt]{article}

\usepackage{amsmath,amssymb}
\usepackage{amsthm} 
\usepackage{newtxtext}
\usepackage{newtxmath}
\usepackage[T1]{fontenc}
\usepackage[utf8]{inputenc}
\usepackage[margin=1in]{geometry}
\usepackage{xcolor}
\usepackage{graphicx}
\usepackage[expansion=false]{microtype}
\usepackage[colorlinks=true,linkcolor=blue!50!black,citecolor=blue!50!black]{hyperref}

\theoremstyle{plain}
\newtheorem{thm}{Theorem}[section]
\newtheorem{prop}[thm]{Proposition}
\newtheorem{lem}[thm]{Lemma}
\newtheorem{cor}[thm]{Corollary}
\theoremstyle{definition}
\newtheorem{defn}[thm]{Definition}
\newtheorem{ex}[thm]{Example}
\theoremstyle{remark}
\newtheorem{rem}[thm]{Remark}

\newcommand{\M}{\mathcal{M}}
\newcommand{\R}{\mathbb{R}}
\newcommand{\Z}{\mathbb{Z}}
\newcommand{\C}{\mathbb{C}}
\newcommand{\PL}{\textup{P\L}}
\newcommand{\KL}{\textup{K\L}}
\DeclareMathOperator{\dist}{dist}
\DeclareMathOperator{\Id}{Id}

\newcommand{\Log}{\operatorname{Log}}
\newcommand{\Exp}{\operatorname{Exp}}
\newcommand{\bcr}{\textup{BCR}}

\title{Smooth globally P\L{} functions are nonlinear least-squares,\\
and so are their gradient-dominated cousins}
\author{Eduardo Sontag\\
  Northeastern University\\
  \mbox{\tt e.sontag@northeastern.edu, sontag@sontaglab.org}\\
  \mbox{\tt https://sontaglab.org}}
\date{}

\begin{document}

\maketitle

\begin{abstract}
  \noindent
Boumal, Criscitiello and Rebjock (BCR) proved that if $\M$ is a
contractible, connected and complete Riemannian manifold, then every
smooth function $f\colon\M\to\R$ satisfying the global
Polyak--\L{}ojasiewicz inequality (P\L{}I) is necessarily of the form
$f = f^* + \|\phi\|^2$ with $\phi$ a submersion.  Informally,
minimizing such a function amounts to solving a nonlinear
least-squares problem in new coordinates.  The global P\L{}I
hypothesis fails, however, in many problems of interest, among them
continuous-time LQR policy optimization in optimal control and a
standard formulation of logistic regression.  A hierarchy of weakened
P\L{} inequalities has been introduced in order to cover such
problems, and more generally to study the effect of noise and
adversarial perturbations on gradient flows.  This note shows that,
with minor modifications, the same reduction to a nonlinear
least-squares problem holds under a substantially weaker hypothesis,
``semiglobal'' P\L{}I, which is satisfied in both of the examples just
mentioned.  That condition asks that $f$ satisfy an estimate
$\|\nabla f(x)\| \ge \alpha\bigl(f(x)-f^*\bigr)$ for all $x$, with
$\alpha$ merely positive definite and bounded below by a positive
multiple of $\sqrt{s}$ for small $s>0$.

\end{abstract}

\tableofcontents

\section{Introduction}

\subsection{Why weaken the P\L{} inequality?}
\label{sec:why}

The Polyak--\L{}ojasiewicz inequality is invoked in optimization for two
reasons. The first, and the original one \cite{Polyak1963}, is that it delivers a
global exponential rate: along the gradient flow
$\dot k = -\nabla\mathcal{L}(k)^\top$, writing
$\ell(t) = \mathcal{L}(k(t)) - \underline{\mathcal{L}}$, where
$\underline{\mathcal{L}} := \inf_{D}\mathcal{L}$ denotes the minimal value of the
loss, one has
$\dot\ell = -\|\nabla\mathcal{L}\|^2 \le -\lambda\ell$ and hence
$\ell(t)\le e^{-\lambda t}\ell(0)$. The second, more recent, is robustness: if
the gradient is supplied by an ``oracle'' (a simulator, a digital twin, an
experiment on hardware, a sample from limited data) then what one actually
integrates is $\dot k = -\nabla\mathcal{L}(k)^\top + d$, and one wants the
graceful degradation encoded by input-to-state stability
\cite{Sontag2022, CuiJiangSontag2024, Sontag2007}.

The difficulty is that in several problems where one most wants these
conclusions, the global inequality simply fails, and it typically
fails because the gradient stays bounded while the loss does not.
Two examples, both from \cite{Sontag2025L4DC}, make the point.

\smallskip
\emph{Continuous-time LQR.} Take the scalar integrator $\dot x = u$, $x(0)=1$,
with cost $\int_0^\infty x^2 + u^2$ and $u = -kx$. Then
\begin{equation}\label{eq:lqrscalar}
  \mathcal{L}(k) \ =\ \frac{1+k^2}{2k} ,
  \qquad k \in D = (0,+\infty) \quad [k \text{ stabilizing}] ,
\end{equation}
with $\underline{\mathcal{L}} = 1$ at $k=1$. Here
$|\nabla\mathcal{L}(k)|^2 = \tfrac14(1-k^{-2})^2 \to \tfrac14$ as $k\to\infty$,
while $\mathcal{L}(k)-\underline{\mathcal{L}}\to\infty$. So there is no estimate
$\|\nabla\mathcal{L}\| \ge \alpha(\mathcal{L}-\underline{\mathcal{L}})$ with
$\alpha$ unbounded, and in particular none with $\alpha(r) = \sqrt{\lambda r}$.
What one sees instead is an almost linear decrease at rate $\approx\tfrac14$,
followed by a ``soft switch'' to exponential decay near the minimum
(Figure~\ref{fig:lqr}). This is in sharp contrast with the \emph{discrete-time}
LQR problem, which does satisfy the global inequality \cite{Fazel2018}.

\smallskip
\emph{Logistic regression.} Take two samples $(x_1,y_1) = (1,0)$ and
$(x_2,y_2) = (2,1)$ with scalar parameter $k$. The binary cross-entropy loss is
\begin{equation}\label{eq:logistic}
  \mathcal{L}(k) \ =\ \log(1+e^{k}) + \log(1+e^{-2k}) ,
  \qquad
  \mathcal{L}'(k) \ =\ \frac{1}{1+e^{-k}} - \frac{2}{1+e^{2k}} .
\end{equation}
Again $\mathcal{L}(k)\to+\infty$ as $k\to\pm\infty$, while
$\mathcal{L}'(k) \to 1$ and $\to -2$ respectively: bounded gradient, unbounded
loss, no unbounded $\alpha$.

\smallskip
These obstructions are what motivated the study, in \cite{Sontag2025L4DC} and the
work summarized there, of gradient dominance estimates
$\|\nabla\mathcal{L}(k)\| \ge \alpha(\mathcal{L}(k)-\underline{\mathcal{L}})$ for
general \emph{comparison functions} $\alpha$, organized into the hierarchy
$gl$-P\L{}I $\Rightarrow$ $sat$-P\L{}I $\Rightarrow$ $sgl$-P\L{}I $\Rightarrow$
$loc$-P\L{}I recalled in Section~\ref{sec:hierarchy}. The class singled out
there as ``saturated'',
\[
  sat\text{-P\L{}I}: \qquad \alpha(r) \ =\ \sqrt{\frac{ar}{b+r}}
  \qquad (a,b>0) ,
\]
interpolates exactly between the two regimes visible in Figure~\ref{fig:lqr}:
$\alpha(r) \approx \sqrt{(a/b)\,r}$ near the origin, giving local exponential
convergence, and $\alpha(r)\approx\sqrt a$ for large $r$, giving global linear
convergence. Continuous-time LQR is asserted to satisfy it in
\cite{Sontag2025L4DC}, and satisfies at least the class-$\mathcal{K}$ version
proved in \cite{CuiJiangSontag2024}; logistic regression satisfies the
class-$\mathcal{K}$ version too, called the $\mathcal{K}$-P\L{} condition in
\cite{CuiJiangSontag2026}, where it is used to establish noise-to-state stability
of the corresponding stochastic gradient dynamics. Section~\ref{sec:applications}
sorts out which part of each estimate does which job.

Now, \cite{BCR2026} assumes the global inequality \eqref{eq:PL} throughout.
Neither of the two examples above satisfies it. It is therefore natural to ask
whether the structure theory of \cite{BCR2026}, which is about the shape of the
function and of its minimizer set, not about rates, is really tied to that
hypothesis. This note argues that it is not: everything survives under
$sgl$-P\L{}I, hence under $sat$-P\L{}I, hence in both examples above. Section
\ref{sec:applications} works out what the conclusions say there.

\begin{figure}[htbp]
\centering
\IfFileExists{lqr-scalar.pdf}{\includegraphics[width=\textwidth]{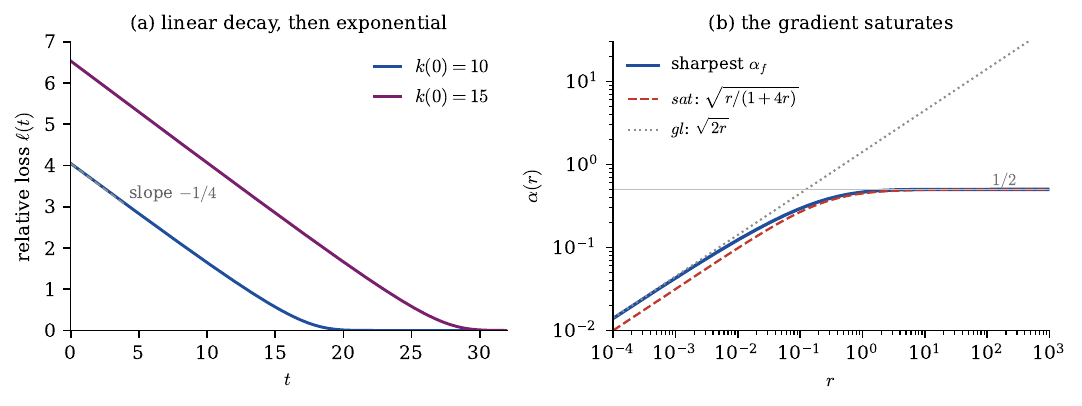}}{%
\fbox{\parbox{0.9\textwidth}{\centering\vspace{2em}\itshape
[Figure \texttt{lqr-scalar.pdf} not found; see the companion script.]
\vspace{2em}}}}
\caption{The scalar continuous-time LQR loss \eqref{eq:lqrscalar}.
\emph{(a)} Relative loss along the gradient flow from $k(0)=10$ and $k(0)=15$:
an almost linear decrease at rate $\approx\tfrac14$, then a soft switch to
exponential decay. \emph{(b)} The sharpest comparison function
$\alpha_f(r) = \inf\{|\nabla\mathcal{L}| : \mathcal{L}-\underline{\mathcal{L}}=r\}$,
which behaves like $\sqrt{2r}$ at the origin but saturates at $\tfrac12$; the
$sat$-P\L{}I bound $\sqrt{r/(1+4r)}$ (that is, $a=b=\tfrac14$) is valid and
asymptotically tight, whereas no $gl$-P\L{}I bound $\sqrt{\lambda r}$ can hold.}
\label{fig:lqr}
\end{figure}

\subsection{Setting, and a change of notation}
\label{sec:setting}

From here until Section~\ref{sec:applications} we adopt the notation of
\cite{BCR2026}, so that our statements can be compared with theirs line by line.
The dictionary is
\[
  \mathcal{L} \rightsquigarrow f , \qquad
  \underline{\mathcal{L}} \rightsquigarrow f^* , \qquad
  k \rightsquigarrow x , \qquad
  D \rightsquigarrow \M , \qquad
  \mathcal{T}_{\mathcal{L}} \rightsquigarrow S .
\]
Section~\ref{sec:applications} switches back, because there the letter $x$ is
needed for the state of a control system and $k$ for the feedback matrix being
optimized.

Let $\M$ be a smooth, connected, complete Riemannian manifold without boundary,
of dimension $n$, and let $f\colon\M\to\R$ be smooth ($C^\infty$) and bounded
below, with $f^* := \inf_{\M} f$. Write
\[
  h(x) := f(x) - f^* \ \ge 0 .
\]
Recall that $f$ is \emph{globally $\mu$-P\L{}} if
\begin{equation}
  \|\nabla f(x)\|^2 \ \ge\ 2\mu\, h(x) \qquad \text{for all } x \in \M .
  \tag{\PL}\label{eq:PL}
\end{equation}
The paper \cite{BCR2026} extracts a remarkable amount of global structure from
this pointwise, first-order inequality. Its main result is that if $\M$ is
contractible then $f = f^* + \|\phi\|^2$ for a smooth submersion
$\phi\colon\M\to\R^k$, with $k$ the codimension of the minimizer set $S$; and
its corollaries pin down exactly which manifolds can arise as $S$, when $f$ can
be straightened into a convex quadratic, and so on.

Our purpose here is to ask what \eqref{eq:PL} is actually being used for. It
seems to us that it does three separable jobs.

\begin{enumerate}
\item[(J1)] \emph{Near $S$}, it forces the \L{}ojasiewicz exponent to be
$\tfrac12$. This is what makes $S$ a manifold at all, and what makes the Hessian
of $f$ uniformly positive definite in the normal directions. It is the reason
Morse--Bott theory becomes available.
\item[(J2)] \emph{Away from $S$}, it forces the gradient to be bounded away from
zero on each level set. This is a uniform strengthening of invexity, and it is
what makes negative gradient flow trajectories reach $S$ rather than wander off.
\item[(J3)] It supplies a specific \emph{quantitative} global bound,
$\|\nabla f\| \ge \sqrt{2\mu h}$, tying the size of the gradient at height $h$
to $h$ itself for all $h$, however large.
\end{enumerate}

The thesis of this note is that (J1) and (J2) carry all the structural
conclusions of \cite{BCR2026}, while (J3) carries only the quantitative ones.
Accordingly we work with the following condition, in which $\alpha$ is a
comparison function in the sense familiar from stability theory.

\begin{defn}[Comparison function]\label{def:alpha}
A function $\alpha\colon[0,\infty)\to[0,\infty)$ is \emph{positive definite} if
it is continuous, $\alpha(0)=0$, and $\alpha(s)>0$ for all $s>0$.
\end{defn}

\begin{defn}[Gradient domination]\label{def:GD}
Let $\alpha$ be positive definite. We say $f$ is \emph{$\alpha$-gradient
dominated} if
\begin{equation}
  \|\nabla f(x)\| \ \ge\ \alpha\bigl(h(x)\bigr) \qquad \text{for all } x \in \M .
  \tag{$\mathrm{GD}_\alpha$}\label{eq:GD}
\end{equation}
\end{defn}

\begin{defn}[Condition (A)]\label{def:A}
A positive definite $\alpha$ satisfies \emph{condition {\rm(A)}} if there exist
$\mu>0$ and $h_0>0$ such that
\begin{equation}
  \alpha(s)^2 \ \ge\ 2\mu s \qquad \text{for all } s \in [0,h_0] .
  \tag{A}\label{eq:A}
\end{equation}
Equivalently, $\liminf_{s\downarrow 0} \alpha(s)^2/s > 0$.
\end{defn}

Global \eqref{eq:PL} is the special case $\alpha(s)=\sqrt{2\mu s}$. We call
$\alpha$ a \emph{witness} for $f$; it is never unique, and we shall be free to
shrink it.

A word on notation, since the reader of \cite{BCR2026} will already have
$\varphi$ and $\psi$ spoken for by the diffeomorphisms produced there. We write
$\alpha$ for the comparison function throughout; readers coming from control
theory will recognize the notation, and the letter is otherwise unused in
\cite{BCR2026}. We also freely write $\bcr$ for that paper when citing its
numbered results, so that ``$\bcr$ Theorem 1.2'' means Theorem 1.2 of
\cite{BCR2026}.

\subsection{Summary of what we claim}

Under \eqref{eq:GD} with $\alpha$ positive definite and satisfying \eqref{eq:A},
every structural result of \cite{BCR2026} holds \emph{verbatim}: the same
statement, with the same proof modulo three explicit substitutions which we carry
out in Section~\ref{sec:preliminaries}. In particular $f = f^* + \|\phi\|^2$ on
contractible $\M$, and all of the corollaries about $S$ follow. Table~\ref{tab:status}
summarizes which results survive.

There are in fact two routes to those conclusions. The short one shrinks the
witness so as to reparametrize $f$ into a function satisfying \eqref{eq:PL}
itself, and
then quotes \cite{BCR2026} verbatim (Proposition~\ref{prop:shortcut}); the direct
one replaces three steps of their proof, and additionally yields estimates in terms
of the original witness. We give both, and say in Section~\ref{sec:shortcut} what
each is good for.

What fails is exactly the quantitative layer built on top of global quadratic
growth: the inequality $h \ge \tfrac{\mu}{2}\dist(\cdot,S)^2$ now holds only near
$S$. Section~\ref{sec:fails}
makes this precise, and Section~\ref{sec:examples} shows that the hypothesis is
sharp: neither the behavior required at the origin nor the level-wise uniformity
can be dropped, and in each case the failure is visible in an elementary example
on $\R$ or $\R^2$.

Perhaps the most satisfying way to say what is going on is this. Set
\begin{equation}\label{eq:Psi}
  \Psi(h) \ :=\ \int_0^h \frac{ds}{\alpha(s)} .
\end{equation}
Then \eqref{eq:GD} is a global desingularized gradient inequality of
Kurdyka--\L{}ojasiewicz type \cite{Loj1982, Kurdyka1998}, with desingularizing
function $\Psi$, since $\Psi'(h)\|\nabla f\| \ge 1$ off $S$; and \eqref{eq:A}
gives $\Psi(h) = O(\sqrt h)$ at the origin. So the theorem being generalized
reads: \emph{a smooth function on a contractible complete manifold satisfying a
global inequality of this type, whose desingularizer is $O(\sqrt{h})$ at the
origin, is a nonlinear sum of squares.} The global P\L{} condition is the case
$\Psi(h)=\sqrt{2h/\mu}$ exactly. We write ``of \KL{} type'' rather than ``a \KL{}
inequality'' because the standard formulations often require the desingularizer to
be concave, which here would amount to requiring $\alpha$ to be nondecreasing, and
we do not.

\subsection{Relation to the comparison-function hierarchy}
\label{sec:hierarchy}

It is worth locating our hypothesis inside the hierarchy of
\cite{Sontag2025L4DC} before studying it, because it
turns out to coincide exactly with one of the classes already named there.

Recall the standard comparison classes for a continuous
$\alpha\colon\R_{\ge0}\to\R_{\ge0}$ with $\alpha(0)=0$: such an $\alpha$ is
\emph{positive definite} ($\mathcal{PD}$) if $\alpha(r)>0$ for $r>0$; of
\emph{class $\mathcal{K}$} if moreover strictly increasing; and of \emph{class
$\mathcal{K}_\infty$} if moreover unbounded.\footnote{We follow the definition in
\cite{CuiJiangSontag2024}, where a class $\mathcal{K}$ function is required to be
continuous, strictly increasing and vanishing at the origin, as is standard;
\cite{Sontag2025L4DC} states the requirement as ``nondecreasing''. Nothing below
depends on which convention is adopted, since all we ever use is the inclusion
$\mathcal{K}\subset\mathcal{PD}$.} Recall also the four gradient
dominance conditions on $f$ (we state them directly as conditions on $f$, in the
form given in \cite[\S2]{Sontag2025L4DC}):
\begin{align*}
  gl\text{-P\L{}I}  &: \ \exists\, c>0 \ \text{ s.t. } \ \|\nabla f\|^2 \ge c\,h \ \text{ on } \M ,\\
  sat\text{-P\L{}I} &: \ \exists\, a,b>0 \ \text{ s.t. } \ \|\nabla f\|^2 \ge \frac{a\,h}{b+h} \ \text{ on } \M ,\\
  sgl\text{-P\L{}I} &: \ \forall\,\rho>0 \ \ \exists\, c_\rho>0 \ \text{ s.t. } \ \|\nabla f\|^2 \ge c_\rho\, h \ \text{ on } \{h\le\rho\} ,\\
  loc\text{-P\L{}I} &: \ \exists\,\rho>0 \ \ \exists\, c_\rho>0 \ \text{ s.t. } \ \|\nabla f\|^2 \ge c_\rho\, h \ \text{ on } \{h\le\rho\} ,
\end{align*}
with the implications
\[
  gl\text{-P\L{}I} \ \Longrightarrow\ sat\text{-P\L{}I} \ \Longrightarrow\
  sgl\text{-P\L{}I} \ \Longrightarrow\ loc\text{-P\L{}I} ,
\]
and, on the level of the witness $\alpha$,
$gl \Rightarrow \mathcal{K}_\infty$, $sat \Rightarrow \mathcal{K}$,
$sgl \Rightarrow \mathcal{PD}$.

\begin{prop}[Our hypothesis is $sgl$-P\L{}I]\label{prop:sgl}
Let $f\colon\M\to\R$ be $C^1$ and bounded below. The following are equivalent:
\begin{enumerate}
\item[(i)] $f$ satisfies \eqref{eq:GD} for some positive definite $\alpha$
satisfying \eqref{eq:A};
\item[(ii)] $f$ satisfies $sgl$-P\L{}I.
\end{enumerate}
\end{prop}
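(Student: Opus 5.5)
The argument splits into the two implications.

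\emph{(i)$\Rightarrow$(ii).} Fix $\rho>0$. The goal is a constant $c_\rho>0$ with $\alpha(s)^2\ge c_\rho s$ for all $s\in[0,\rho]$; then \eqref{eq:GD} gives $\|\nabla f\|^2\ge\alpha(h)^2\ge c_\rho h$ on $\{h\le\rho\}$. On $[0,h_0]$, condition \eqref{eq:A} already gives $\alpha(s)^2\ge 2\mu s$. If $\rho>h_0$, then on the compact interval $[h_0,\rho]$ the continuous function $\alpha$ is strictly positive, so $m:=\min_{[h_0,\rho]}\alpha>0$, and there $\alpha(s)^2\ge m^2\ge (m^2/\rho)\,s$. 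Taking $c_\rho:=\min\{2\mu,\,m^2/\rho\}$ finishes this direction.

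\emph{(ii)$\Rightarrow$(i).} Here a positive definite witness has to be built from the family of constants $c_\rho$. The constant $c_\rho$ may be assumed nonincreasing in $\rho$, since a bound valid on $\{h\le\rho'\}$ with $\rho'\ge\rho$ is also valid on $\{h\le\rho\}$. To make this concrete, for each integer $j\ge1$ set
\[
d_j:=\min\{c_1,\dots,c_j\}>0 .
\]
Then $\|\nabla f\|^2\ge d_j h$ on $\{h\le j\}$, and the sequence $d_j$ is nonincreasing.

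The step function $s\mapsto d_{\lceil s\rceil}s$ is not continuous, so it will be replaced by a continuous minorant. A natural choice is the piecewise-linear function $\beta$ on $[0,\infty)$ determined by
\[
\beta(s)=d_1 s \ \text{ on } [0,1], \qquad \beta \text{ linear on } [j,j+1] \text{ from } d_{j+1}\,j \text{ at } s=j \text{ to } d_{j+1}(j+1) \text{ at } s=j+1 .
\]
This runs into a mismatch at the nodes: the left-hand piece ends at $d_j\,j$ while the right-hand piece starts at $d_{j+1}\,j$. The cleaner fix is to define
\[
\beta(s):=d_{j+1}\,s \quad\text{for } s\in[j,j+1],
\]
and then interpolate near each node $s=j$ between the values $d_j j$ and $d_{j+1} j$, using $d_{j+1}\le d_j$.

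A simpler route is to set $g(s):=\min\{d_{\lceil s\rceil}s,\ d_{\lceil s\rceil+1}s\}$, or more directly to take
\[
\beta(s):=d_{\lceil s\rceil+1}\,s \quad\text{on each interval } (j,j+1],
\]
and then define the witness as a continuous function squeezed below it:
\[
\alpha(s)^2:=\inf_{t\ge s}\,\beta(t)\cdot\frac{s}{t} .
\]
Since $\beta(t)/t$ is nonincreasing and positive, this equals $s\cdot\lim_{t\to\infty} d_t$, which may vanish, so it is not the right construction. The construction that works is this: let $\gamma\colon[0,\infty)\to(0,\infty)$ be a continuous nonincreasing function with $\gamma(s)\le d_{j+1}$ for $s\in[j,j+1]$. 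For example, take $\gamma$ linear on $[j,j+1]$ from $d_{j+1}$ to $d_{j+2}$. Then set
\[
\alpha(s):=\sqrt{\gamma(s)\,s}.
\]

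This $\alpha$ has the required properties. It is continuous, vanishes at $0$, and is positive for $s>0$, so it is positive definite. Condition \eqref{eq:A} holds with $2\mu=\gamma(1)$ and $h_0=1$, because $\gamma$ is nonincreasing, so $\gamma(s)\ge\gamma(1)$ on $[0,1]$. For \eqref{eq:GD}, take $x$ with $h(x)\in[j,j+1]$. Then $x\in\{h\le j+1\}$, so $\|\nabla f(x)\|^2\ge d_{j+1}h(x)\ge\gamma(h(x))\,h(x)=\alpha(h(x))^2$. On $S$ the inequality is trivial since $\alpha(0)=0$.

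The only mildly delicate step is this continuous-minorant construction. The remaining checks are routine, and the $C^1$ hypothesis is used only to make sense of $\nabla f$.
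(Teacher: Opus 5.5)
Your proof is correct. The direction (i)$\Rightarrow$(ii) is the same argument as in the paper.

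For (ii)$\Rightarrow$(i) you take a genuinely different and more elementary route. The paper works with the sharpest admissible constant $c(\rho)$, defined as a supremum. It smooths it by the average $\gamma(s)=\frac1s\int_s^{2s}c(u)\,du$ and then caps it at $\gamma(h_0)$. You instead fix one admissible constant at each integer level, pass to the running minima $d_j$, and interpolate linearly between consecutive values. This gives a continuous nonincreasing $\gamma$ with $\gamma\le d_{j+1}$ on $[j,j+1]$.

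Your version buys three simplifications:
\begin{itemize}
\item You never take a supremum, so the constant case $h\equiv0$ needs no separate treatment. The paper must exclude it because $c(\rho)=+\infty$ there.
\item Your $\gamma$ is bounded by $d_1$, so no cap is needed to force $\alpha(s)\to0$ as $s\downarrow0$. In the paper's construction $c$ may blow up at the origin, and the cap is what secures continuity of $\alpha$ at $0$.
\item Continuity and monotonicity of your $\gamma$ are visible by inspection, rather than coming from local integrability of a monotone function.
\end{itemize}

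What the paper's construction buys is canonicity. It is choice-free and scale-invariant, with $c(2s)\le\gamma(s)\le c(s)$ before capping. Yours depends on the chosen $c_j$ and on the unit grid; for instance, you obtain \eqref{eq:A} only with $2\mu=d_2$ and $h_0=1$. For the bare equivalence this difference is immaterial.

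On presentation: delete the abandoned attempts, namely the mismatched piecewise-linear $\beta$, the $\min\{\cdot,\cdot\}$ variant, and the $\inf_{t\ge s}$ construction you correctly reject. Keep only the final $\gamma$, since as written the reader has to work out which construction you are actually claiming.
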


\begin{proof}
(i)$\Rightarrow$(ii). Fix $\rho>0$; we may assume $\rho>h_0$. On $\{h\le h_0\}$,
\eqref{eq:A} gives $\|\nabla f\|^2\ge\alpha(h)^2\ge 2\mu h$. On
$\{h_0\le h\le\rho\}$, let $m:=\min_{[h_0,\rho]}\alpha>0$, which is positive
because $\alpha$ is continuous and strictly positive on the compact interval
$[h_0,\rho]$; then $\|\nabla f\|^2\ge m^2\ge (m^2/\rho)h$. So
$c_\rho:=\min\{2\mu, m^2/\rho\}$ works.

(ii)$\Rightarrow$(i). Here we must exhibit a single $\alpha$ that is positive
definite, satisfies \eqref{eq:GD}, and satisfies \eqref{eq:A}.

Dispose first of the constant case: if $h\equiv0$ then \eqref{eq:GD} holds for
$\alpha(s)=\sqrt s$, which is positive definite and satisfies \eqref{eq:A}. (This
case must be separated out, because for constant $f$ every positive constant is
admissible in (ii) and the supremum below is infinite.) So assume $f$ is not
constant.

Fix any $h_0>0$, and let $c(\rho)$ denote the supremum of the constants
admissible in (ii) for a given $\rho$; a constant admissible for $\rho$ is
admissible for every smaller value, so $c$ is nonincreasing, and it is positive
and finite by hypothesis. Put
\[
  \gamma(s) \ :=\ \int_1^2 c(sv)\,dv \ =\ \frac1s\int_s^{2s} c(u)\,du , \qquad
  \tilde\gamma(s) \ :=\ \min\{\gamma(s), \gamma(h_0)\} .
\]
Then $\gamma$ is nonincreasing (because $c$ is), continuous (the second
expression exhibits it as an average of a monotone, hence locally integrable,
function over an interval varying continuously with $s$), and satisfies
$\gamma\le c$ pointwise. Hence $\tilde\gamma$ is continuous, positive, bounded by
$\gamma(h_0)$, and $\le c$. Setting
$\alpha(s):=\sqrt{\tilde\gamma(s)\,s}$ gives a continuous function with
$\alpha(s)\le\sqrt{\gamma(h_0)s}\to0$ as $s\downarrow0$, so $\alpha(0)=0$; it is
strictly positive for $s>0$, hence positive definite; it satisfies \eqref{eq:GD}
since $\|\nabla f\|^2\ge c(h)h\ge\tilde\gamma(h)h=\alpha(h)^2$; and for
$s\le h_0$ monotonicity of $\gamma$ gives $\tilde\gamma(s)=\gamma(h_0)$, so
\eqref{eq:A} holds with $2\mu=\gamma(h_0)$.
\end{proof}

So the main theorem of this note can be stated without any reference to a witness
$\alpha$ at all: \emph{the structure theory of \cite{BCR2026} holds under
$sgl$-P\L{}I}. Because $gl \Rightarrow sat \Rightarrow sgl$, it holds a fortiori
under the saturated condition, which is the one available for continuous-time
LQR.

Two warnings about how this fits into the diagram, both of which we think are
worth making explicit.

\begin{rem}[The two hierarchies are transverse]\label{rem:transverse}
The chain $gl\Rightarrow sat\Rightarrow sgl\Rightarrow loc$ and the chain
$\mathcal{K}_\infty\Rightarrow\mathcal{K}\Rightarrow\mathcal{PD}$ are not two
gradings of the same axis. The P\L{}I conditions constrain $\alpha$ near the
origin (and, for $gl$ and $sat$, uniformly); the class conditions constrain
$\alpha$ at infinity. Our hypothesis lives entirely on the first axis, and it is
incomparable with the second:
\begin{itemize}
\item $\mathcal{K}_\infty \not\Rightarrow$ our hypothesis. Take
$\alpha(r)=r$, which is of class $\mathcal{K}_\infty$ but violates \eqref{eq:A}.
It is attained: $f(x)=e^{-x}$ on $\R$ satisfies \eqref{eq:GD} with equality for
this $\alpha$ and has \emph{no minimizer at all} (Example~\ref{ex:exp}). So even
the strongest class condition, the one that yields full ISS in
\cite{Sontag2025L4DC}, implies nothing about the structure of $f$.
\item Our hypothesis $\not\Rightarrow\mathcal{K}$. The function
$f(x)=\log(1+\|x\|^2)$ on $\R^n$ is $sgl$-P\L{}I but admits no nondecreasing
positive witness, since its sharpest $\alpha$ tends to $0$ at infinity
(Example~\ref{ex:log}).
\item The gap is populated by conditions of independent interest. The class
studied by Fatkhullin and Polyak \cite{FatkhullinPolyak2021},
\[
  \alpha(r) \ =\ \sqrt{\frac{ar}{(b+r)^2}} ,
\]
is $\mathcal{PD}$ but not $\mathcal{K}$, since it decays at infinity; yet
$\alpha(r)^2 = ar/(b+r)^2 \approx (a/b^2)r$ near the origin, so it satisfies
\eqref{eq:A} and is $sgl$-P\L{}I. A function in that class therefore enjoys the
\emph{entire} structure theory below while sitting on the lowest rung of the
robustness ladder, where only integral ISS is available.
\end{itemize}
In other words, ISS-type robustness and the nonlinear-least-squares normal form
are logically independent properties, governed by opposite ends of the same
inequality. The same goes for the stochastic ladder of \cite{CuiJiangSontag2026},
which is indexed by the same three classes; see Section~\ref{sec:applications}.
This is the same phenomenon studied from the other side in
Section~\ref{sec:fails}: the structure theory is insensitive to the global size
of the gradient, which is exactly what the class conditions measure.
\end{rem}

\begin{rem}[Two readings of ``semiglobal'']\label{rem:twosgl}
The condition $sgl$-P\L{}I is stated in \cite[\S2]{Sontag2025L4DC} in two forms:
a \emph{sublevel} form, quantified over $\rho>0$ as above, and a \emph{compact}
form, quantified over compact subsets $C\subset D$ of the domain. The two agree
when the sublevel sets $\{h\le\rho\}$ are compact, which is the case in the LQR
application, where the loss blows up at $\partial D$, but in general the
sublevel form is strictly stronger, and it is the sublevel form that we use.
Example~\ref{ex:tanh} shows the difference is not academic: there is a smooth
function on $\R$, with a single nondegenerate minimizer, satisfying the compact
form and $loc$-P\L{}I, for which the conclusions of \cite{BCR2026} fail.
\end{rem}

Finally, all of the inclusions below are strict, with the separating examples
collected in Section~\ref{sec:examples}:
\[
  \left\{\begin{array}{c}\text{strongly}\\\text{convex}\end{array}\right\}
  \subsetneq
  \Bigl\{gl\text{-P\L{}I}\Bigr\}
  \subsetneq
  \Bigl\{sat\text{-P\L{}I}\Bigr\}
  \subsetneq
  \Bigl\{sgl\text{-P\L{}I}\Bigr\}
  \subsetneq
  \Bigl\{loc\text{-P\L{}I}\Bigr\} ,
\]
and our theorems hold under $sgl$-P\L{}I, hence under $sat$-P\L{}I and
$gl$-P\L{}I as well, but not in general under $loc$-P\L{}I.
The condition
\[
  \|\nabla f(x)\|^2 \ \ge\ \frac{2\mu\, h(x)}{1+h(x)}
\]
which motivated this note is $sat$-P\L{}I with $a=2\mu$, $b=1$; see
Example~\ref{ex:bounded}.

\section{The machinery of \cite{BCR2026} that never mentions P\L{}}
\label{sec:free}

Before doing any work, it pays to notice how much of \cite{BCR2026} is already
free of the P\L{} hypothesis. The entire single-minimizer analysis is stated in
terms of coercivity:

\begin{thm}[$\bcr$ Theorem 3.3]\label{thm:BCR33}
Let $f\colon\M\to\R$ be smooth and coercive (compact sublevel sets). Assume $f$
has a unique critical point $x^*$ and that the Hessian of $f$ at $x^*$ is
positive definite. Then there exists a diffeomorphism $\phi\colon\M\to\R^n$
such that $f(x) = f(x^*) + \|\phi(x)\|^2$ for all $x\in\M$.
\end{thm}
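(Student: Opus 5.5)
The plan is to follow the Morse-theoretic strategy for Theorem~\ref{thm:BCR33}: first show that $\M$ is diffeomorphic to $\R^n$ in a way that carries level sets of $f$ to spheres, and then correct the radial coordinate so that $f$ becomes exactly $f(x^*)+\|\phi\|^2$. Without loss of generality $f(x^*)=0$, so $h=f\ge0$. Coercivity together with uniqueness of the critical point gives $h>0$ off $x^*$, and it also makes $x^*$ the global minimizer.

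The first step is local. By the Morse lemma there are coordinates $y$ on a neighbourhood $U$ of $x^*$ in which $f=\|y\|^2$. Choose $\epsilon>0$ with $\{f\le\epsilon\}\subset U$. In these coordinates $\phi_0(x):=y(x)$ already satisfies $f=\|\phi_0\|^2$ on $\{f\le\epsilon\}$, and the boundary $\{f=\epsilon\}$ is the round sphere of radius $\sqrt\epsilon$.

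The second step is global and uses the gradient flow. Consider the vector field $X:=\nabla f/\|\nabla f\|^2$ on $\M\setminus\{x^*\}$. It is smooth because there is no other critical point, and it satisfies $X f=1$. Let $\theta_t$ denote its flow. Each sublevel set $\{f\le c\}$ is compact, and $X$ is bounded on the compact slab $\{\epsilon\le f\le c\}$, so every point of $\{f=\epsilon\}$ flows for all times $t\ge0$. Moreover every point $x$ with $f(x)>\epsilon$ is reached: run the flow backward from $x$ for time $f(x)-\epsilon$. The backward trajectory stays in the compact set $\{\epsilon\le f\le f(x)\}$, so it cannot escape. This gives a diffeomorphism
\[
\Theta\colon \{f=\epsilon\}\times[0,\infty)\to\{f\ge\epsilon\},\qquad \Theta(p,t)=\theta_t(p),\qquad f(\Theta(p,t))=\epsilon+t .
\]
Now define $\phi$ on $\{f\ge\epsilon\}$ by $\phi(\theta_t(p)):=\sqrt{(\epsilon+t)/\epsilon}\;\phi_0(p)$, so that $\|\phi\|^2=\epsilon+t=f$. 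On $\{f\le\epsilon\}$ set $\phi=\phi_0$. In both pieces $\phi$ is a bijection onto the corresponding region of $\R^n$: a ball on one side and its closed complement on the other, glued along the common sphere.

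The main obstacle is smoothness across the gluing hypersurface $\{f=\epsilon\}$. The two pieces are individually smooth and agree on the sphere, but their normal derivatives need not match. I would fix this by running the same flow construction on the model side. On $\R^n\setminus\{0\}$, the map $\phi_0$ pushes $X$ forward to the gradient-normalized field of $\|y\|^2$, namely $y/(2\|y\|^2)$, at least on $U\setminus\{x^*\}$. Its flow is exactly $y\mapsto\sqrt{(\epsilon+t)/\epsilon}\,y$. Hence the outer formula for $\phi$ coincides with $\phi_0$ on the overlap $U\cap\{f\ge\epsilon\}$. Concretely, since $\phi_0$ intertwines the two flows there, I would define $\phi$ on $\{f\ge\epsilon'\}$ for some $\epsilon'<\epsilon$ with $\{f\le\epsilon\}\subset U$ still holding. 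Both definitions then agree on the open collar $\{\epsilon'<f<\epsilon\}$, so $\phi$ is smooth, with no matching of derivatives needed. Finally, $\phi$ is a smooth bijection that is a local diffeomorphism everywhere: near $x^*$ via $\phi_0$, and away from it via $\Theta$ composed with scaling. So $\phi$ is a diffeomorphism $\M\to\R^n$ with $f=f(x^*)+\|\phi\|^2$.
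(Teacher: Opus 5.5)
\textbf{The gap is in your fix for the gluing.} You claim that $\phi_0$ pushes $X=\nabla f/\|\nabla f\|^2$ forward to the radial field $y/(2\|y\|^2)$ on $U\setminus\{x^*\}$. This is false in general, because the Morse lemma straightens $f$ but not the Riemannian metric. Let $G(y)$ be the matrix of the metric in the Morse chart. Then $\nabla f=2G^{-1}y$, $\|\nabla f\|^2=4\,y^\top G^{-1}y$, and
\[
  (\phi_0)_*X \;=\; \frac{G(y)^{-1}y}{2\,y^\top G(y)^{-1}y} .
\]
This field does satisfy $X(\|y\|^2)=1$. But it is radial only where $G(y)^{-1}y$ is parallel to $y$. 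Letting $y\to0$ along every direction, this would force $G(0)$, equivalently the Hessian operator $\nabla^2f(x^*)=2G(0)^{-1}$, to be a multiple of $\Id$.

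A concrete failure: take $\M=\R^2$ with the Euclidean metric, $f=x_1^2+2x_2^2$ and $y=(x_1,\sqrt2\,x_2)$. Then $(\phi_0)_*X=(y_1,2y_2)/(2y_1^2+4y_2^2)$, whose integral curves are the parabolas $y_2=Cy_1^2$. So for $p$ off the axes, $\phi_0(\theta_t(p))\neq\sqrt{(\epsilon'+t)/\epsilon'}\,\phi_0(p)$. Your two definitions of $\phi$ therefore disagree on the collar $\{\epsilon'<f<\epsilon\}$, and $\phi$ is not well defined there. If you glue only along $\{f=\epsilon\}$, you get a continuous bijection whose derivatives in the direction of $X$ do not match, which is exactly the obstacle you identified.

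\textbf{The repair is short.} The flow part of your argument never uses the metric. It needs only a smooth field $X$ on $\M\setminus\{x^*\}$ with $Xf=1$, plus compactness of the slabs $\{a\le f\le b\}$ for $a>0$. Take a smooth $\chi$ equal to $1$ on $\{f\le\epsilon\}$ and compactly supported in $U$, and put
\[
  X \;:=\; \chi\,(\phi_0^{-1})_*\Bigl(\frac{y}{2\|y\|^2}\Bigr) + (1-\chi)\,\frac{\nabla f}{\|\nabla f\|^2} .
\]
This still satisfies $Xf=1$, since that condition is affine in $X$. Equivalently, you may replace the metric by one that is Euclidean in the Morse chart near $x^*$; this is harmless because the conclusion refers to no metric. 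It is the same reason gradient-like vector fields are standardly required to be standard in Morse charts.

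With this $X$, uniqueness of integral curves gives $\phi_0(\theta_t(p))=\sqrt{(\epsilon'+t)/\epsilon'}\,\phi_0(p)$ for $p\in\{f=\epsilon'\}$, as long as $f(\theta_t(p))\le\epsilon$. Your collar gluing, the bijectivity count and the local-diffeomorphism argument then go through as written.

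Once repaired, yours is the classical Morse-theoretic proof. The paper gives no proof of this theorem; it imports it from $\bcr$, whose argument it describes as a globalized Morse lemma, a Palais--Cerf extension, a rescaled gradient flow and a transporter. Your route is a self-contained alternative to that, and it uses only coercivity, never completeness of $\M$.
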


Theorem~\ref{thm:BCR33} and everything feeding it: $\bcr$ Lemmas 3.2, 3.4,
3.5, 3.6, and Appendices A--D, that is, the globalized Morse lemma, the
Palais--Cerf extension, the rescaled gradient flow, and the transporter,
never invoke \eqref{eq:PL}. Likewise the topological input ($\bcr$
Theorem 1.3, Appendices E, F, G) and the construction of P\L{} functions
($\bcr$ Theorem 1.5) are untouched by anything we do here.

Consequently, the direct route to generalizing \cite{BCR2026} amounts to
re-deriving, from \eqref{eq:GD} and \eqref{eq:A}, exactly three key facts (the
shorter route of Section~\ref{sec:shortcut} needs none of them, but does not
preserve the quantitative estimates in terms of the original witness):
\begin{enumerate}
\item[(I1)] a bound on the \emph{length} of negative gradient trajectories in
terms of the initial function value alone ($\bcr$ Lemma 2.1);
\item[(I2)] the Morse--Bott property at $S$ ($\bcr$ Lemma 2.2);
\item[(I3)] coercivity and nondegeneracy of $f$ restricted to a fiber of the
end-point map ($\bcr$ Proposition 4.4).
\end{enumerate}
That is the whole content of Section~\ref{sec:preliminaries}. It is a little
striking how narrow the interface is.

\section{Preliminaries under $(\mathrm{GD}_\alpha)$ and {\rm(A)}}
\label{sec:preliminaries}

Throughout this section, $\alpha$ is positive definite and satisfies
\eqref{eq:A} with constants $\mu, h_0$, and $\Psi$ is as in \eqref{eq:Psi}.

\subsection{The desingularizer}

\begin{lem}[Properties of $\Psi$]\label{lem:Psi}
Let $\alpha$ be positive definite and satisfy \eqref{eq:A}. Then:
\begin{enumerate}
\item[(i)] $\Psi(h) < \infty$ for every $h \ge 0$.
\item[(ii)] $\Psi$ is a continuous, strictly increasing bijection from $[0,\infty)$
onto $[0,L)$, where $L := \lim_{h\to\infty}\Psi(h) \in (0,\infty]$. Its
inverse $\Psi^{-1}$, mapping $[0,L)$ to $[0,\infty)$, is continuous and
strictly increasing, and $\Psi(0)=\Psi^{-1}(0)=0$.
\item[(iii)] $\Psi(h) \le \sqrt{2h/\mu}$ for $0 \le h \le h_0$; equivalently
$\Psi^{-1}(d) \ge \tfrac{\mu}{2}d^2$ for $0 \le d \le \Psi(h_0)$.
\item[(iv)] Any $f$ satisfying \eqref{eq:GD} obeys the \emph{local} P\L{} inequality
\[
  h(x) \ \le\ \frac{1}{2\mu}\,\|\nabla f(x)\|^2
  \qquad\text{on}\qquad \Omega_0 := \{x\in\M : h(x) \le h_0\} ,
\]
and $\Omega_0$ contains an open neighborhood of $S = \{h = 0\}$.
\end{enumerate}
\end{lem}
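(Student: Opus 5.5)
The plan is to work directly from the definition \eqref{eq:Psi} and establish the four items in the order (iii), (i), (ii), (iv), since the estimate near the origin controls everything else.

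\emph{Step 1 (the estimate at the origin, giving (iii)).} For $0<s\le h_0$, condition \eqref{eq:A} gives $\alpha(s)\ge\sqrt{2\mu s}$, so $1/\alpha(s)\le (2\mu s)^{-1/2}$. Integrating from $0$ to $h\le h_0$ yields
\[
  \Psi(h)\ \le\ \int_0^h \frac{ds}{\sqrt{2\mu s}}\ =\ \sqrt{\frac{2h}{\mu}} ,
\]
which is the first half of (iii). The inequality $\Psi^{-1}(d)\ge\tfrac{\mu}{2}d^2$ for $d\le\Psi(h_0)$ then follows once (ii) is in hand: put $h=\Psi^{-1}(d)\le h_0$ (this uses monotonicity) and square $d=\Psi(h)\le\sqrt{2h/\mu}$.

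\emph{Step 2 (finiteness, (i)).} For $h\le h_0$, finiteness comes from Step 1. For $h>h_0$, I would split $\Psi(h)=\Psi(h_0)+\int_{h_0}^h ds/\alpha(s)$. The integrand is continuous on the compact interval $[h_0,h]$, because $\alpha$ is continuous and strictly positive there, so the second term is finite. This is the only point where positive definiteness away from $0$ matters. The main subtlety I expect is the improper integral at $0$, where $1/\alpha$ blows up, and Step 1 is exactly what handles it.

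\emph{Step 3 (bijection, (ii)).} The integrand $1/\alpha$ is positive and locally integrable on $(0,\infty)$, and it is integrable near $0$. Hence $\Psi$ is continuous by absolute continuity of the integral, strictly increasing because the integrand is strictly positive, and $\Psi(0)=0$. The limit $L$ exists in $(0,\infty]$ by monotonicity. By the intermediate value theorem $\Psi$ maps $[0,\infty)$ onto $[0,L)$, and $L$ is not attained because $\Psi$ is strictly increasing. The inverse of a continuous strictly increasing map between intervals is continuous and strictly increasing, and $\Psi^{-1}(0)=0$.

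\emph{Step 4 (local P\L{}, (iv)).} On $\Omega_0$, \eqref{eq:GD} and \eqref{eq:A} give $\|\nabla f\|^2\ge\alpha(h)^2\ge2\mu h$, which is the stated inequality. Since $h$ is continuous, the set $\{h<h_0\}$ is open, it is contained in $\Omega_0$, and it contains $S=\{h=0\}$.
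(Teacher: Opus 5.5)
Your proposal is correct and follows essentially the same route as the paper: you integrate the bound $1/\alpha(s)\le(2\mu s)^{-1/2}$ from \eqref{eq:A} near the origin to get (iii) and integrability at $0$, use continuity and strict positivity of $\alpha$ on compact intervals away from $0$ for the rest of (i), read off (ii) from the positivity of the integrand, and combine \eqref{eq:GD} with \eqref{eq:A} for (iv). You are slightly more explicit than the paper on two points, namely deriving the inverse form of (iii) from monotonicity of $\Psi$ and exhibiting the open set $\{h<h_0\}\subseteq\Omega_0$ that contains $S$.
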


\begin{proof}
The only possible obstruction to (i) is the endpoint $s\to 0^+$, because $\alpha$
is continuous and strictly positive on $(0,\infty)$, hence bounded below by a
positive constant on each compact subinterval $[\varepsilon, h]$; so
$\int_\varepsilon^h ds/\alpha(s) < \infty$. Near $0$, \eqref{eq:A} gives
$1/\alpha(s) \le 1/\sqrt{2\mu s}$, which is integrable, and integrating this
bound on $[0,h]$ with $h\le h_0$ yields (iii). Item (ii) is immediate from
(i) and $1/\alpha > 0$ on $(0,\infty)$. Item (iv) is \eqref{eq:GD} combined
with \eqref{eq:A}.
\end{proof}

Observe that property (iv) is exactly the statement that our condition
implies a local P\L{} inequality, with the \emph{same} constant $\mu$, on a
sublevel-set neighborhood of $S$. Everything in (J1) will be read off from it.

\begin{rem}[The sharpest witness]\label{rem:sharpest}
Given $f$, put
\[
  \alpha_f(s) \ :=\ \inf\bigl\{ \|\nabla f(x)\| \ :\ x\in\M,\ h(x) = s \bigr\}
  \in [0,\infty] ,
\]
with the convention $\inf\emptyset = +\infty$. Then $f$ satisfies \eqref{eq:GD}
for \emph{some} positive definite $\alpha$ if and only if
\[
  \inf\bigl\{\|\nabla f(x)\| : a \le h(x) \le b\bigr\} > 0
  \qquad \text{for all } 0 < a \le b < \infty ,
\]
in which case a continuous positive definite minorant of $\alpha_f$ can be
constructed by a routine interpolation across the dyadic scales; and
\eqref{eq:A} then reads $\liminf_{s\downarrow 0}\alpha_f(s)^2/s > 0$. We mention
this only to make clear that the hypothesis is a property of $f$ and not of a
chosen $\alpha$; in practice one exhibits a convenient $\alpha$ and moves on.
\end{rem}

\begin{rem}[Uniform invexity]\label{rem:invex}
Recall that $f$ is \emph{invex} if $\nabla f(x)=0 \Rightarrow f(x)=f^*$. In the
language of Remark~\ref{rem:sharpest}, invexity says that no level set with
$s>0$ \emph{contains} a critical point, whereas positive definiteness of
$\alpha_f$ says the gradient is bounded away from zero on each such level set.
The gap between the two is real when level sets are noncompact: the infimum may
vanish without being attained, with the gradient dying only at infinity. Our
proofs use the uniform version, through the finiteness of $\Psi$, and this is not
an artifact: Example~\ref{ex:tanh} exhibits a smooth invex $f$ on $\R$ with a
single nondegenerate minimizer, satisfying $loc$-P\L{}I and with $\alpha_f>0$
throughout the range of $h$, for which the conclusion of Theorem~\ref{thm:one}
fails. There, $\alpha_f$ decays to zero as $h$ approaches $\sup f<\infty$, so no
positive definite witness exists and $\Psi$ is infinite.

That example leaves one narrower question open. Suppose $h(\M)=[0,\infty)$,
which asks both that $f$ attain its infimum and that it be unbounded above, and
suppose $\alpha_f(r)>0$ for every $r>0$ but $\inf_{[a,b]}\alpha_f = 0$ for some
band $0<a<b<\infty$ (possible, since $\alpha_f$ need not be lower
semicontinuous). Then no continuous positive definite witness exists, so the
desingularizer $\Psi$ of \eqref{eq:Psi} is unavailable and our argument breaks
down; but we know of no counterexample to the conclusions.

Let us also note that plain invexity is definitely not enough, as $\bcr$
footnote 4 already shows: $f(x,y)=(x^2y-x-1)^2+(x^2-1)^2$ is smooth and invex
with disconnected $S$.
\end{rem}

\subsection{(I1): trajectories, and where the minimizers are}

We open with a small lemma that will be used three times below. In each of those
places two negative gradient flows
differ by a positive factor depending only on the function value, and we want to
conclude that they share not merely their orbits but their end-point maps. The
first half is immediate and the second is not, since a positive factor can in
principle traverse only part of an orbit in infinite time. What rules that out is
that $h$ decreases along the flow, so the factor is evaluated on a compact
interval. We isolate the argument once. Note that it uses neither \eqref{eq:GD}
nor \eqref{eq:A}, so invoking it later costs no logical dependence on the rest of
this section. The Lipschitz hypothesis on $\sigma$ is only what makes the flow of
the rescaled field unique, and it is met in all three applications, where $\sigma$
is $\theta'$, $e^{-2\chi}$ and $\theta'$ again, each of them smooth.

\begin{lem}[Positive reparametrizations of the gradient flow]\label{lem:reparam}
Let $f\colon\M\to\R$ be smooth and bounded below, $h=f-f^*$, and let
$\sigma\colon[0,\infty)\to(0,\infty)$ be locally Lipschitz. Consider the two vector
fields
\[
  X \ =\ -\nabla f , \qquad Y \ =\ -\sigma(h)\,\nabla f .
\]
Then $X$ and $Y$ have the same zeros. Moreover, fix $x_0\in\M$ and let
$x\colon[0,T)\to\M$ and $y\colon[0,S)\to\M$ be the maximal forward trajectories of
$X$ and of $Y$ issuing from $x_0$. There is an increasing bijection
$t\colon[0,S)\to[0,T)$ with $y(s)=x(t(s))$, and it satisfies
\[
  m \ \le\ t'(s) \ \le\ M \qquad\text{for all } s ,
\]
where $m$ and $M$ are the minimum and maximum of $\sigma$ on $[0,h(x_0)]$. In
particular $T=\infty$ if and only if $S=\infty$, and in that case
$\lim_{t\to\infty}x(t)$ exists if and only if $\lim_{s\to\infty}y(s)$ does, with
the same value. Hence, whenever the end-point maps are defined, they agree.
\end{lem}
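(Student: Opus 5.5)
We prove Lemma~\ref{lem:reparam} by an explicit time change. The statement about zeros is immediate: $\sigma(h)>0$, so $Y(p)=0$ iff $\nabla f(p)=0$.

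\textbf{Step 1: the time change.} Let $x\colon[0,T)\to\M$ be the maximal $X$-trajectory from $x_0$. Along it, $\frac{d}{dt}h(x(t))=-\|\nabla f\|^2\le 0$, so $h(x(t))\in[0,h(x_0)]$ for all $t$. Consequently the function $\tau\mapsto\sigma(h(x(\tau)))$ is continuous and takes values in $[m,M]$, where $m>0$ because $\sigma$ is continuous and positive on the compact interval $[0,h(x_0)]$. Define
\[
  s(t) \ :=\ \int_0^t \frac{d\tau}{\sigma\bigl(h(x(\tau))\bigr)} .
\]
This is $C^1$ and strictly increasing, with $s'(t)\in[1/M,1/m]$. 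Put $S':=\lim_{t\to T}s(t)$ and let $t(\cdot)\colon[0,S')\to[0,T)$ be the inverse bijection. Then $t'(s)=\sigma(h(x(t(s))))\in[m,M]$.

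\textbf{Step 2: the reparametrized curve is the $Y$-trajectory.} Set $\tilde y(s):=x(t(s))$. By the chain rule,
\[
  \tilde y'(s)=t'(s)\,X(x(t(s)))=-\sigma(h(\tilde y))\nabla f(\tilde y)=Y(\tilde y(s)).
\]
So $\tilde y$ is a $Y$-trajectory from $x_0$. Since $\sigma$ is locally Lipschitz and $f$ is smooth, $Y$ is locally Lipschitz, so trajectories are unique. Hence $\tilde y$ agrees with $y$ on the common domain, and $S'\le S$.

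\textbf{Step 3: maximality, the main point.} We must show $S'=S$, i.e.\ that $\tilde y$ cannot be continued. I would argue symmetrically: repeat Step 1 starting from $y$, using the field $X=-\sigma(h)^{-1}Y$, with $1/\sigma$ also locally Lipschitz and positive. Along $y$ we also have $h$ decreasing, since $\frac{d}{ds}h(y(s))=-\sigma\|\nabla f\|^2\le0$. This yields an $X$-trajectory defined on $[0,\int_0^S\sigma(h(y))\,ds)$. By maximality of $x$, that interval is contained in $[0,T)$, which forces $S\le S'$. So the bijection is onto both intervals, and the bounds $m\le t'\le M$ hold throughout.

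\textbf{Step 4: consequences.} From $ms\le t(s)\le Ms$, we get $T=\infty$ iff $S=\infty$. When both are infinite, $t(s)\to\infty$ monotonically and bijectively, so $y(s)=x(t(s))$ converges as $s\to\infty$ iff $x(t)$ converges as $t\to\infty$, and the limits coincide. Thus the end-point maps agree wherever they are defined.

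The only delicate point is Step 3, where maximality must be transferred between the two flows. The essential ingredient there is the compactness of the value range $[0,h(x_0)]$ coming from the monotone decrease of $h$. That compactness is what keeps $t'$ bounded away from $0$ and $\infty$, so a finite-time or infinite-time horizon in one flow corresponds to the same kind of horizon in the other.
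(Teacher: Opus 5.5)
Your argument is correct and is essentially the paper's: a time change obtained by integrating along one trajectory (you integrate $1/\sigma(h)$ along $x$ and invert, while the paper integrates $\sigma(h)$ along $y$ directly), uniqueness from local Lipschitzness of $Y$, and transfer of maximality using $m\le t'\le M$, which comes from $h$ decreasing into $[0,h(x_0)]$. Two small points in Step 3: the relation is $X=\sigma(h)^{-1}Y$ (no minus sign), and the conclusion $S\le S'$ needs one more line. Your $s(\cdot)$ increases to exactly $S$ at the endpoint $\int_0^S\sigma(h(y(u)))\,du\le T$, by the substitution $\tau=\int_0^u\sigma(h(y(v)))\,dv$, so $S\le\lim_{t\to T}s(t)=S'$.
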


\begin{proof}
The zeros coincide because $\sigma>0$. Both $h\circ x$ and $h\circ y$ are
nonincreasing, since $\tfrac{d}{ds}h(y(s)) = -\sigma(h)\|\nabla f\|^2 \le 0$ and
likewise for $x$; so both trajectories remain in $\{0\le h\le h(x_0)\}$. As
$\sigma$ is continuous and strictly positive on the compact interval
$[0,h(x_0)]$, it satisfies $0<m\le\sigma\le M<\infty$ there.

Define $t(s) = \int_0^s \sigma(h(y(u)))\,du$, so that $t(0)=0$ and
$t'(s)=\sigma(h(y(s)))\in[m,M]$. Then $s\mapsto y(s)$ and $s\mapsto x(t(s))$
satisfy the same ordinary differential equation
$\tfrac{d}{ds} = -\sigma(h)\nabla f$ with the same initial condition. Its right-hand
side is locally Lipschitz, being the product of the locally Lipschitz
$\sigma\circ h$ with the smooth $\nabla f$, so solutions are unique and the two
agree wherever both are defined. Since $t$ is a bi-Lipschitz increasing
bijection onto its image, maximality of the two trajectories forces
$t([0,S)) = [0,T)$. From $ms\le t(s)\le Ms$ we get $T=\infty$ if and only if
$S=\infty$. The two trajectories then have the same image traversed in the same
order, so one converges if and only if the other does, to the same point.
\end{proof}

The following replaces $\bcr$ Lemma 2.1. Note that the conclusion is not
quadratic growth but a growth estimate governed by $\Psi$; quadratic growth
survives only locally.

\begin{lem}[Bounded trajectories and growth]\label{lem:traj}
Let $f\colon\M\to\R$ be smooth, bounded below, and satisfy \eqref{eq:GD} with
$\alpha$ positive definite satisfying \eqref{eq:A}. Let $x(\cdot)$ solve
$x'(t) = -\nabla f(x(t))$, $x(0)=x_0$. Then $x(t)$ is defined for all $t\ge 0$,
the trajectory has length at most $\Psi(h(x_0))$ on $[0,\infty]$, and it
converges to a limit $x_\infty := \lim_{t\to\infty} x(t)$ satisfying
\[
  \nabla f(x_\infty) = 0, \qquad f(x_\infty) = f^*, \qquad
  \dist(x_0, x_\infty) \ \le\ \Psi\bigl(h(x_0)\bigr) .
\]
Consequently $f$ attains its infimum, the set
\[
  S \ :=\ \{x : \nabla f(x)=0\} \ =\ \{x : f(x) = f^*\}
\]
is non-empty and closed, and for all $x\in\M$
\begin{equation}\label{eq:growth}
  \dist(x,S) \ \le\ \Psi\bigl(h(x)\bigr) < L ,
  \qquad\text{equivalently}\qquad
  h(x) \ \ge\ \Psi^{-1}\bigl(\dist(x,S)\bigr) .
\end{equation}
In particular, quadratic growth holds locally:
\begin{equation}\label{eq:QGloc}
  h(x) \ \ge\ \frac{\mu}{2}\,\dist(x,S)^2
  \qquad \text{whenever } h(x) \le h_0 .
  \tag{QG$_{\text{loc}}$}
\end{equation}
\end{lem}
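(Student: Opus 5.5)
The plan is the standard Łojasiewicz length argument, with $\Psi$ replacing $\sqrt{2h/\mu}$. Let $x\colon[0,T)\to\M$ be the maximal forward solution. Along it, $\tfrac{d}{dt}h(x(t)) = -\|\nabla f\|^2\le 0$, so $h(x(t))\le h(x_0)$ throughout.

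\textbf{Length bound.} Since $\Psi$ is $C^1$ on $(0,\infty)$ with $\Psi'=1/\alpha$, on any interval where $h(x(t))>0$ we get, using \eqref{eq:GD},
\[
  \tfrac{d}{dt}\Psi(h(x(t))) \ =\ -\frac{\|\nabla f\|^2}{\alpha(h)} \ \le\ -\|\nabla f(x(t))\| \ =\ -\|x'(t)\| .
\]
If $h(x(t_1))=0$ at some time, then $x(t_1)\in S$ is a critical point (a minimizer of smooth $f$). By uniqueness the trajectory is constant from then on, and in fact from the start if $x_0\in S$. Integrating the inequality therefore gives
\[
  \int_0^t\|x'\|\,du \ \le\ \Psi(h(x_0))-\Psi(h(x(t))) \ \le\ \Psi(h(x_0)),
\]
which is finite by Lemma~\ref{lem:Psi}(i). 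The continuity of $\Psi$ at $0$ handles the case where $h$ reaches $0$ only in the limit.

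\textbf{Global existence and convergence.} The trajectory stays in the closed metric ball of radius $\Psi(h(x_0))$ about $x_0$. This ball is compact by Hopf--Rinow, since $\M$ is complete. A solution confined to a compact set cannot blow up, so $T=\infty$. Finite length makes $x(t)$ Cauchy, so $x_\infty$ exists and $\dist(x_0,x_\infty)\le\Psi(h(x_0))$.

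\textbf{The limit is a minimizer.} This is the only mildly delicate step, because \eqref{eq:GD} is not quadratic. Since $\int_0^\infty\|\nabla f\|^2 = h(x_0)-\lim h(x(t))<\infty$, there is a sequence $t_k\to\infty$ with $\nabla f(x(t_k))\to0$. Continuity gives $\nabla f(x_\infty)=0$. Then \eqref{eq:GD} at $x_\infty$ gives $\alpha(h(x_\infty))\le 0$, and positive definiteness of $\alpha$ forces $h(x_\infty)=0$.

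\textbf{Consequences.} Hence $f$ attains $f^*$ and $S\neq\emptyset$. The same positive-definiteness argument applied at any critical point shows $\{\nabla f=0\}\subseteq\{f=f^*\}$. The reverse inclusion holds because interior minimizers are critical. The set $S$ is closed as a level set of a continuous function.

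For an arbitrary $x$, run the flow from $x$. This gives $\dist(x,S)\le\dist(x,x_\infty)\le\Psi(h(x))$, and $\Psi(h(x))<L$ by Lemma~\ref{lem:Psi}(ii). Applying the increasing $\Psi^{-1}$ yields the equivalent form $h(x)\ge\Psi^{-1}(\dist(x,S))$. Finally, when $h(x)\le h_0$, Lemma~\ref{lem:Psi}(iii) gives $\dist(x,S)\le\sqrt{2h(x)/\mu}$, which is \eqref{eq:QGloc}.
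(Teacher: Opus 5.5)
Your proof is correct and follows the paper's argument step for step: the length bound via \eqref{eq:GD} and $\Psi$, Hopf--Rinow plus the escape lemma for global existence, the Cauchy argument for convergence, $\int_0^\infty\|\nabla f\|^2<\infty$ to get a critical limit, and positive definiteness of $\alpha$ to identify critical points with minimizers. The one cosmetic difference is that you differentiate $\Psi(h(x(t)))$ directly, where the paper substitutes $s=h(x(\tau))$ in the length integral; this is the same computation written differently.
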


\begin{proof}
Let $[0,T)$ be the maximal interval of existence and put
$\eta(t) := h(x(t))$, so $\eta' = -\|\nabla f(x(t))\|^2 \le 0$. If
$\nabla f(x(t_1))=0$ for some $t_1$, then $x$ is constant on $[t_1,T)$ by
uniqueness, and it suffices to argue on $[0,t_1)$; so assume that
$\nabla f(x(t))\ne 0$, hence that $\eta$ is strictly decreasing, on the
interval considered. For $t$ in that interval, changing variables to $s = \eta(\tau)$,
\[
  \int_0^{t}\|\nabla f(x(\tau))\|\,d\tau
  = \int_0^{t}\frac{-\eta'(\tau)}{\|\nabla f(x(\tau))\|}\,d\tau
  \le \int_0^{t}\frac{-\eta'(\tau)}{\alpha(\eta(\tau))}\,d\tau
  = \int_{\eta(t)}^{\eta(0)}\frac{ds}{\alpha(s)}
  \ \le\ \Psi\bigl(h(x_0)\bigr) ,
\]
finite by Lemma~\ref{lem:Psi}(i). Thus the trajectory remains in the closed ball
of radius $\Psi(h(x_0))$ about $x_0$, which is compact since $\M$ is complete
(Hopf--Rinow), so $T=\infty$ by the escape lemma. Finite length makes $t\mapsto x(t)$
Cauchy, and completeness gives convergence to some $x_\infty$ at distance at most
$\Psi(h(x_0))$ from $x_0$.

Since $\int_0^\infty\|\nabla f(x(t))\|^2\,dt = h(x_0) - \lim_t \eta(t) < \infty$,
we have $\liminf_{t\to\infty}\|\nabla f(x(t))\|=0$, whence
$\nabla f(x_\infty)=0$ by continuity along a subsequence. Then \eqref{eq:GD}
gives $\alpha(h(x_\infty)) \le 0$, so $h(x_\infty)=0$ by positive definiteness
of $\alpha$: the limit is a global minimizer, and $f$ attains $f^*$. The same
argument applied at an arbitrary critical point shows
$\{\nabla f = 0\} = \{f = f^*\}$, a closed set. Estimate \eqref{eq:growth} follows
by taking $x_0 = x$, and \eqref{eq:QGloc} follows from
Lemma~\ref{lem:Psi}(iii).
\end{proof}

Observe that the argument proves that $f$ attains its infimum, rather than
assuming it; positive definiteness of $\alpha$ away from the origin is what
prevents the trajectory from sliding off to infinity, and \eqref{eq:A} is what
makes $\Psi$ finite. Both halves are needed, as Examples~\ref{ex:exp}
and~\ref{ex:x4} show.

\begin{rem}[$L$ is infinite unless $f$ is constant]\label{rem:L}
By \eqref{eq:growth}, every point of $\M$ lies within distance $L$ of $S$, so if
$L<\infty$ then $\M$ is a bounded tube around $S$. In fact this cannot happen at
all except in the trivial case: \emph{if $f$ is not constant then $L=\infty$.}
Indeed, suppose $L<\infty$, fix $x\in S$ and let $F=\pi^{-1}(x)$. Every $y\in F$
has $\pi(y)=x$, so \eqref{eq:growth} gives $\dist(y,x)\le\Psi(h(y))<L$ and $F$ is
bounded; $F$ is closed in $\M$, being properly embedded
(Proposition~\ref{prop:fiber} below), so $F$ is compact because $\M$ is complete. But
$F$ is diffeomorphic to $\R^k$, and $k\ge1$ whenever $f$ is not constant (if
$k=0$ then $S$ is open as well as closed in the connected manifold $\M$, so
$S=\M$), a contradiction. For global P\L{}, $L=\infty$ directly.
\end{rem}

\begin{rem}[Coercivity in place of completeness]\label{rem:coercive}
Completeness of $\M$ is used in the proof above only to know that closed balls
are compact, and hence to rule out escape in finite time. If instead one assumes
that $f$ is \emph{coercive} on $\M$ (compact sublevel sets), the same conclusions
follow, and more easily: the trajectory from $x_0$ stays in the compact set
$\{f\le f(x_0)\}$, so it cannot escape, and finite length gives convergence
inside that set. The same substitution works in the two other places where
completeness appears downstream, namely the coercivity of $f|_F$ in
Proposition~\ref{prop:fiber} (which we prove directly in any case) and the escape
argument in Theorem~\ref{thm:bundle}, where the set $B\cap C$ may be replaced by
$\{f=\bar f\}\cap\pi^{-1}(c([0,1]))$, compact because $f$ is coercive.

This matters for the applications in Section~\ref{sec:applications}, where the
natural domain is an open subset $D\subsetneq\R^{n}$ (the stabilizing feedback
gains, say) which is not complete in the ambient Euclidean metric but on which
the loss is coercive because it blows up at $\partial D$. In the language of
\cite{Sontag2025L4DC, Sontag2022}, this is the requirement that $f-f^*$ be a
\emph{size function}, or barrier metric, for the pair $(D,S)$: continuous,
positive definite with respect to $S$, and proper. Note that the same hypothesis
appears on both sides: our growth estimate \eqref{eq:growth} is exactly the
assertion that $h$ is a proper exhaustion relative to $S$, expressed in a way that
does not refer to $\partial D$.

The results of \cite{BCR2026} that we import wholesale (above all
Theorem~\ref{thm:BCR33}) are stated under that paper's standing convention that
$\M$ is complete. However, this can be weakened, because when $f$ is coercive, a
conformal change of metric reduces the coercive case to the complete case.

Let us spell this out. Write $g$ for the given Riemannian metric on $\M$, that is,
for the smoothly varying inner product $g_x(\cdot,\cdot)$ on each tangent space,
and given a smooth function $u\colon\M\to\R$ let $e^{2u}g$ denote the metric whose
inner product at $x$ is $e^{2u(x)}g_x(\cdot,\cdot)$: a \emph{conformal rescaling}
of $g$, which stretches all lengths at $x$ by the same factor $e^{u(x)}$
regardless of direction. Two elementary consequences will be used, both immediate
from the definition of the gradient by $g(\nabla_g f,\cdot) = df$:
\[
  \nabla_{e^{2u}g}\,f = e^{-2u}\,\nabla_g f ,
  \qquad
  \|\nabla_{e^{2u}g}\,f\|_{e^{2u}g} = e^{-u}\,\|\nabla_g f\|_g .
\]
Now put $g' = e^{2\chi(h)}g$, where $h=f-f^*$ as usual and $\chi$ is a smooth
nondecreasing function to be chosen, and observe three things.

\emph{(a) The hypothesis survives.} On each band $\{a\le h\le b\}$, which is
compact by coercivity, the factor $e^{-\chi(h)}$ is bounded below by a positive
constant, so the rescaled witness $\alpha'(s) := e^{-\chi(s)}\alpha(s)$ is again
positive definite and
$\|\nabla_{g'} f\|_{g'} = e^{-\chi(h)}\|\nabla_g f\|_g \ge \alpha'(h)$ obeys a
positive definite bound on every level; and \eqref{eq:A} survives too, being a
statement on $\{h\le h_0\}$, where $e^{-\chi(h)}\ge e^{-\chi(h_0)}>0$. So
$sgl$-P\L{}I holds for $g'$.

\emph{(b) The conclusions are unaffected.} Since
$\nabla_{g'} f = e^{-2\chi(h)}\nabla_g f$ is a positive smooth multiple of
$\nabla_g f$, with the factor $e^{-2\chi(h)}$ depending on $x$ only through $h$,
Lemma~\ref{lem:reparam} applies with $\sigma = e^{-2\chi}$: the two negative
gradient flows have the same critical set $S$, the same orbits, and the same
end-point map $\pi$, hence the same fibers. The \emph{horizontal-transport
map} used to trivialize $\pi$ is unchanged as well: it is built from horizontal
lifts, and ``horizontal'' means orthogonal to $\ker D\pi$, a condition insensitive
to a conformal factor, since $g'(u,v)=0$ if and only if $g(u,v)=0$. The horizontal
lift of a curve is determined by that distribution alone (given it, $\gamma'(t)$
is the unique horizontal vector projecting to $c'(t)$) so that map is the same
for $g$ and $g'$. The remaining ingredient, the diffeomorphism carrying one fiber
to $\R^k$ furnished by Theorem~\ref{thm:BCR33}, need not literally be the same for
the two metrics; only its existence is ever used, and the final conclusion
$f = f^*+\|\phi\|^2$ refers to no metric at all.

\emph{(c) $\chi$ can be chosen so that $g'$ is complete.} Put
$M(s):=\max\{\|\nabla_g f\|_g : h\le s\}$, which is finite because $\{h\le s\}$ is
compact, and nondecreasing in $s$ by construction; choose any smooth
nondecreasing $\chi$ with $e^{\chi(s)}\ge M(s)(1+s)$, which exists because a
nondecreasing finite function admits a smooth nondecreasing majorant. Let $\gamma$
be a curve leaving every compact set; then $h\to\infty$ along it, by coercivity.
Since $|(h\circ\gamma)'|\le\|\nabla_g f\|_g\|\gamma'\|_g$, since
$\|\nabla_g f\|_g\le M(h)$ pointwise, and since $e^{\chi(s)}/M(s)\ge 1+s$, the
$g'$-length of $\gamma$ satisfies
\[
  L_{g'}(\gamma) \ =\ \int e^{\chi(h)}\|\gamma'\|_g\,dt
  \ \ge\ \int \bigl(1+h\bigr)\bigl|(h\circ\gamma)'\bigr|\,dt
  \ \ge\ \bigl|W(h(\gamma(t))) - W(h(\gamma(t_0)))\bigr| ,
\]
where $W(s) := s+\tfrac{s^2}{2}$, the last step because $W'=1+s$. Note that no monotonicity of $h\circ\gamma$ is
needed. Since $h(\gamma(t))\to\infty$, the right side tends to infinity, so no
curve of finite $g'$-length escapes and $g'$ is complete by Hopf--Rinow.
\end{rem}

\begin{lem}[Morse--Bott]\label{lem:MB}
Let $f\colon\M\to\R$ be smooth and satisfy \eqref{eq:GD} with $\alpha$ positive
definite satisfying \eqref{eq:A}, and let $S$ be its set of critical points.
Then each connected component of $S$ is a $C^\infty$ embedded submanifold of
$\M$, and for each $x\in S$,
\begin{equation}\label{eq:MB}
  \ker\nabla^2 f(x) = T_xS,
  \qquad
  \nabla^2 f(x)\big|_{N_xS} \succeq \mu\,\Id ,
\end{equation}
where $T_xS$ and $N_xS$ are the tangent and normal spaces at $x$ to the
component of $S$ through $x$.
\end{lem}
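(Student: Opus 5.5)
The key observation is that Lemma~\ref{lem:MB} is a purely local statement about $S$. By Lemma~\ref{lem:Psi}(iv), $f$ satisfies the local P\L{} inequality $\|\nabla f\|^2\ge 2\mu h$ on $\Omega_0=\{h\le h_0\}$, with the same $\mu$, and $\Omega_0$ contains an open neighborhood $U$ of $S$. So the plan is to rerun the proof of $\bcr$ Lemma~2.2 on $U$, where it uses only P\L{} near $S$. I will check this rather than merely cite it, as follows.

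\emph{Step 1: the normal Hessian bound.} Fix $x\in S$ and a unit vector $v\in T_x\M$. Let $y=\Exp_x(tv)$ for small $t$, so that $y\in U$. Taylor expansion at the critical point $x$ gives $h(y)=\tfrac{t^2}{2}\langle\nabla^2f(x)v,v\rangle+O(t^3)$. It also gives $\nabla f(y)=t\,P_t\nabla^2f(x)v+O(t^2)$, with $P_t$ the parallel transport. Plugging both into $\|\nabla f(y)\|^2\ge 2\mu h(y)$, dividing by $t^2$ and letting $t\to0$ yields
\[
  \|Hv\|^2\ \ge\ \mu\,\langle Hv,v\rangle , \qquad H:=\nabla^2 f(x)\succeq0 .
\]
Here $H\succeq0$ because $x$ is a minimizer. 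Taking $v$ to be an eigenvector with eigenvalue $\lambda$ gives $\lambda^2\ge\mu\lambda$. Hence every eigenvalue of $H$ is either $0$ or $\ge\mu$, so $H\succeq\mu\Id$ on $(\ker H)^\perp$.

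\emph{Step 2: $S$ is a submanifold with $T_xS=\ker H$.} I expect this to be the main obstacle. The inclusion $T_xS\subseteq\ker H$ is only meaningful once we know $S$ is a manifold. So first I will show that near $x$ the set $S$ coincides with a smooth manifold of dimension $\dim\ker H$.

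\emph{Route I.} Split coordinates at $x$ as $\ker H\oplus(\ker H)^\perp$ and apply the implicit function theorem to the normal component of $\nabla f$. Its derivative in the normal directions is $H|_{(\ker H)^\perp}$, which is invertible by Step~1. This produces a smooth manifold $Z$ of dimension $\dim\ker H$ through $x$, consisting of the points where the normal part of the gradient vanishes, and $S\cap V\subseteq Z$ for a neighborhood $V$.

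It remains to show $Z\subseteq S$ near $x$. I would argue this as follows. Along $Z$ we have $\nabla f$ tangent to $Z$. Restricted to $Z$, the function satisfies the local P\L{} inequality. It also has vanishing Hessian at $x$ and is minimized there. By the \L{}ojasiewicz-exponent argument, this forces the restriction to be constant: P\L{} gives quadratic growth $h\ge\tfrac\mu2\dist(\cdot,S)^2$ via \eqref{eq:QGloc}. On the other hand, flowing along $Z$ shows $h|_Z=o(\dist^2)$, which contradicts quadratic growth unless $h|_Z\equiv0$.

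\emph{Route II (cleaner alternative).} Use the end-point map of the gradient flow. By Lemma~\ref{lem:traj} the flow converges, and P\L{} on $U$ gives exponential convergence there. One then shows that the limit map $\pi$ is smooth near $S$ and is a retraction onto $S$. The image of a smooth retraction is an embedded submanifold, with tangent space equal to the $1$-eigenspace of $D\pi(x)$. Computing $D\pi(x)$ as the limit of $e^{-tH}$ gives the projection onto $\ker H$, so $T_xS=\ker H$.

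Either route uses only the inequality on $U$ together with \eqref{eq:QGloc}. This is exactly why the global hypothesis (J3) is not needed. I would follow $\bcr$ for Route~II, since smoothness of $\pi$ near a Morse--Bott-type set is the genuinely delicate point, and I would note that Lemma~\ref{lem:reparam} is not needed here.
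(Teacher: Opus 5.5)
Your opening reduction is exactly the paper's proof. Lemma~\ref{lem:Psi}(iv) gives the P\L{} inequality with the same $\mu$ on a neighborhood of $S$, and the paper then simply invokes \cite[Thm.~2.16, Cor.~2.17]{RB2024a}, which need only local P\L{}. Step~1 is also a correct derivation of the spectral dichotomy. The self-contained re-derivation you commit to in Step~2, however, does not close.

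Route~II is circular. In $\bcr$ Proposition~4.3, as in Proposition~\ref{prop:pi} here, smoothness of $\pi$ comes from Falconer's theorem. That theorem requires $S$ to be \emph{already} a smooth submanifold on which $\Phi^1$ is pseudo-hyperbolic, so it takes \eqref{eq:MB} as input, and ``following $\bcr$'' assumes the lemma. Even granting a $C^1$ retraction $\pi$, the identity $\pi=\pi\circ\Phi^t$ only gives $D\pi(x)=D\pi(x)P$, with $P$ the orthogonal projection onto $\ker H$. That yields $\dim T_xS\le\dim\ker H$, whereas the reverse inequality is what is at stake.

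Route~I is the right idea, and its implicit-function step is fine. The contradiction you draw from it does not follow, though. \eqref{eq:QGloc} bounds $h$ below by $\dist(\cdot,S)^2$, while a vanishing Hessian of $f|_Z$ at the single point $x$ bounds $h|_Z$ above by $o(\dist(\cdot,x)^2)$. These clash only where the two distances are comparable, so you learn only that $S$ fills out $\ker H$ to first order at $x$, not that $Z\subseteq S$. (Also, $\nabla f$ is not tangent to $Z$ in general.)

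What is missing is uniformity along $S$. Apply Step~1 at every $z\in S$ near $x$: the spectrum of $\nabla^2 f(z)$ lies in $\{0\}\cup[\mu,\infty)$, so continuity of eigenvalues forces $\dim\ker\nabla^2 f(z)=\dim\ker H$. Write $Z=\{P^\perp\nabla f=0\}$ in your coordinates. At such $z$ the gradient vanishes, so $T_zZ=\ker\bigl(P^\perp\nabla^2 f(z)\bigr)$. This space has dimension $\dim\ker H$ and contains $\ker\nabla^2 f(z)$, hence $T_zZ=\ker\nabla^2 f(z)$.

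Now take $y\in Z$ near $x$, and choose $z\in S$ with $d:=\dist(y,S)=\dist(y,z)\le\dist(y,x)$. Then $z\in Z$, and $\Log_z y$ lies within $O(d^2)$ of $T_zZ=\ker\nabla^2 f(z)$. A Taylor expansion at $z$ with uniform remainder gives $h(y)=O(d^3)$, against $h(y)\ge\frac\mu2 d^2$ from \eqref{eq:QGloc}. So $d=0$ once $y$ is close enough to $x$. Thus $S=Z$ near $x$ with $T_xS=\ker H$, and the normal bound is your Step~1. This is, in substance, what the cited result of Rebjock and Boumal supplies.
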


\begin{proof}
By Lemma~\ref{lem:Psi}(iv), $f$ satisfies the P\L{} inequality with parameter
$\mu$ on the neighborhood $\Omega_0 \supseteq S$ of $S$. The results of
Rebjock and Boumal \cite[Thm.~2.16, Cor.~2.17]{RB2024a}, quoted as $\bcr$
Lemma 2.2, require only a local P\L{} inequality, so they apply verbatim.
\end{proof}

This is the only place where \eqref{eq:A} is used, and it is used essentially:
see Example~\ref{ex:cross}.

For the applications of Section~\ref{sec:applications} we shall need the
converse direction, and it is worth isolating, because there is a small gap
between the two natural readings of ``local''. A nondegenerate minimizer gives a
P\L{} inequality on a \emph{neighborhood}, whereas $loc$-P\L{}I as we have defined
it asks for one on a \emph{sublevel set}; a neighborhood need not contain a
sublevel set if there are remote near-minimizers. Coercivity and uniqueness close
the gap.

\begin{lem}[Nondegenerate minimizer gives $loc$-P\L{}I]\label{lem:nondeg}
Let $f\colon\M\to\R$ be smooth and coercive, with a unique critical point $x^*$,
and suppose $\nabla^2f(x^*)\succ0$. Then $f$ satisfies $loc$-P\L{}I: there are
$\rho,c>0$ with $\|\nabla f\|^2 \ge c\,h$ on $\{h\le\rho\}$.
\end{lem}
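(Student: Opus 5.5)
The proof has two steps: obtain a P\L{} inequality on a small geodesic ball around $x^*$, then use coercivity and uniqueness of the critical point to fit a whole sublevel set inside that ball.

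\emph{Step 1 (P\L{} near $x^*$).} Work in normal coordinates at $x^*$. Since $\nabla f(x^*)=0$, $f(x^*)=f^*$ and $H:=\nabla^2 f(x^*)\succ 0$, Taylor expansion gives
\[
h(x)=\tfrac12\langle H v,v\rangle+O(\|v\|^3),\qquad \|\nabla f(x)\| = \|Hv\|+O(\|v\|^2),
\]
with $v=\Log_{x^*}(x)$. Let $\lambda_{\min}$ and $\lambda_{\max}$ be the extreme eigenvalues of $H$. On a small ball $B_r(x^*)$ these expansions yield
\[
h\le \lambda_{\max}\|v\|^2 \qquad\text{and}\qquad \|\nabla f\|\ge \tfrac12\lambda_{\min}\|v\|.
\]
Hence $\|\nabla f\|^2\ge c\,h$ on $B_r(x^*)$ with $c=\lambda_{\min}^2/(4\lambda_{\max})$. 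If one prefers to avoid coordinates, this is exactly the statement that a nondegenerate minimizer is locally P\L{}, cf.\ \cite{RB2024a}.

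\emph{Step 2 (a sublevel set inside the ball).} I expect this to be the main obstacle, since it is the gap flagged just before the statement. I claim there is $\rho>0$ with $\{h\le\rho\}\subseteq B_r(x^*)$. Suppose not. Then there are $x_j\notin B_r(x^*)$ with $h(x_j)\to0$. All $x_j$ lie in the sublevel set $\{h\le 1\}$, which is compact by coercivity, so a subsequence converges to some $\bar x$. This limit satisfies $h(\bar x)=0$ and $\dist(\bar x,x^*)\ge r$ (the ball is open). But then $\bar x$ is a global minimizer, hence a critical point, and it differs from $x^*$. This contradicts uniqueness of the critical point. The same argument shows that $x^*$ is the global minimizer, so indeed $f(x^*)=f^*$ as used in Step 1.

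\emph{Conclusion.} Combining the two steps, on $\{h\le\rho\}\subseteq B_r(x^*)$ we have $\|\nabla f\|^2\ge c\,h$, which is $loc$-P\L{}I.

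Note that coercivity already implies that $f$ attains its infimum, and the minimizer must be the unique critical point $x^*$. The topology used is only the compactness of sublevel sets, so completeness of $\M$ is not needed.
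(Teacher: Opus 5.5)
Your proposal is correct and follows the paper's proof essentially step for step. The first step is the same Taylor expansion in normal coordinates at $x^*$, giving $\|\nabla f\|^2\ge c\,h$ with $c=\lambda_{\min}^2/(4\lambda_{\max})$ on a small neighborhood. The second is the same contradiction argument, using compactness of $\{h\le 1\}$ and uniqueness of the critical point, to fit a sublevel set $\{h\le\rho\}$ inside that neighborhood. Your extra remark that coercivity forces $f(x^*)=f^*$ is correct; the paper leaves it implicit.
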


\begin{proof}
Write $\lambda,\Lambda>0$ for the least and greatest eigenvalues of
$\nabla^2f(x^*)$. In normal coordinates at $x^*$, with $v=\Log_{x^*}(x)$, Taylor's
theorem gives $\nabla f(x) = \nabla^2f(x^*)[v]+o(\|v\|)$ and
$h(x) = \tfrac12\langle v,\nabla^2f(x^*)v\rangle+o(\|v\|^2)$, so for $\|v\|$ small
$\|\nabla f(x)\|\ge\tfrac{\lambda}{2}\|v\|$ and $h(x)\le\Lambda\|v\|^2$; hence
$\|\nabla f\|^2 \ge c\,h$ on some open neighborhood $U$ of $x^*$, with
$c=\lambda^2/(4\Lambda)$.

It remains to find $\rho>0$ with $\{h\le\rho\}\subseteq U$. If there were none,
there would be $x_j\notin U$ with $h(x_j)\to0$. Eventually $h(x_j)\le1$, and
$\{h\le1\}$ is compact by coercivity, so a subsequence converges to some
$\bar x\in\M\setminus U$, the complement of $U$ being closed. Then $h(\bar x)=0$,
so $\bar x$ is a global minimizer and hence a critical point, so $\bar x=x^*$. But
$x^*\in U$, a contradiction.
\end{proof}

\subsection{(I3): the end-point map and its fibers}

Everything now proceeds as in \cite{BCR2026}. Let $\Phi$ denote the negative
gradient flow of $f$, which by Lemma~\ref{lem:traj} is defined on
$\M\times[0,\infty)$, and let
\begin{equation}\label{eq:pi}
  \pi\colon\M\to S, \qquad \pi(y) := \lim_{t\to\infty}\Phi^t(y)
\end{equation}
be the end-point map.

\begin{prop}[$\bcr$ Propositions 4.2 and 4.3]\label{prop:pi}
Let $f\colon\M\to\R$ be smooth and satisfy \eqref{eq:GD} with $\alpha$ positive
definite satisfying \eqref{eq:A}. Then $\pi$ is continuous, $\pi(x)=x$ exactly
for $x\in S$, and $\M$ strongly deformation retracts to $S$; in particular $S$
is connected and $\M$ is contractible if and only if $S$ is. Moreover, $S$ is a
smooth, properly embedded submanifold of $\M$, and $\pi\colon\M\to S$ is a
smooth submersion.
\end{prop}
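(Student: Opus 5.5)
Most of the work is already done by Lemmas~\ref{lem:traj} and~\ref{lem:MB}; what remains is continuity of $\pi$, the retraction, and smoothness. The plan is to follow $\bcr$ Propositions 4.2 and 4.3, inserting the $\Psi$-length bound wherever they use the P\L{} length bound $\sqrt{2h/\mu}$.

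\emph{Step 1: $\pi$ is a well-defined retraction with the right fixed points.} By Lemma~\ref{lem:traj}, $\pi$ is defined on all of $\M$, and it takes values in $S$. If $x\in S$ then $\nabla f(x)=0$, so the trajectory is constant and $\pi(x)=x$. Conversely, $\pi(x)\in S$, so $\pi(x)=x$ forces $x\in S$.

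\emph{Step 2: continuity of $\pi$.} Fix $y$ and $\varepsilon>0$, and use a uniform tail estimate. For any $z$ and any $t\ge0$, applying Lemma~\ref{lem:traj} from the point $\Phi^t(z)$ gives
\[
\dist\bigl(\Phi^t(z),\pi(z)\bigr)\ \le\ \Psi\bigl(h(\Phi^t(z))\bigr).
\]
\begin{itemize}
\item Since $h(\Phi^t(y))\to0$ and $\Psi$ is continuous at $0$ with $\Psi(0)=0$ (Lemma~\ref{lem:Psi}), choose $T$ with $\Psi(h(\Phi^T(y)))<\varepsilon$.
\item The time-$T$ flow map is continuous, so for $z$ near $y$ the point $\Phi^T(z)$ is near $\Phi^T(y)$. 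Continuity of $h$ and $\Psi$ then gives $\Psi(h(\Phi^T(z)))<2\varepsilon$.
\item The triangle inequality yields $\dist(\pi(z),\pi(y))<4\varepsilon$ once $\dist(\Phi^T(z),\Phi^T(y))<\varepsilon$.
\end{itemize}
This is exactly where the $\Psi$-bound replaces the P\L{} bound; note that it needs no monotonicity of $\alpha$.

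\emph{Step 3: deformation retraction and topology.} Define $H(x,s)=\Phi^{s/(1-s)}(x)$ for $s<1$ and $H(x,1)=\pi(x)$.
\begin{itemize}
\item Joint continuity at $s=1$ follows from the same tail estimate as in Step 2, using that $h(\Phi^t(z))\le h(\Phi^T(z))$ for $t\ge T$.
\item $H$ fixes $S$ pointwise for all $s$.
\end{itemize}
So $\M$ strongly deformation retracts onto $S$, and $S$ is homotopy equivalent to $\M$. In particular $S$ is connected because $\M$ is, and $\M$ is contractible if and only if $S$ is.

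\emph{Step 4: submanifold structure and smoothness, the main obstacle.}
\begin{itemize}
\item $S$ is connected, so by Lemma~\ref{lem:MB} it is a single embedded submanifold satisfying the Morse--Bott property~\eqref{eq:MB}. It is closed by Lemma~\ref{lem:traj}, and a closed embedded submanifold is properly embedded.
\item For smoothness of $\pi$ and the submersion property, I would argue locally near $S$. On $\Omega_0$ the function satisfies P\L{} with constant $\mu$ and is Morse--Bott, so the local argument of $\bcr$ Proposition 4.3 (via Rebjock--Boumal and the stable-manifold/Morse--Bott normal form) shows that $\pi$ is a smooth submersion on a neighborhood $U\subseteq\Omega_0$ of $S$. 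That argument uses only local data.
\item To globalize: for any $y$, pick $T$ with $\Phi^T(y)\in U$; this exists because $h(\Phi^t(y))\to0$ and, by Lemma~\ref{lem:Psi}(iv), some sublevel set lies inside $U$ after shrinking. By continuity, $\Phi^T$ maps a neighborhood of $y$ into $U$, and $\pi=\pi\circ\Phi^T$ there.
\item $\Phi^T$ is a diffeomorphism onto its image, so $\pi$ is smooth with surjective differential near $y$.
\end{itemize}

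The step I expect to be delicate is checking that the local smoothness argument of $\bcr$ really needs P\L{} only on a neighborhood. One must also arrange that a neighborhood of each point of $S$ flows entirely into the region where the local theory applies. A subtle point is that $U$ should be taken as a sublevel set $\{h<\rho\}$: sublevel sets are forward invariant, so trajectories entering $U$ stay there.
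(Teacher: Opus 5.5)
Your Steps 1--3 are sound and use the same ingredient as the paper. The tail bound $\dist(\Phi^t(z),\pi(z))\le\Psi(h(\Phi^t(z)))$ is what the paper feeds into $\bcr$'s continuity argument; there it is packaged as a neighborhood $V=B(x,\rho)\cap\{h<\Psi^{-1}(\rho)\}$ of each $x\in S$ whose forward orbits stay in $B(x,2\rho)$. Your $\varepsilon$-argument and your explicit homotopy $H$ both work as written. The end of Step 4 also matches the paper's ``local at $S$ plus the identity $\pi=\pi\circ\Phi^t$'': you transfer smoothness and surjectivity of $D\pi$ from a neighborhood of $S$ to all of $\M$ via $\pi=\pi\circ\Phi^T$. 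Near $S$, surjectivity follows from openness, since $D\pi(x)$ is the identity on $T_xS$ for $x\in S$.

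The gap is the step you flag yourself and then leave undone: that $\pi$ is smooth near $S$. You defer to ``the local argument of $\bcr$ Proposition 4.3 (via Rebjock--Boumal and the stable-manifold/Morse--Bott normal form)'', but that does not identify the mechanism. A Morse--Bott normal form for $f$ cannot supply it by itself, since $\pi$ is built from the gradient flow and so depends on the metric, not only on $f$ up to diffeomorphism.

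The missing idea is Falconer's theorem on differentiability of limit maps. With $g=\Phi^1$ one has $\pi=\lim_n g^n$, and the theorem needs only two things:
\begin{enumerate}
\item $g$ is a smooth map of $\M$. This is forward completeness of the flow (Lemma~\ref{lem:traj}) plus the fundamental theorem of flows.
\item $S$ is pseudo-hyperbolic for $g$. This is checked pointwise on $S$: at $x\in S$, $Dg(x)=e^{-\nabla^2 f(x)}$, which by \eqref{eq:MB} is the identity on $T_xS$ and has all eigenvalues in $(0,e^{-\mu}]$ on $N_xS$.
\end{enumerate}
That is the precise sense in which the argument ``uses only local data'': P\L{} enters only through Lemma~\ref{lem:MB}. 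If you instead meant the stable-foliation theory for a normally attracting manifold of equilibria, that route can also be made to work. Its hypothesis is again exactly this spectral information, though, and you would have to verify it together with the smoothness class.

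A smaller point concerns sublevel sets. Lemma~\ref{lem:Psi}(iv) says a sublevel set contains a neighborhood of $S$, not that a given neighborhood $U$ contains a sublevel set; the latter can fail when $S$ is noncompact. You need neither this nor forward invariance. Since $\Phi^t(y)\to\pi(y)\in S\subseteq U$ with $U$ open, there is already some $T$ with $\Phi^T(y)\in U$, and $\pi=\pi\circ\Phi^T$ holds on all of $\M$.
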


\begin{proof}
The proofs of $\bcr$ Propositions 4.2 and 4.3 apply once the single quantitative
ingredient they draw from $\bcr$ Lemma 2.1 is replaced. That ingredient appears
in the continuity argument, in the form: \emph{for each $x\in S$ and each
neighborhood $U'$ of $x$, there is a neighborhood $V$ of $x$ such that
$\Phi^s(z)\in U'$ for all $z\in V$ and all $s\ge0$.} To obtain it, choose
$\rho>0$ with $\rho<L$ and $B(x,2\rho)\subseteq U'$, and set
$V := B(x,\rho)\cap\{h < \Psi^{-1}(\rho)\}$, an open neighborhood of $x$ since
$h(x)=0$ and $\Psi^{-1}(\rho)>0$. For $z\in V$, Lemma~\ref{lem:traj} gives
$\dist(z,\Phi^s(z)) \le \Psi(h(z)) < \rho$, so $\Phi^s(z)\in B(x,2\rho)\subseteq U'$.
The remainder of $\bcr$ Proposition 4.2 (the deformation retraction, connectedness,
transfer of contractibility) is purely topological.

For $\bcr$ Proposition 4.3, smoothness of $S$ and property \eqref{eq:MB} come
from Lemma~\ref{lem:MB} here; the appeal to Falconer's theorem
\cite[Thm.~5.1]{Falconer1983} needs only that $g=\Phi^1$ is smooth on $\M$
(Lemma~\ref{lem:traj} and the fundamental theorem of flows) and that $S$ is
pseudo-hyperbolic for $g$, which follows from $Dg(x)=e^{-\nabla^2 f(x)}$ together
with \eqref{eq:MB}: $Dg(x)$ is the identity on $T_xS$ and has all eigenvalues in
$(0,e^{-\mu}]$ on $N_xS$. The submersion argument is unchanged, being local at
$S$ plus the identity $\pi = \pi\circ\Phi^t$.
\end{proof}

\begin{prop}[$\bcr$ Proposition 4.4]\label{prop:fiber}
Let $f$ be as in Proposition~\ref{prop:pi} and let $x\in S$. Then the fiber
$F := \pi^{-1}(x)$ is a contractible, properly embedded smooth submanifold of
$\M$, and, with the Riemannian submanifold structure, $f|_F$ is smooth, satisfies
\eqref{eq:GD} on $F$ with the same $\alpha$ and the same $f^*$, is coercive, and
has $x$ as its unique critical point, at which its Hessian is positive definite.
Consequently there is a diffeomorphism $\phi\colon F\to\R^k$,
$k = \dim\M - \dim S$, with $f|_F = f^* + \|\phi\|^2$.
\end{prop}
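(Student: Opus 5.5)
The plan is to follow $\bcr$ Proposition~4.4, replacing its use of global quadratic growth by the growth estimate \eqref{eq:growth}, and then to invoke Theorem~\ref{thm:BCR33} on $F$. The argument has four steps.

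\emph{Step 1: structure of $F$.} Since $\pi\colon\M\to S$ is a smooth submersion (Proposition~\ref{prop:pi}), $F=\pi^{-1}(x)$ is a closed embedded submanifold of dimension $k=\dim\M-\dim S$. Being closed, it is properly embedded.

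\emph{Step 2: gradient domination on $F$.} The flow $\Phi^t$ preserves $F$, because $\pi\circ\Phi^t=\pi$. Hence $-\nabla f(y)$ is tangent to $F$ for every $y\in F$. The Riemannian gradient of $f|_F$ is the orthogonal projection of $\nabla f$ onto $T_yF$, so it equals $\nabla f(y)$ itself. In particular $\|\nabla(f|_F)(y)\|=\|\nabla f(y)\|\ge\alpha(h(y))$, so \eqref{eq:GD} holds on $F$ with the same $\alpha$. The infimum of $f|_F$ is still $f^*$, attained at $x\in F$.

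\emph{Step 3: critical points and the Hessian.} The critical points of $f|_F$ are points of $F$ where $\nabla f=0$, that is, points of $F\cap S$. For $y\in S$ we have $\pi(y)=y$, so $F\cap S=\{x\}$ and $x$ is the unique critical point.

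For the Hessian, note that $T_xF=\ker D\pi(x)$. I claim this equals $N_xS$. Since $\pi|_S=\Id$, the map $D\pi(x)$ is the identity on $T_xS$. From $\pi=\pi\circ\Phi^t$ we get $D\pi(x)=D\pi(x)e^{-t\nabla^2f(x)}$. On $N_xS$ the factor $e^{-t\nabla^2 f(x)}$ tends to $0$ by \eqref{eq:MB}, so $D\pi(x)$ kills $N_xS$. Thus $T_xF=N_xS$.

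At a critical point, the Riemannian Hessian of the restriction is the restriction of the ambient Hessian. So $\nabla^2(f|_F)(x)\succeq\mu\Id$ by Lemma~\ref{lem:MB}.

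\emph{Step 4: coercivity, the main obstacle.} This is the step I expect to need the most care, since neither $F$ nor its ambient distance is controlled a priori. Let $y\in F$ with $h(y)\le c$. Apply \eqref{eq:growth} to the trajectory from $y$, whose limit is $\pi(y)=x$; more precisely, Lemma~\ref{lem:traj} gives $\dist(y,x)\le\Psi(h(y))\le\Psi(c)$. So $\{y\in F: h(y)\le c\}$ is contained in the closed ball $\bar B(x,\Psi(c))$. That ball is compact by Hopf--Rinow, and the set in question is closed in $\M$ because $F$ is closed. Hence the sublevel sets of $f|_F$ are compact in $\M$, and therefore compact in $F$, since $F$ carries the subspace topology.

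It remains to conclude. Contractibility of $F$ follows from the strong deformation retraction $\Phi^t|_F$ onto $\{x\}$, exactly as in Proposition~\ref{prop:pi}. Alternatively, it follows from the diffeomorphism with $\R^k$ produced next.

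Finally, $F$ is a smooth manifold and $f|_F$ is smooth, coercive, and has the unique critical point $x$ with positive definite Hessian. Theorem~\ref{thm:BCR33} applies and gives a diffeomorphism $\phi\colon F\to\R^k$ with $f|_F=f^*+\|\phi\|^2$. Theorem~\ref{thm:BCR33} does not require completeness of $F$, only coercivity; if one wants to stay within the completeness convention, Remark~\ref{rem:coercive} supplies it.
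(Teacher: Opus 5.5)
Your proof is correct, and in its one genuinely new step it is exactly the paper's argument: coercivity of $f|_F$ comes from the bound $\dist(y,x)\le\Psi(h(y))$ of Lemma~\ref{lem:traj}, together with closedness of $F$ and Hopf--Rinow.

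You diverge from the paper in proving that the Hessian at $x$ is positive definite.

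\begin{itemize}
\item The paper notes that $f|_F$ itself lies in the class, with the same $\alpha$, the same $f^*$ and critical set $\{x\}$. It then reapplies Lemma~\ref{lem:MB} on $F$ and reads off $\ker\nabla^2(f|_F)(x)=T_x\{x\}=\{0\}$.
\item You stay in $\M$. You identify $T_xF=\ker D\pi(x)=N_xS$ by differentiating $\pi=\pi\circ\Phi^t$ at $x$ and letting $e^{-t\nabla^2 f(x)}\to0$ on $N_xS$. You then restrict the ambient bound \eqref{eq:MB}.
\end{itemize}

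The paper's route is shorter and uses the closure of the class under passage to fibers, at the price of invoking the Rebjock--Boumal input a second time, on $F$. Yours needs only the ambient Morse--Bott property. It also yields the extra geometric fact that each fiber meets $S$ orthogonally at its base point.

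In your argument it is worth stating that $N_xS$ is invariant under $\nabla^2 f(x)$. This holds because $N_xS$ is the orthogonal complement of the kernel of a self-adjoint operator, and it is what lets $e^{-t\nabla^2 f(x)}$ act on $N_xS$.

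Similarly, you derive the submanifold structure of $F$ and the identity $\nabla(f|_F)=\nabla f|_F$ from the submersion property and flow invariance, where the paper cites $\bcr$ Proposition 4.4 verbatim.

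Finally, your appeal to Remark~\ref{rem:coercive} is unnecessary. $F$ is closed in the complete manifold $\M$, hence complete in the induced metric, as the paper records.
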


\begin{proof}
That $F$ is a contractible, properly embedded smooth submanifold, complete in the
induced metric, and that $\nabla(f|_F) = \nabla f|_F$ with critical set
$F\cap S = \{x\}$, is verbatim $\bcr$ Proposition 4.4. Since $x\in F$ and
$f(x)=f^*$ (Lemma~\ref{lem:traj}), we have $\inf_F f = f^*$, so the inequality
\eqref{eq:GD} for $f|_F$ on $F$ is literally the inequality for $f$ restricted
to $F$: the class is closed under passage to fibers. Coercivity is where the new
estimate enters: for $y\in F$ we have $\pi(y)=x$, so by Lemma~\ref{lem:traj}
$\dist_\M(y,x) \le \Psi(h(y))$; hence
$\{y\in F : f(y)\le f^*+c\}$ is bounded in $\M$ and closed in $\M$ ($F$ being
properly embedded), therefore compact, therefore compact in $F$. Applying
Lemma~\ref{lem:MB} to $f|_F$ on $F$, whose critical set is the single point $x$,
gives $\ker\nabla^2(f|_F)(x) = T_x\{x\} = \{0\}$, i.e.\ positive definiteness.
Now Theorem~\ref{thm:BCR33} applies to $f|_F$.
\end{proof}

\section{The theorems that survive}
\label{sec:survive}

We now discuss how the previous considerations lead to the main results.
Throughout this section, $f\colon\M\to\R$ is smooth
and bounded below and satisfies \eqref{eq:GD} for some positive definite
$\alpha$ obeying \eqref{eq:A}; $S$ is its set of minimizers, $m = \dim S$,
$k = n-m$, and $\pi$ is as in \eqref{eq:pi}.

\begin{thm}[Single minimizer; $\bcr$ Theorem 1.1]\label{thm:one}
If $f$ has a unique critical point $x^*$, then there is a diffeomorphism
$\phi\colon\M\to\R^n$ such that $f(x) = f(x^*) + \|\phi(x)\|^2$ for all $x$. In
particular $\M$ is diffeomorphic to $\R^n$.
\end{thm}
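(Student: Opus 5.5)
The plan is to reduce to Theorem~\ref{thm:BCR33} (BCR Theorem 3.3). That result needs three things: $f$ coercive, $x^*$ the unique critical point, and $\nabla^2 f(x^*)\succ0$. Uniqueness of the critical point is a hypothesis. So the work is to obtain nondegeneracy and coercivity from \eqref{eq:GD} and \eqref{eq:A}, using the preliminaries of Section~\ref{sec:preliminaries}.

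By Lemma~\ref{lem:traj}, the critical set coincides with the minimizer set, so $S=\{x^*\}$ and $f(x^*)=f^*$.

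\textbf{Nondegeneracy.} Lemma~\ref{lem:MB} says that $S$ is a submanifold with $\ker\nabla^2f(x^*)=T_{x^*}S$. Here $S$ is a point, so this kernel is $\{0\}$. Moreover the Hessian is $\succeq\mu\,\Id$ on $N_{x^*}S=T_{x^*}\M$, hence positive definite.

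\textbf{Coercivity.} The estimate \eqref{eq:growth} gives $\dist(x,x^*)=\dist(x,S)\le\Psi(h(x))$ for every $x$. Since $\Psi$ is increasing (Lemma~\ref{lem:Psi}(ii)), each sublevel set $\{h\le c\}$ lies in the closed ball of radius $\Psi(c)$ about $x^*$. That ball is compact by Hopf--Rinow, as $\M$ is complete. The sublevel set is closed, so it is compact. Thus $f$ is coercive.

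Theorem~\ref{thm:BCR33} now yields a diffeomorphism $\phi\colon\M\to\R^n$ with $f=f(x^*)+\|\phi\|^2$, and in particular $\M\cong\R^n$.

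There is an alternative route through the fiber picture. With $S=\{x^*\}$, the end-point map $\pi$ is constant, so $\M=\pi^{-1}(x^*)$, and Proposition~\ref{prop:fiber} with $k=n$ gives the result directly. The direct argument above is cleaner, though.

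The only substantive step is coercivity. It is exactly where the finiteness of $\Psi$ enters, and hence where positive definiteness of $\alpha$ together with \eqref{eq:A} is used. I expect no real obstacle once Lemma~\ref{lem:traj} is available; the one point to check is that $\Psi(c)<\infty$ for every $c$, which is Lemma~\ref{lem:Psi}(i).
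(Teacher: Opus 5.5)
Your proof is correct and follows the paper's own argument: the growth estimate of Lemma~\ref{lem:traj} puts every sublevel set inside a closed ball about $x^*$, which is compact by Hopf--Rinow, so $f$ is coercive. Lemma~\ref{lem:MB} gives positive definiteness of the Hessian at $x^*$, and Theorem~\ref{thm:BCR33} then finishes the proof.
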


\begin{proof}
By Lemma~\ref{lem:traj}, sublevel sets of $f$ lie in closed balls about $x^*$,
which are compact, so $f$ is coercive; by Lemma~\ref{lem:MB} the Hessian at
$x^*$ is positive definite. Apply Theorem~\ref{thm:BCR33}.
\end{proof}

\begin{thm}[Fiber bundle with control on $f$; $\bcr$ Theorem 4.6]\label{thm:bundle}
Fix $\bar x\in S$ and let $F=\pi^{-1}(\bar x)$. Let $U\subseteq S$ be a
contractible open neighborhood of $\bar x$. There exists a map
$\phi\colon\pi^{-1}(U)\to F$ such that
$\psi := (\pi,\phi)\colon\pi^{-1}(U)\to U\times F$ is a diffeomorphism and
$f(y) = f(\phi(y))$ for all $y\in\pi^{-1}(U)$. Thus $\pi$ is a smooth fiber
bundle, trivial if $S$ (equivalently $\M$) is contractible.
\end{thm}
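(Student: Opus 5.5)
The proof will follow BCR Theorem 4.6, using the horizontal-transport construction. The one change from BCR is in the escape argument, where Lemma~\ref{lem:traj} replaces global quadratic growth. The plan runs as follows.

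\emph{Step 1 (set-up).} By Proposition~\ref{prop:pi}, $\pi\colon\M\to S$ is a smooth submersion, so $\ker D\pi$ is a smooth distribution, and its orthogonal complement $H$ is a horizontal distribution with $D\pi|_{H_y}\colon H_y\to T_{\pi(y)}S$ an isomorphism. Since $\pi=\pi\circ\Phi^t$, the function $f$ is constant along flow lines within fibers but not across them. So the lift must be chosen to preserve $f$. Following BCR, I will modify the horizontal lift by adding a vertical component so that $df$ vanishes on it. This is possible because $\nabla f(y)$ is vertical: differentiating $\pi\circ\Phi^t=\pi$ at $t=0$ gives $D\pi(y)\nabla f(y)=0$. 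Off $S$, the vertical space therefore contains $\nabla f$, and one can subtract the appropriate multiple of $\nabla f/\|\nabla f\|^2$. At $S$ no correction is needed, because $df=0$ there.

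The resulting field must be checked for smoothness near $S$, where the denominator vanishes. BCR handle this using the Morse--Bott property, which is available here through Lemma~\ref{lem:MB}. Alternatively, one can pass to the level-set-preserving lift, which is smooth away from $S$, and take its extension by the ordinary horizontal lift on $S$. I will follow their construction verbatim.

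\emph{Step 2 (transport).} Since $U$ is contractible, fix a smooth contraction, or equivalently smooth paths $c_u$ from $\bar x$ to $u$ depending smoothly on $u$. For $y\in\pi^{-1}(U)$, lift $c_{\pi(y)}$ backwards from $y$ along the $f$-preserving lift to a point $\phi(y)\in F$. By construction, $f(\phi(y))=f(y)$. The inverse map is obtained by forward transport from $(u,z)\in U\times F$. Smoothness of both maps follows from smooth dependence of ODE solutions on parameters, so $\psi=(\pi,\phi)$ is a diffeomorphism.

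\emph{Step 3 (main obstacle: lifts exist on all of $[0,1]$).} Along a lift of a compact curve $c([0,1])\subseteq S$ starting at height $\bar f$, the lift stays in $B\cap C$, where $C:=\pi^{-1}(c([0,1]))\cap\{f=\bar f\}$ and $B$ is a suitable bounded set. By Lemma~\ref{lem:traj}, every $z$ in that level set satisfies $\dist(z,\pi(z))\le\Psi(h(z))=\Psi(\bar f-f^*)$. Hence $C$ lies within distance $\Psi(\bar f-f^*)$ of the compact set $c([0,1])$, so it is bounded. It is also closed, so it is compact by completeness (Hopf--Rinow), or directly compact in the coercive setting of Remark~\ref{rem:coercive}.

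The lifted ODE therefore cannot escape in finite time, so it is defined on all of $[0,1]$. This is exactly where BCR used quadratic growth, and the growth estimate \eqref{eq:growth} in terms of $\Psi$ suffices. Triviality when $S$ is contractible follows by taking $U=S$; by Proposition~\ref{prop:pi}, $S$ is contractible exactly when $\M$ is.
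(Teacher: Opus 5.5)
Your proposal is correct and follows the paper's route: reuse the horizontal-transport construction of BCR Theorem 4.6 unchanged, and replace only the no-escape step, using the pointwise bound $\dist(z,\pi(z))\le\Psi(h(z))$ of Lemma~\ref{lem:traj} on the level set $\{f=\bar f\}$ to keep the lift in a compact set. Two corrections to your commentary: first, no vertical correction of the horizontal lift is needed, because $\nabla f\in\ker D\pi$ makes every vector orthogonal to $\ker D\pi$ automatically $df$-null, so your correction term is identically zero and the smoothness concern at $S$ (and its attribution to a Morse--Bott argument in BCR) is moot; second, the estimate doing the work is that pointwise bound, not \eqref{eq:growth}, which controls only $\dist(z,S)$ and would not bound $C$ when $S$ is noncompact.
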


\begin{proof}
The proof of $\bcr$ Theorem 4.6 constructs $\phi(y)=\gamma(1)$ from the
horizontal-lift ODE and uses the relation between $\pi$ and $f$ in exactly one
place: to show that $\gamma$ does not escape on $[0,1]$. There, $f$ is constant
along $\gamma$, say equal to $\bar f$, and one needs a bound on
$\dist(\gamma(t), c(t)) = \dist(\gamma(t), \pi(\gamma(t)))$. Lemma~\ref{lem:traj}
supplies it directly:
\[
  \dist\bigl(\gamma(t),\bar x\bigr)
  \ \le\ \dist\bigl(\gamma(t),\pi(\gamma(t))\bigr) + \dist\bigl(c(t),c(1)\bigr)
  \ \le\ \Psi\bigl(\bar f - f^*\bigr) + \ell ,
\]
with $\ell$ the length of $c$. Everything else is unchanged. Note that what is
needed here is the pointwise estimate $\dist(x,\pi(x)) \le \Psi(h(x))$, which is
slightly stronger than the growth estimate \eqref{eq:growth}; this is why
Lemma~\ref{lem:traj} is stated in that form.
\end{proof}

\begin{thm}[Local and global normal form; $\bcr$ Theorems 4.7 and 1.2]\label{thm:main}
The set $S$ is a connected, properly embedded smooth submanifold of $\M$. For
every contractible open $U\subseteq S$ there is a diffeomorphism
$\psi\colon\pi^{-1}(U)\to U\times\R^k$ of the form $\psi=(\pi,\phi)$ with
\[
  f(y) = f^* + \|\phi(y)\|^2 \qquad \text{for all } y\in\pi^{-1}(U) .
\]
If $\M$ is contractible, this holds with $U=S$ and $\pi^{-1}(U)=\M$: there is a
diffeomorphism $\psi=(\pi,\phi)\colon\M\to S\times\R^k$ with
$f = f^* + \|\phi\|^2$, so that $f$ is a nonlinear least-squares function with
$\phi$ a submersion; and if $f$ is not constant then $\M$ is diffeomorphic to
$\R^n$.
\end{thm}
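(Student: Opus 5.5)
The plan is to assemble the statement from Proposition~\ref{prop:pi}, Proposition~\ref{prop:fiber} and Theorem~\ref{thm:bundle}, exactly as $\bcr$ Theorems 4.7 and 1.2 are assembled from their Propositions 4.2--4.4 and Theorem 4.6.

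\emph{Structure of $S$.} Proposition~\ref{prop:pi} already says that $S$ is connected, smooth and properly embedded, with $\M$ strongly deformation retracting onto it. That settles the first sentence.

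\emph{Local normal form.} Fix a contractible open $U\subseteq S$ and a point $\bar x\in U$, and let $F=\pi^{-1}(\bar x)$. Theorem~\ref{thm:bundle} gives a diffeomorphism $(\pi,\phi_1)\colon\pi^{-1}(U)\to U\times F$ with $f=f\circ\phi_1$. Proposition~\ref{prop:fiber} gives a diffeomorphism $\phi_F\colon F\to\R^k$ with $f|_F=f^*+\|\phi_F\|^2$. I then set $\phi:=\phi_F\circ\phi_1$ and $\psi:=(\pi,\phi)=(\Id_U\times\phi_F)\circ(\pi,\phi_1)$. This is a composition of diffeomorphisms, and
\[
f(y)=f(\phi_1(y))=f^*+\|\phi_F(\phi_1(y))\|^2=f^*+\|\phi(y)\|^2 .
\]
The fiber dimension $k=n-m$ comes from Proposition~\ref{prop:fiber}, since $\pi$ is a submersion onto the $m$-dimensional $S$.

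\emph{Global case.} If $\M$ is contractible, Proposition~\ref{prop:pi} shows $S$ is contractible, so I may take $U=S$, giving $\pi^{-1}(U)=\M$. Because $\psi$ is a diffeomorphism, its second component $\phi$ is a submersion, so $f$ is a nonlinear least-squares function. For the final claim, a contractible manifold $S$ times $\R^k$ need not be $\R^n$ a priori, so this is the step needing care. I would argue as $\bcr$ Theorem 1.2 does:
\begin{itemize}
\item if $f$ is not constant then $k\ge1$, by the argument in Remark~\ref{rem:L};
\item $S$ is contractible and $S\times\R^k\cong\M$ with $k\ge1$;
\item the topological input of $\bcr$ Theorem 1.3, which Section~\ref{sec:free} notes is untouched by the P\L{} hypothesis, shows that $S\times\R^k$ is diffeomorphic to $\R^n$ whenever $S$ is contractible and $k\ge1$.
\end{itemize}
I expect this last citation to be the only point requiring attention; everything else is bookkeeping. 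Even there, nothing new is needed beyond checking that the quoted topological theorem uses only contractibility of $S$ and $k\ge1$.
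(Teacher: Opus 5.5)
Your proposal is correct and is the paper's proof written out in full. The paper simply combines Propositions~\ref{prop:pi}, \ref{prop:fiber} and Theorem~\ref{thm:bundle} as in $\bcr$ Section 4.3, which is your composition $(\Id_U\times\phi_F)\circ(\pi,\phi_1)$, and it invokes $\bcr$ Theorem 1.3 for the last claim just as you do, with your $k\ge1$ supplied by the dimension argument of Remark~\ref{rem:L}.
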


\begin{proof}
Combine Propositions~\ref{prop:pi}, \ref{prop:fiber} and
Theorem~\ref{thm:bundle} exactly as in $\bcr$ Section 4.3, and invoke $\bcr$
Theorem 1.3 for the last claim.
\end{proof}

The corollaries of $\bcr$ Section 1.5 are all deduced from Theorem~\ref{thm:main}
and from purely topological input, so they carry over without further comment.
We list them for completeness, assuming $\M$ contractible where $\bcr$ does.

\begin{cor}[$\bcr$ Corollary 4.5]
$\pi\colon\M\to S$ is a smooth fiber bundle with fibers diffeomorphic to $\R^k$.
\end{cor}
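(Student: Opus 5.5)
The corollary (Corollary 4.5 of \cite{BCR2026}) says that $\pi\colon\M\to S$ is a smooth fiber bundle with fibers diffeomorphic to $\R^k$. I would assemble it from Proposition~\ref{prop:pi}, Proposition~\ref{prop:fiber} and Theorem~\ref{thm:bundle}, with no new analysis.

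First, Proposition~\ref{prop:pi} gives that $\pi$ is a smooth submersion onto the properly embedded submanifold $S$. It also gives that $S$ is connected, since it is a deformation retract of the connected $\M$. Second, Proposition~\ref{prop:fiber} gives, for each $\bar x\in S$, a diffeomorphism $F_{\bar x}=\pi^{-1}(\bar x)\cong\R^k$ with $k=n-m$. Third, local triviality: every point $\bar x\in S$ has a contractible open neighborhood $U\subseteq S$, for instance a coordinate ball in the manifold $S$. Theorem~\ref{thm:bundle} then provides a diffeomorphism $\psi=(\pi,\phi)\colon\pi^{-1}(U)\to U\times F_{\bar x}$ commuting with the projections to $U$. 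Composing with the diffeomorphism $F_{\bar x}\cong\R^k$ yields local trivializations with model fiber $\R^k$. Since the sets $U$ cover $S$, $\pi$ is a smooth fiber bundle with fiber $\R^k$.

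The fibers over different points all have the same dimension $k$, because $\pi$ is a submersion onto the connected manifold $S$. So a single model fiber $\R^k$ serves everywhere, and no case split by component is needed.

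The only point deserving care is that the corollary is stated under the contractibility assumption on $\M$, while the argument above does not actually need it. Contractibility only upgrades local triviality to global triviality (Theorem~\ref{thm:main}), so I would remark that the corollary holds without it. I expect no real obstacle: everything quantitative, namely the escape bound for the horizontal lift, was already handled in the proof of Theorem~\ref{thm:bundle} via Lemma~\ref{lem:traj}.
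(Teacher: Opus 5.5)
Your argument is correct and is essentially the paper's route: the corollary is read off from the local trivializations $\psi=(\pi,\phi)\colon\pi^{-1}(U)\to U\times F$ over contractible neighborhoods $U$ supplied by Theorem~\ref{thm:bundle}, which itself already concludes that $\pi$ is a smooth fiber bundle, and whose version with fiber $\R^k$ is the local part of Theorem~\ref{thm:main}; the fiber is identified with $\R^k$ by Proposition~\ref{prop:fiber}. One small correction: the corollary as stated carries no contractibility hypothesis, since contractibility is needed only for global triviality, so your closing remark matches the paper's reading rather than extending it.
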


\begin{cor}[Compact implies point; $\bcr$ Corollary 1.4]
If $\M$ is contractible and $S$ is compact, then $S$ is a singleton, and
Theorem~\ref{thm:one} applies.
\end{cor}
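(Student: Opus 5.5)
The plan is to reduce to Theorem~\ref{thm:main}, which under our standing hypotheses gives, for contractible $\M$, a diffeomorphism $\psi=(\pi,\phi)\colon\M\to S\times\R^k$. In particular $S$ is homotopy equivalent to $\M$, hence contractible; this also follows directly from Proposition~\ref{prop:pi}, since $\M$ strongly deformation retracts onto $S$. By Theorem~\ref{thm:main} and Proposition~\ref{prop:pi}, $S$ is moreover a connected, properly embedded smooth submanifold of $\M$ without boundary, of dimension $m$. So the claim reduces to a purely topological fact: a compact, connected, contractible smooth manifold without boundary is a point.

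I will prove this fact by the standard mod-$2$ homology argument. If $m\ge1$, a closed connected $m$-manifold has $H_m(S;\Z/2\Z)\cong\Z/2\Z\neq0$, because every closed manifold is $\Z/2\Z$-orientable and carries a fundamental class. A contractible space, however, has vanishing homology in all positive degrees. This is a contradiction, so $m=0$. A connected $0$-dimensional manifold is a single point, so $S=\{x^*\}$.

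Finally, Lemma~\ref{lem:traj} gives $S=\{\nabla f=0\}$. Hence $x^*$ is the unique critical point of $f$, and Theorem~\ref{thm:one} applies.

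I expect no real obstacle; the only point needing care is to justify that $S$ is a closed manifold, meaning compact and without boundary. It is without boundary because it is an embedded submanifold of the boundaryless $\M$ (Lemma~\ref{lem:MB}), and it is compact by hypothesis. Alternatively, I could simply quote $\bcr$ Corollary~1.4 together with $\bcr$ Theorem~1.3, since these are purely topological and the paper already notes that they carry over.
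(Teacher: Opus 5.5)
Your proof is correct and is essentially the argument the paper has in mind when it says this corollary follows from Theorem~\ref{thm:main} plus purely topological input. $S$ is a connected, properly embedded, boundaryless submanifold, contractible because $\M$ deformation retracts onto it (Proposition~\ref{prop:pi}), so the nonvanishing of $H_m(S;\Z/2)$ for a closed connected manifold of positive dimension forces $m=0$, and then $S=\{\nabla f=0\}$ from Lemma~\ref{lem:traj} makes that point the unique critical point, as Theorem~\ref{thm:one} requires.
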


\begin{cor}[$\bcr$ Corollaries 1.8 and 4.8]
If $U\subseteq S$ is open and diffeomorphic to $\R^m$, there is a
diffeomorphism $\xi\colon\pi^{-1}(U)\to\R^n$ with
$f(\xi^{-1}(y)) = f^* + y_{m+1}^2+\cdots+y_n^2$. In particular, if $\M$ is
contractible, then $f$ can be deformed into a pure quadratic if and only if $S$
is diffeomorphic to $\R^m$.
\end{cor}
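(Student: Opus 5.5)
The plan is to deduce this corollary from Theorem~\ref{thm:main} together with Theorem~\ref{thm:bundle}, just as $\bcr$ does, since the analytic content has already been transferred.

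\emph{First claim.} Let $U\subseteq S$ be open with a diffeomorphism $\eta\colon U\to\R^m$. Then $U$ is contractible, so Theorem~\ref{thm:main} gives a diffeomorphism $\psi=(\pi,\phi)\colon\pi^{-1}(U)\to U\times\R^k$ with $f=f^*+\|\phi\|^2$ on $\pi^{-1}(U)$. Set
\[
  \xi(y) := \bigl(\eta(\pi(y)),\,\phi(y)\bigr)\in\R^m\times\R^k=\R^n .
\]
This is the diffeomorphism $\psi$ followed by $\eta\times\Id_{\R^k}$, so it is itself a diffeomorphism $\pi^{-1}(U)\to\R^n$. Now take $y\in\R^n$. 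The last $k$ coordinates of $y$ equal $\phi(\xi^{-1}(y))$, so
\[
  f(\xi^{-1}(y))=f^*+y_{m+1}^2+\cdots+y_n^2 .
\]
The only care needed is to check that Theorem~\ref{thm:main}'s local statement needs only contractibility of $U$, which it does.

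\emph{``If'' direction of the second claim.} Assume $\M$ is contractible and $S\cong\R^m$. Apply the first claim with $U=S$; then $\pi^{-1}(S)=\M$, and $f$ becomes the quadratic $f^*+\sum_{i>m}y_i^2$.

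\emph{``Only if'' direction.} Suppose a diffeomorphism $\xi\colon\M\to\R^n$ makes $f\circ\xi^{-1}$ a convex quadratic, meaning $q(y)=f^*+\langle y-y_0,A(y-y_0)\rangle$ with $A\succeq0$ (up to affine reparametrization of the minimum). The minimizer set of $q$ is the affine subspace $y_0+\ker A$. Since $\xi$ maps $S$ onto the minimizer set of $q$, $S$ is diffeomorphic to $\R^{\dim\ker A}$. Comparing dimensions through $\xi$ gives $\dim\ker A=m$.

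The only point needing thought is what ``deformed into a pure quadratic'' means. I will read it, as $\bcr$ does, as composing $f$ with a diffeomorphism to obtain a quadratic form plus a constant. Under that reading the argument above is complete and no new estimate beyond Theorem~\ref{thm:main} is needed.
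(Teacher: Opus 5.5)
Your argument is correct and follows the paper's route. The paper gives no separate proof; it says only that this corollary follows from Theorem~\ref{thm:main} as in $\bcr$, and that deduction is exactly your composition $(\eta\times\Id)\circ\psi$, with the converse read off from the minimizer set of the quadratic. Your unstated step that the quadratic must be convex is justified because $f$ is bounded below.
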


\begin{cor}[$\bcr$ Corollary 1.9]
If $\M$ is contractible, the lifted function $g(x,t) = f(x)$ on $\M\times\R$ can
be deformed into a pure quadratic.
\end{cor}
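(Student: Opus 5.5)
The plan is to reduce to the global statement of Theorem~\ref{thm:main} applied to an auxiliary manifold, as in $\bcr$. Put $\M' := \M\times\R$ with the product metric, which is complete, connected and contractible because $\M$ is. Define $g(x,t)=f(x)$, so $\inf g = f^*$ and $h_g(x,t)=h(x)$. Since $\nabla g(x,t) = (\nabla f(x),0)$, the norm of the gradient does not change, and $g$ satisfies \eqref{eq:GD} with the same witness $\alpha$ obeying \eqref{eq:A}. The minimizer set of $g$ is $S' = S\times\R$, of dimension $m+1$ and codimension $k$.

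First I would show that $S'$ is diffeomorphic to $\R^{m+1}$. By Proposition~\ref{prop:pi}, $S$ is a strong deformation retract of $\M$, so $S$ is contractible. A contractible manifold need not be Euclidean: Whitehead-type manifolds exist in dimension $\ge3$. However, the product of a contractible open $m$-manifold with $\R$ is diffeomorphic to $\R^{m+1}$. This is the classical Stallings/McMillan--Zeeman fact, and it is the topological input $\bcr$ uses for its Corollary~1.9. I take it from $\bcr$, since the paper says the topological input is untouched. The compact case $m=0$ is covered as well: a contractible closed manifold is a point, and then $S'\cong\R$. A boundaryless contractible $S$ of positive dimension is noncompact, so the product theorem applies.

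Next I would apply the preceding corollary ($\bcr$ Corollaries 1.8 and 4.8) to $g$ on $\M'$ with $U = S' \cong \R^{m+1}$. Since $U=S'$, we have $\pi_g^{-1}(U)=\M'$. This gives a diffeomorphism $\xi\colon\M'\to\R^{n+1}$ with $g(\xi^{-1}(y)) = f^* + y_{m+2}^2+\cdots+y_{n+1}^2$, which says exactly that $g$ can be deformed into a pure quadratic.

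The main obstacle is the topological step $S\times\R\cong\R^{m+1}$. Everything else is the check that the hypothesis passes to $g$, which is immediate. If I wanted to avoid quoting the Stallings-type theorem separately, I would instead use the trivialization of Theorem~\ref{thm:main}, $\M'\cong S'\times\R^k$ with $g=f^*+\|\phi\|^2$. The same product theorem would then still be needed to identify $S'$ with Euclidean space.
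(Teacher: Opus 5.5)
Your proposal is correct and follows the route the paper intends when it says this corollary carries over from \cite{BCR2026} via Theorem~\ref{thm:main} and the topological input. The lifted $g$ on $\M\times\R$ is in the class with the same witness, its minimizer set $S\times\R$ is diffeomorphic to $\R^{m+1}$ by the product theorem for contractible manifolds ($\bcr$ Theorem~1.3), and the pure-quadratic corollary applied with $U$ equal to the whole minimizer set finishes the proof; your alternative via the trivialization $\M\cong S\times\R^k$ is the same argument rearranged and needs the same product theorem.
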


\begin{thm}[Hidden convexity; $\bcr$ Theorem 1.10]\label{thm:gconvex}
If $\M$ is contractible, it admits a complete Riemannian metric
$\langle\cdot,\cdot\rangle_2$ with respect to which $f$ is geodesically convex
and globally $1$-P\L{}. If (and only if) $S$ is diffeomorphic to $\R^m$, the
metric may be chosen so that $\M$ is isometric to Euclidean space.%
\footnote{The clause ``globally $1$-P\L{}'' is already in the conclusion of $\bcr$ Theorem
1.10(a) (\cite{BCR2026}, \S1.5.3); it is easy to read past, because the section
heading mentions only the geodesic convexity, and the clause itself appears only
in passing in the $\bcr$ \S6 proof, where it follows from
$\nabla(f\circ\psi^{-1})(w,z) = (0,z)$. What is \emph{not} in \cite{BCR2026} is
any statement about the enlarged class of this note, which that paper has no
occasion to consider. Remark~\ref{rem:collapse} below is therefore a consequence
of the theorem above, obtained by feeding it the weaker hypothesis, and not a
claim to be found in \cite{BCR2026}. The nearest that paper comes to the same
thought is the opening of its \S5, where the authors note that in $\bcr$ Theorem
1.5 the metric is given and cannot be altered, ``which would make the problem
substantially easier''; that is an aside about their construction, not about
function classes.}
\end{thm}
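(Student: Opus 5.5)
The plan is to transport the Euclidean structure of $S\times\R^k$ (respectively $\R^m\times\R^k$) back to $\M$ through the diffeomorphism of Theorem~\ref{thm:main}. With $\M$ contractible, Theorem~\ref{thm:main} gives a diffeomorphism $\psi=(\pi,\phi)\colon\M\to S\times\R^k$ with $f=f^*+\|\phi\|^2$. So in the coordinates $(w,z)\in S\times\R^k$ the function reads $\tilde f(w,z)=f^*+\|z\|^2$. This expression depends on neither the metric nor on $\alpha$, and all of the weakened hypothesis has been absorbed into the existence of $\psi$. Everything that follows is the argument of $\bcr$ \S6, which never refers back to P\L{} once the normal form is in hand.

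First I will choose a complete Riemannian metric $g_S$ on $S$. One exists because every manifold admits one (Nomizu--Ozeki). On $S\times\R^k$ I then take the product metric $g_S\oplus\langle\cdot,\cdot\rangle_{\R^k}$ and define $\langle\cdot,\cdot\rangle_2:=\psi^*(g_S\oplus\mathrm{eucl})$. The metric $\langle\cdot,\cdot\rangle_2$ is complete, since a product of complete metrics is complete and $\psi$ is an isometry onto it by construction.

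Next I will check convexity and the P\L{} property in these coordinates. Geodesics of a product metric are pairs $(w(t),z(t))$ with $w$ a $g_S$-geodesic and $z$ affine. Along such a curve $\tilde f=f^*+\|z(t)\|^2$ is a convex function of $t$, so $f$ is geodesically convex for $\langle\cdot,\cdot\rangle_2$. For the gradient, $\nabla\tilde f(w,z)=(0,2z)$, so $\|\nabla\tilde f\|^2=4\|z\|^2=4(\tilde f-f^*)\ge 2\cdot1\cdot(\tilde f-f^*)$, which is global $1$-P\L{}. If one prefers to match $\bcr$'s normalization $\nabla=(0,z)$, the metric on the $\R^k$ factor can be scaled by $2$. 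The inequality transfers to $\M$ because $\psi$ is an isometry.

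The last part is the equivalence with Euclidean space. If $S\cong\R^m$, I will take $g_S$ to be the pullback of the Euclidean metric under that diffeomorphism, and then $(\M,\langle\cdot,\cdot\rangle_2)$ is isometric to $\R^m\times\R^k=\R^n$ with its flat metric. Conversely, suppose $\M$ is isometric to Euclidean space for some metric in which $f$ is geodesically convex. Then $S$, being the minimizer set of a convex function on $\R^n$, is a closed convex set. By Theorem~\ref{thm:main} it is also an $m$-dimensional embedded submanifold, and a convex embedded submanifold without boundary is an affine subspace, hence diffeomorphic to $\R^m$. I expect this converse to be the only step needing any care: one must rule out $S$ having boundary or being a lower-dimensional convex body. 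That is handled by $S$ being a boundaryless submanifold, and a nonempty relatively open convex set in its affine hull is convex and diffeomorphic to $\R^m$ whether or not it is all of the affine subspace.
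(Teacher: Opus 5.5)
Your proposal is correct and is essentially the paper's proof, which simply imports $\bcr$ Section 6: pull back a complete product metric on $S\times\R^k$ through the diffeomorphism $\psi=(\pi,\phi)$ of Theorem~\ref{thm:main}, and handle the P\L{} constant by a cosmetic rescaling, as you do. Your converse is the standard argument for the ``only if'' clause, which the paper leaves to $\bcr$: in Euclidean coordinates the minimizer set of a convex function is closed and convex, and a closed convex boundaryless submanifold is an affine subspace.
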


\begin{proof}
Verbatim $\bcr$ Section 6: the proof needs only the diffeomorphism
$\psi=(\pi,\phi)$ of Theorem~\ref{thm:main} and the fact that
$f\circ\psi^{-1}(w,z) = f^* + \tfrac12\|z\|^2$ after cosmetic rescaling, then
pulls back the product metric.
\end{proof}

Theorem~\ref{thm:gconvex} deserves a comment, because it explains in one line
why the generalization was bound to be free. It is tempting to summarize the
situation by saying that our class ``collapses'' to the P\L{} class under a
change of metric, but that phrasing is loose in two ways, so let us state
exactly what is meant.

\begin{rem}[Equality of classes, after quantifying over metrics]\label{rem:collapse}
Let $\M$ be a contractible smooth manifold admitting a complete Riemannian
metric, and let $f\colon\M\to\R$ be smooth and bounded below, with
$f^*=\inf f$. Then
\[
  \left\{\begin{array}{c}
    f \text{ satisfies } \eqref{eq:GD}+\eqref{eq:A} \text{ for some}\\
    \text{positive definite } \alpha \text{ and some complete metric}
  \end{array}\right\}
  \;=\;
  \left\{\begin{array}{c}
    f \text{ is globally } 1\text{-P\L{} with respect}\\
    \text{to some complete metric}
  \end{array}\right\} .
\]
Indeed, ``$\supseteq$'' holds because global $\mu$-P\L{} is \eqref{eq:GD} with
$\alpha(s)=\sqrt{2\mu s}$, and ``$\subseteq$'' is Theorem~\ref{thm:gconvex}.
Two points are essential here, and the word ``collapses'' hides both of them.

First, the metric $\langle\cdot,\cdot\rangle_2$ produced by
Theorem~\ref{thm:gconvex} is built from $\psi$, hence depends on $f$. So this is
\emph{not} an assertion that the two classes agree relative to any one fixed
metric (relative to a fixed metric the inclusion is strict, by
Examples~\ref{ex:bounded} and \ref{ex:log}) and the distinction is exactly the
one \cite{BCR2026} draws attention to at the start of its \S5, where the metric
in $\bcr$ Theorem 1.5 is given in advance and may not be altered.

Second, contractibility of $\M$ is needed, since without it
Theorem~\ref{thm:main} yields only local trivializations and
Theorem~\ref{thm:gconvex} is unavailable. (See $\bcr$ Example 7.1 for a
globally P\L{} and g-convex function on a cylinder, showing that part (a) of
$\bcr$ Theorem 1.10 is in any case not an equivalence.)
\end{rem}

With Remark~\ref{rem:collapse} in hand, the extra generality is invisible to any
conclusion that is invariant under a change of complete metric, which is to
say, to every conclusion of \cite{BCR2026} except the metric ones. Let us add
that this phenomenon is not an artifact of the present enlargement. Already
within the class of \cite{BCR2026}, passing to $\langle\cdot,\cdot\rangle_2$
normalizes $\mu$ to $1$ and supplies geodesic convexity for free, so
``P\L{} with respect to some complete metric'' was always a much coarser
invariant than ``P\L{} with respect to a given metric''. What we have done is to
widen a known equivalence class, not to exhibit a new one.

\subsection{The achievable geometry does not grow}

There is a converse side to the story that we find worth emphasizing. One might
expect that enlarging the function class enlarges the family of minimizer sets
that can be realized. It does not.

\begin{cor}[Characterization of $S$; $\bcr$ Corollary 1.6]\label{cor:charS}
Let $\tilde S$ be a smooth manifold, fix $n > \dim\tilde S$, and endow
$\M=\R^n$ with a complete Riemannian metric. The following are equivalent:
\begin{enumerate}
\item[(a)] $\tilde S$ is diffeomorphic to the minimizer set of a smooth
$f\colon\M\to\R$ satisfying \eqref{eq:GD} for some positive definite $\alpha$
obeying \eqref{eq:A};
\item[(a$'$)] $\tilde S$ is diffeomorphic to the minimizer set of a smooth,
globally P\L{} $f\colon\M\to\R$;
\item[(b)] $\tilde S$ is contractible.
\end{enumerate}
\end{cor}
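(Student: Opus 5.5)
The equivalences will be proved by going around the cycle (a$'$)$\Rightarrow$(a)$\Rightarrow$(b)$\Rightarrow$(a$'$). Only the middle implication uses the work of this note; the other two are either trivial or imported verbatim from \cite{BCR2026}.

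\emph{(a$'$)$\Rightarrow$(a).} This is immediate. A globally $\mu$-P\L{} function satisfies \eqref{eq:GD} with $\alpha(s)=\sqrt{2\mu s}$. This $\alpha$ is positive definite and satisfies \eqref{eq:A} with any $h_0$.

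\emph{(a)$\Rightarrow$(b).} Let $f$ be as in (a) on $\M=\R^n$ with its given complete metric. The hypotheses of Section~\ref{sec:preliminaries} are then in force. By Proposition~\ref{prop:pi}, $\M$ strongly deformation retracts onto the minimizer set $S$ via the end-point map $\pi$. Since $\R^n$ is contractible, $S$ is homotopy equivalent to a point, so $S$ is contractible. Contractibility is a homotopy invariant, so any $\tilde S$ diffeomorphic to $S$ is contractible as well. No use of $n>\dim\tilde S$ or of Theorem~\ref{thm:main} is needed in this direction.

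\emph{(b)$\Rightarrow$(a$'$).} This is the constructive direction, and the generalized class plays no role in it. It is exactly the implication (b)$\Rightarrow$(a) of $\bcr$ Corollary 1.6, which rests on $\bcr$ Theorem 1.3 and Theorem 1.5. As noted in Section~\ref{sec:free}, those results are untouched by anything here. Given a contractible $\tilde S$ with $\dim\tilde S<n$, they produce a smooth function on $\R^n$ with the prescribed complete metric that is globally P\L{} and whose minimizer set is diffeomorphic to $\tilde S$. The steps are: embed $\tilde S$ as a properly embedded submanifold whose normal bundle is trivial (automatic over a contractible base), and then build $f$ with the metric held fixed. I will cite this step rather than reprove it.

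The main obstacle is this last step, and it is the only place where the prescribed metric matters. In $\bcr$ Theorem 1.5 the metric is given in advance and cannot be altered (cf.\ Remark~\ref{rem:collapse}), so I cannot take the shortcut of simply choosing a convenient metric. The plan is to check that the cited statement holds for \emph{every} complete metric on $\R^n$, and then quote it verbatim. Since (a$'$) already implies (a), producing a globally P\L{} example suffices for both (a) and (a$'$). The cycle then closes, giving the equivalence of all three conditions and showing that enlarging the function class does not enlarge the realizable minimizer sets.
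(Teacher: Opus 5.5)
Your cycle (a$'$)$\Rightarrow$(a)$\Rightarrow$(b)$\Rightarrow$(a$'$) is correct and is exactly the paper's proof. It uses the inclusion via $\alpha(s)=\sqrt{2\mu s}$, then the deformation retraction of Proposition~\ref{prop:pi} for (a)$\Rightarrow$(b), then a verbatim appeal to $\bcr$ Corollary 1.6 (hence $\bcr$ Theorem 1.5, with the metric held fixed) for the constructive direction. One caution about your parenthetical sketch of that citation: a proper embedding with trivial normal bundle is not what $\bcr$ Theorem 1.5 needs (a nontrivial long knot in $\R^3$ is such an embedding, yet $\bcr$ Appendix G excludes it as a minimizer set); what is needed is $\tilde S$ embedded as the zero section of a global product $\R^n\cong\tilde S\times\R^{n-\dim\tilde S}$, which $\bcr$ Theorem 1.3 supplies, so rely on the citation rather than the sketch.
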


\begin{proof}
(a) $\Rightarrow$ (b) by Proposition~\ref{prop:pi}. (b) $\Rightarrow$ (a$'$) is
$\bcr$ Corollary 1.6, whose proof invokes $\bcr$ Theorem 1.5 to produce a
genuinely globally $1$-P\L{} function. And (a$'$) $\Rightarrow$ (a) because
global $\mu$-P\L{} is \eqref{eq:GD} with $\alpha(s)=\sqrt{2\mu s}$, which is
positive definite and satisfies \eqref{eq:A}.
\end{proof}

\begin{cor}[$\bcr$ Corollary 1.7]
For every $m\ge3$ and $n\ge m+1$ there is a smooth $f$ on $\R^n$ in our class
whose minimizer set is an $m$-dimensional submanifold not homeomorphic to
$\R^m$. For $m=3$ the Whitehead manifold is the classical example; for general
$m\ge3$ one takes any contractible smooth $m$-manifold not homeomorphic to
$\R^m$, as in $\bcr$.
\end{cor}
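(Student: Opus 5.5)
The proof will follow the pattern of Corollary~\ref{cor:charS}: the realization direction comes from \cite{BCR2026}, and the passage into our class is just the inclusion of global P\L{} in \eqref{eq:GD}+\eqref{eq:A}. Concretely, I will fix $m\ge3$, $n\ge m+1$ and a contractible smooth $m$-manifold $\tilde S$ not homeomorphic to $\R^m$. For $m=3$ the Whitehead manifold serves; for general $m$ I take the family used in $\bcr$ Corollary 1.7. One such choice is the product of the Whitehead manifold with $\R^{m-3}$, which is contractible, and for $m\ge4$ fails to be simply connected at infinity. The statement only asserts existence, so I will not reprove the topology and will cite it as $\bcr$ does.

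Next, since $\tilde S$ is contractible and $n>m$, Corollary~\ref{cor:charS}, direction (b)$\Rightarrow$(a$'$), which is $\bcr$ Corollary 1.6 and rests on $\bcr$ Theorem 1.5, produces a smooth $f\colon\R^n\to\R$, globally $1$-P\L{} for the Euclidean metric, whose minimizer set is diffeomorphic to $\tilde S$. By Proposition~\ref{prop:pi}, or already by $\bcr$, this minimizer set is an embedded $m$-dimensional submanifold of $\R^n$. The dimension count follows because $S\cong\tilde S$ has dimension $m$.

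Finally, global $1$-P\L{} is \eqref{eq:GD} with $\alpha(s)=\sqrt{2s}$, which is positive definite and satisfies \eqref{eq:A} with $\mu=1$ and any $h_0$. Hence $f$ lies in our class, and its minimizer set is not homeomorphic to $\R^m$, since homeomorphism type is a diffeomorphism invariant.

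I do not expect a real obstacle. The only delicate point is bookkeeping: I must check that $\bcr$ Theorem 1.5 really delivers the minimizer set with the prescribed diffeomorphism type inside $\R^n$ for every $n\ge m+1$, and not only for some larger $n$. If it needs an embedding, I will use that an open contractible $m$-manifold embeds as a closed submanifold of $\R^n$ with trivial normal bundle once $n\ge m+1$, exactly as in $\bcr$. Beyond this, the statement is a direct transfer and needs no new analytic input.
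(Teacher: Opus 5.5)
Your route is the paper's. You take a contractible $\tilde S$ not homeomorphic to $\R^m$ and realize it through Corollary~\ref{cor:charS}, direction (b)$\Rightarrow$(a$'$) (that is, $\bcr$ Corollary 1.6 via Theorem 1.5). You then observe that global P\L{} is \eqref{eq:GD} with $\alpha(s)=\sqrt{2\mu s}$, which satisfies \eqref{eq:A}. Your bookkeeping worry about the range of $n$ is already settled, since Corollary~\ref{cor:charS} is stated for every $n>\dim\tilde S$.

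There is, however, one false claim that you must delete: for $m\ge4$ the product $W\times\R^{m-3}$ of the Whitehead manifold with a Euclidean factor is \emph{not} a valid choice. A contractible open manifold times a line is contractible and simply connected at infinity. Hence $W\times\R^{m-3}$ is homeomorphic to $\R^m$: for $m=4$ this is the classical fact $W\times\R\approx\R^4$, and for $m\ge5$ it is Stallings' theorem. This is the same stabilization phenomenon (contractible times $\R^k$ with $k\ge1$ is Euclidean) that gives the last clause of Theorem~\ref{thm:main} via $\bcr$ Theorem 1.3. It is also forced by any realization, since Theorem~\ref{thm:main} gives $\R^n\cong S\times\R^{n-m}$.

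Had you relied on that example, the minimizer set would be Euclidean and the corollary would fail for every $m\ge4$. Either cite the examples of $\bcr$, as the statement itself does, or use a genuinely non-Euclidean one. For instance, take the interior of a compact contractible $m$-manifold whose boundary is not simply connected (Mazur's manifold when $m=4$); its end has nontrivial fundamental group. With that sentence replaced, your argument is complete.
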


The same remark applies to embeddings. By Theorem~\ref{thm:main} and $\bcr$
Theorem 1.5, a properly embedded $S\subseteq\M$ (with $\M$ contractible) arises
as the minimizer set of some $f$ in our class if and only if there is a
diffeomorphism $\psi\colon\M\to S\times\R^k$ with
$\psi(S)=S\times\{0\}$, which is the
same criterion as for global P\L{}. In particular the necessary condition
$\pi_1(\M\setminus S)\cong\pi_1(S^{k-1})$ of $\bcr$ Appendix G is unchanged, and
nontrivial long knots in $\R^3$ remain excluded.

\subsection{Summary of which results survive}

Table~\ref{tab:status} goes through the results of \cite{BCR2026} one at a time.
Reading it, three groups are worth distinguishing. Most entries are structural
conclusions that hold verbatim, with a pointer to the corresponding statement
above. A second group is unchanged for a different reason, namely that those
results never used the P\L{} hypothesis in the first place, as observed in
Section~\ref{sec:free}; the coercive normal form of Theorem~\ref{thm:BCR33} and
the topological input are of that kind. The last group, set off below the rule, is
what does not survive: the global quadratic growth estimate, and with it the
quantitative control on $\psi$ away from $S$.

\begin{table}[hbtp]
\centering
\small
\begin{tabular}{lll}
\hline
Result of \cite{BCR2026} & Content & Status under \eqref{eq:GD}+\eqref{eq:A}\\
\hline
Lemma 2.1 (length) & bounded trajectories, limit in $S$ & holds, Lemma~\ref{lem:traj}\\
Lemma 2.1 (QG) & $h \ge \tfrac\mu2\dist(\cdot,S)^2$ globally & \textbf{fails}; local only, \eqref{eq:QGloc}\\
Lemma 2.2 & Morse--Bott at $S$ & holds, Lemma~\ref{lem:MB}\\
Prop.\ 4.2 & $\pi$ continuous, retraction, $S$ connected & holds, Prop.~\ref{prop:pi}\\
Prop.\ 4.3 & $S$ smooth, $\pi$ smooth submersion & holds, Prop.~\ref{prop:pi}\\
Prop.\ 4.4 & fibers $\cong\R^k$, $f|_F$ in the class & holds, Prop.~\ref{prop:fiber}\\
Cor.\ 4.5 & $\pi$ is a smooth fiber bundle & holds\\
Thm.\ 4.6 & trivialization compatible with $f$ & holds, Thm.~\ref{thm:bundle}\\
Thm.\ 4.7, 1.2 & $f = f^*+\|\phi\|^2$ & holds, Thm.~\ref{thm:main}\\
Thm.\ 1.1 & unique minimizer case & holds, Thm.~\ref{thm:one}\\
Thm.\ 3.3, \S3, App.\ A--D & coercive normal form, technical lemmas & unchanged (no P\L{} used)\\
Thm.\ 1.5 & construction of P\L{} functions & unchanged\\
Thm.\ 1.3, App.\ E, F & topology of contractible manifolds & unchanged\\
Cor.\ 1.4 & compact $S$ is a point & holds\\
Cor.\ 1.6, 1.7 & characterization of $S$ & holds, and unchanged, Cor.~\ref{cor:charS}\\
Cor.\ 1.8, 1.9, 4.8 & deformation to a pure quadratic & holds\\
Thm.\ 1.10 & g-convex in some complete metric & holds, Thm.~\ref{thm:gconvex}\\
\S7, Examples 7.1--7.4 & contractibility cannot be relaxed & unchanged\\
App.\ G & obstructions on the embedding $S\subseteq\M$ & unchanged\\
\hline
\S8, bullet 4 & quantitative control on $\psi$ away from $S$ & \textbf{weakens}; see \S\ref{sec:fails}\\
\hline
\end{tabular}
\caption{Which results of \cite{BCR2026} survive the replacement of global
P\L{} by $\alpha$-gradient domination with $\alpha$ positive definite and
satisfying \eqref{eq:A}, or equivalently, by $sgl$-P\L{}I, by
Proposition~\ref{prop:sgl}.}
\label{tab:status}
\end{table}

\subsection{A second proof, by reparametrization}
\label{sec:shortcut}

The route just taken redoes three steps of \cite{BCR2026} with $\Psi$ in place of
quadratic growth. There is a shorter route to the same structural conclusions,
which avoids that work altogether: manufacture from $f$ a companion function for
which \eqref{eq:PL} holds without qualification, apply \cite{BCR2026} to it
unchanged, and
transport the normal form back. What makes this possible is that a witness may
always be shrunk, since \eqref{eq:GD} is a lower bound; the freedom
afforded by 
Remark~\ref{rem:sharpest} turns out to be exactly what is needed.

\begin{prop}[Reparametrization]\label{prop:shortcut}
Let $f\colon\M\to\R$ be smooth, bounded below and $sgl$-P\L{}I, and not constant.
Then there are a constant $c>0$ and a diffeomorphism $\theta$ of $[0,\infty)$ onto
itself, smooth and with $\theta(r)=r$ for all small $r\ge0$, such that
\[
  g \ :=\ \theta(f-f^*)
\]
is smooth and globally $\tfrac{c^2}{2}$-P\L{} on $\M$. Moreover $f$ and $g$ have
the same critical set, the same negative gradient orbits and the same end-point
map, and $g=\|\phi\|^2$ implies $f = f^*+\|\tilde\phi\|^2$ with
$\tilde\phi=\Lambda\circ\phi$ for a diffeomorphism $\Lambda$ of $\R^k$ which is the
identity near the origin.
\end{prop}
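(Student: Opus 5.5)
The plan is to build $\theta$ through its square root. Writing $u:=\sqrt{\theta}$, the chain rule gives $\nabla g=\theta'(h)\nabla f=2u(h)u'(h)\nabla f$. So, using a witness $\alpha$ from Proposition~\ref{prop:sgl} (positive definite, satisfying \eqref{eq:A} with constants $\mu,h_0$),
\[
  \|\nabla g\|^2 \ \ge\ 4u(h)^2u'(h)^2\alpha(h)^2 .
\]
Global $\tfrac{c^2}{2}$-P\L{} for $g$ reads $\|\nabla g\|^2\ge c^2\theta(h)=c^2u(h)^2$. It therefore suffices to arrange the differential inequality
\[
  u'(s)\ \ge\ \frac{c}{2\alpha(s)}\qquad\text{for all } s>0 .
\]
Note that $\inf g=\theta(0)=0$, since $f$ attains $f^*$ by Lemma~\ref{lem:traj}. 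This inequality is just the statement that $u$ dominates $\tfrac c2\Psi$ incrementally, which is the same mechanism as in Lemma~\ref{lem:traj}.

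\emph{Construction of $u$.} Take $c:=\sqrt{2\mu}$ and set $u(s)=\sqrt s$ on $[0,h_0/2]$, so that $\theta(r)=r$ for small $r$. There $u'=1/(2\sqrt s)\ge c/(2\alpha(s))$ is exactly \eqref{eq:A}. On $[h_0/4,\infty)$ the function $c/(2\alpha)$ is continuous and positive. I will choose a smooth function $w>0$ on $(0,\infty)$ with the following properties:
\begin{itemize}
\item $w=1/(2\sqrt s)$ on $(0,h_0/2]$;
\item $w\ge\max\{c/(2\alpha),1\}$ on $[h_0,\infty)$;
\item $w\ge c/(2\alpha)$ on $[h_0/2,h_0]$.
\end{itemize}
This is a routine partition-of-unity gluing: take the maximum of the two continuous lower bounds, majorize it by a smooth function, and interpolate with a cutoff, noting that a convex combination of two functions each $\ge c/(2\alpha)$ is again $\ge c/(2\alpha)$. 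Then put $u(s):=\sqrt{h_0/2}+\int_{h_0/2}^s w$ for $s\ge h_0/2$.

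With this choice $u$ is smooth on $(0,\infty)$, strictly increasing and unbounded, since $w\ge1$ eventually. Hence $\theta=u^2$ is smooth on $[0,\infty)$: it equals $r$ near $0$, and elsewhere $\theta'=2uu'>0$. So $\theta$ is a smooth diffeomorphism of $[0,\infty)$ onto itself, and the displayed estimate gives the global P\L{} inequality. I expect the gluing to be the only fiddly step. The point to watch is that smoothness at $0$ is automatic precisely because $\theta$ is literally the identity there.

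\emph{Remaining claims.} Critical sets, orbits and end-point maps agree by Lemma~\ref{lem:reparam} with $\sigma=\theta'$. This $\sigma$ is smooth and positive on $[0,\infty)$, because $\theta'(0)=1$.

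For the normal form, suppose $g=\|\phi\|^2$. Then $f-f^*=\theta^{-1}(\|\phi\|^2)$. Define the radial map
\[
  \Lambda(z):=\rho(|z|)\,\frac{z}{|z|},\qquad \rho(t):=\sqrt{\theta^{-1}(t^2)} .
\]
The function $\rho$ is an increasing smooth bijection of $[0,\infty)$, with $\rho'>0$ for $t>0$, and $\rho(t)=t$ near $0$. Hence $\Lambda$ is the identity near the origin and a diffeomorphism of $\R^k$; its inverse is the radial map built from $\rho^{-1}$. Finally $\|\Lambda(\phi)\|^2=\rho(\|\phi\|)^2=\theta^{-1}(\|\phi\|^2)=f-f^*$, so $\tilde\phi=\Lambda\circ\phi$ works.
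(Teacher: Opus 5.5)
Your proof is correct, and structurally it is the paper's construction. Put $\hat\alpha:=c/(2w)$. Then $\hat\alpha=c\sqrt s$ on $(0,h_0/2]$ and $\hat\alpha\le\alpha$ everywhere, and your $u$ is exactly $\tfrac c2\int_0^s d\sigma/\hat\alpha(\sigma)$. So your $\theta=u^2$ is the paper's $\tfrac{c^2}{4}\hat\Psi^2$ for one particular shrunken witness.

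The genuine difference is how you show that $\theta$ is onto $[0,\infty)$. The paper takes an arbitrary smooth positive minorant away from the origin, so a priori $\int^\infty ds/\hat\alpha$ could converge. It rules this out only indirectly: it applies $\bcr$ Proposition~4.4 and Theorem~1.1 to $g$ and uses $k\ge1$ to see that $g$ is unbounded. That is the sole use of the hypothesis that $f$ is not constant. Your requirement $w\ge1$ on $[h_0,\infty)$ caps the effective witness at $c/2$ there. This makes $\theta$ unbounded by construction, so surjectivity is elementary, needs no appeal to \cite{BCR2026}, and your argument never uses nonconstancy.

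The price is purely quantitative. The paper's $\theta$ is the minimal admissible reparametrization for its witness (Remark~\ref{rem:minimal}), so the length bound $\tfrac2c\sqrt g$ inherited from the P\L{} property of $g$ is exactly $\hat\Psi(h)$. Capping the witness can only weaken that bound relative to an uncapped minorant. Since the paper's minorant is arbitrary anyway, and such bounds are already weaker than those of Lemma~\ref{lem:traj}, nothing of substance is lost.

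Two small remarks:
\begin{itemize}
\item $\inf g=0$ follows from $\inf h=0$ together with continuity and monotonicity of $\theta$. Invoking Lemma~\ref{lem:traj} there is unnecessary, and the paper deliberately keeps this proposition independent of it.
\item Lemma~\ref{lem:reparam} only says the end-point maps coincide where they are defined. You should cite their existence, e.g.\ from $\bcr$ Lemma~2.1 applied to the globally P\L{} function $g$.
\end{itemize}
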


\begin{proof}
\emph{Shrinking the witness.} Let $\alpha_0$ be a positive definite witness
satisfying \eqref{eq:A}, so that $\sqrt{2\mu s}\le\alpha_0(s)$ for
$0\le s\le h_0$; put $c:=\sqrt{2\mu}$ and $\varepsilon := h_0/2$. On
$[\varepsilon,\infty)$ the function $\alpha_0$ is continuous and strictly
positive, so it admits a smooth positive minorant there, by the standard
partition-of-unity construction; extend that minorant arbitrarily to a smooth
positive function $\beta$ on all of $(0,\infty)$, which is harmless because the
cutoff below annihilates it near the origin. Let $\eta$ be smooth with $\eta=1$ on
$[0,\varepsilon]$, $\eta=0$ on $[2\varepsilon,\infty)$ and $0\le\eta\le1$, and set
\[
  \hat\alpha(s) \ :=\ \eta(s)\,c\sqrt{s} \ +\ \bigl(1-\eta(s)\bigr)\beta(s) .
\]
Being a convex combination of two functions each $\le\alpha_0$ on the relevant
range, $\hat\alpha\le\alpha_0$, so $\hat\alpha$ is again a witness; it is positive
definite, smooth on $(0,\infty)$, and equal to $c\sqrt s$ on $[0,\varepsilon]$.

\emph{The reparametrization.} Put
$\hat\Psi(r) := \int_0^r ds/\hat\alpha(s)$ and
$\theta(r) := \tfrac{c^2}{4}\hat\Psi(r)^2$. For $0\le r\le\varepsilon$ we get
$\hat\Psi(r)=2\sqrt r/c$ and therefore
\[
  \theta(r) \ =\ r \qquad (0\le r\le\varepsilon) ,
\]
so $\theta$ is smooth at the origin with $\theta'(0)=1$; on $(0,\infty)$ it is
smooth because $\hat\alpha$ is smooth and positive there, and
$\theta'(r) = \tfrac{c^2}{2}\hat\Psi(r)/\hat\alpha(r)>0$. Hence $g=\theta(h)$ is
smooth on $\M$, and
\[
  \|\nabla g\| \ =\ \theta'(h)\|\nabla f\| \ \ge\ \theta'(h)\hat\alpha(h)
  \ =\ \tfrac{c^2}{2}\hat\Psi(h) ,
  \qquad\text{so}\qquad
  \|\nabla g\|^2 \ \ge\ c^2\,\theta(h) \ =\ c^2 g .
\]
As $\inf g = 0$, this is the global P\L{} inequality \eqref{eq:PL} with parameter
$c^2/2$. Since $\theta'>0$ and $\theta(0)=0$, the functions $f$ and $g$ have the
same critical set and the same minimizer set, and $\nabla g = \theta'(h)\nabla f$
is a positive multiple of $\nabla f$ depending on $x$ only through $h$. So
Lemma~\ref{lem:reparam} applies with $\sigma=\theta'$: the two negative gradient
flows have the same oriented orbits and the same end-point map. (That lemma is
elementary and uses neither \eqref{eq:GD} nor \eqref{eq:A}, so this proof still
stands on its own, without the trajectory lemma of
Section~\ref{sec:preliminaries}. By \cite{BCR2026} the $g$-trajectories are
defined for all time and converge to points of $S$, so the same is true of the
$f$-trajectories. Note also that on $\{h<\varepsilon\}$ we have $\theta'=1$, so
there the two vector fields are not merely proportional but \emph{equal}.)

\emph{Surjectivity of $\theta$, and the transfer.} Apply \cite{BCR2026} to $g$:
by $\bcr$ Proposition 4.4, each fiber $F$ of the end-point map of $g$ is a
complete, connected submanifold diffeomorphic to $\R^k$ on which $g$ restricts to
a globally P\L{} function with a single critical point, so $\bcr$ Theorem 1.1
applies to $g|_F$ and puts it in the normal form $\|\phi_F\|^2$ with
$\phi_F\colon F\to\R^k$ a diffeomorphism. Since $f$ is not constant the fibers
have dimension $k\ge1$, so $g$ already takes all values in $[0,\infty)$ on a
single fiber. Hence $g$ is unbounded. This is what forces $\hat\Psi$ to diverge,
which does not follow from its definition alone: a witness growing fast enough at
infinity would make $\int^\infty ds/\hat\alpha$ converge. Since
$\hat\alpha = c\sqrt s$ near the origin, $\hat\Psi(r)$ is finite for every finite
$r$, so $\theta$ is finite and nondecreasing on all of $[0,\infty)$; were it
bounded, $g=\theta(h)$ would be bounded too. So $\theta$ is unbounded, hence so is
$\hat\Psi$, and being nondecreasing, $\hat\Psi(r)\to\infty$ as $r\to\infty$.
Therefore $\theta$ maps $[0,\infty)$ onto itself; being smooth
with positive derivative, it is a diffeomorphism. (Remark~\ref{rem:L} is the same
fact, proved there within the development of Section~\ref{sec:preliminaries};
obtaining it here from \cite{BCR2026} applied to $g$ is what keeps the present
proof independent of that development.) Consider then the radial map
\[
  \Lambda(z) \ :=\ \sqrt{\frac{\theta^{-1}(\|z\|^2)}{\|z\|^2}}\;\,z ,
  \qquad \Lambda(0) \ :=\ 0 .
\]
The quotient under the root is identically $1$ for $\|z\|^2\le\varepsilon$, because
$\theta$ is the identity there; so $\Lambda$ is the identity near the origin, and in
particular there is nothing to check about its smoothness at that point. Away from
the origin $\Lambda$ is smooth because $\theta^{-1}$ is, and it is a bijection of
$\R^k$ because it fixes each ray and acts on the radius by the increasing bijection
$\|z\|\mapsto\sqrt{\theta^{-1}(\|z\|^2)}$ of $[0,\infty)$; hence $\Lambda$ is a
diffeomorphism of $\R^k$, and $\|\Lambda(z)\|^2 = \theta^{-1}(\|z\|^2)$ by
construction. Explicitly, the inverse is
\[
  \Lambda^{-1}(w) \ =\ \sqrt{\frac{\theta(\|w\|^2)}{\|w\|^2}}\;\,w ,
  \qquad \Lambda^{-1}(0) \ :=\ 0 ,
\]
which is smooth for the same two reasons. So $g=\|\phi\|^2$ gives
$h = \theta^{-1}(\|\phi\|^2) = \|\Lambda\circ\phi\|^2$, as claimed.
\end{proof}

Proposition~\ref{prop:shortcut} settles all of the structural conclusions at one
stroke: Theorems~\ref{thm:one}, \ref{thm:bundle} and \ref{thm:main}, and every
corollary drawn from them, follow by applying the corresponding result of
\cite{BCR2026} to $g$ and transporting back. In particular, apart from the
elementary Lemma~\ref{lem:reparam}, which is independent of \eqref{eq:GD} and
\eqref{eq:A}, Sections~\ref{sec:free}--\ref{sec:preliminaries} are not logically
necessary for them.

We have nonetheless kept the direct route, for two reasons.

The first is quantitative. It is not that the reparametrization yields no
estimates: applying the ordinary P\L{} trajectory bound to $g$, whose parameter is
$c^2/2$, gives
\[
  \operatorname{length}\bigl(\text{orbit from } x\bigr) \ \le\ \hat\Psi(h(x)) ,
  \qquad \dist(x,S) \ \le\ \hat\Psi(h(x)) ,
\]
and since $f$ and $g$ have the same orbits these are bounds for the original
gradient flow too. But they are expressed in the \emph{shrunken} witness, and
$\hat\alpha\le\alpha_0$ forces $\hat\Psi\ge\Psi_0$, so they are weaker than what
Lemma~\ref{lem:traj} gives. The direct route is what preserves the estimates in
terms of the original witness, and those are what
Section~\ref{sec:applications} uses, for instance in the tube estimate for the
scalar LQR problem, where the sharp $\Psi$ is the point.

The second is that the direct route shows which ingredient of \cite{BCR2026} each
hypothesis is answering, which was the problem that motivated this note.

\begin{rem}[The reparametrization is an equivalence]\label{rem:equiv}
Proposition~\ref{prop:shortcut} has a converse, and it costs three lines. Suppose
$g=\theta(h)$ is globally $\mu$-P\L{} for some smooth diffeomorphism $\theta$ of
$[0,\infty)$ onto itself with $\theta(0)=0$ and $\theta'>0$. From $\nabla g = \theta'(h)\nabla f$ and
$\|\nabla g\|^2\ge2\mu g$ we get $\|\nabla f\|\ge\beta(h)$ with
\[
  \beta(s) \ :=\ \frac{\sqrt{2\mu\,\theta(s)}}{\theta'(s)} ,
\]
so $\beta$ is a witness for $f$ in the sense of \eqref{eq:GD}. It is continuous and
vanishes only at $s=0$, hence positive definite; and
\[
  \frac{\beta(s)^2}{s} \ =\ \frac{2\mu\,\theta(s)}{s\,\theta'(s)^2} ,
\]
whose limit as $s\downarrow0$ is $2\mu/\theta'(0)>0$, since $\theta(s)/s\to\theta'(0)$.
So $\beta$ satisfies \eqref{eq:A} on a small enough interval. By
Proposition~\ref{prop:sgl}, $f$ is $sgl$-P\L{}I. Note that nothing here restricts the
image of $h$: the estimate is asked only at values that $h$ attains, while the
positive definiteness of $\beta$ and \eqref{eq:A} are properties of $\theta$ on
$[0,\infty)$. Combining with Proposition~\ref{prop:shortcut}:
\emph{a smooth, nonconstant $f$, bounded below, is $sgl$-P\L{}I if and only if there
is a smooth diffeomorphism $\theta$ of $[0,\infty)$ onto itself, with $\theta(0)=0$
and $\theta'>0$, such that $\theta(f-f^*)$ is globally P\L{}.} Neither direction needs
a hypothesis on $\operatorname{im}(h)$; on the contrary, $\operatorname{im}(h) =
[0,\infty)$ is a \emph{consequence}. Indeed, in the forward direction $g=\theta(h)$
is globally P\L{}, so by $\bcr$ Proposition 4.4 and Theorem 1.1 the restriction of
$g$ to any fiber
$F$ of its end-point map is $\|\phi_F\|^2$ for a diffeomorphism
$\phi_F\colon F\to\R^k$; as $f$, and hence $g$, is nonconstant we have $k\ge1$, so
already $g(F)=[0,\infty)$. Therefore $g(\M)=[0,\infty)$, and since $\theta$ is a
bijection of $[0,\infty)$ onto itself,
$h(\M)=\theta^{-1}(g(\M))=[0,\infty)$. (Only the fiberwise statement is used here,
not the global product normal form, which would need contractibility.)

Where the domain of $\theta$ does matter is in the requirement that it be all of
$[0,\infty)$. Weakening that to $\operatorname{im}(h)$ gives a strictly weaker notion,
as Example~\ref{ex:tanh} shows: there $\theta=\operatorname{arctanh}$ reparametrizes
$f$ into a globally P\L{} function, but it is a diffeomorphism $[0,1)\to[0,\infty)$
rather than of $[0,\infty)$ onto itself, and $f$ is only $loc$-P\L{}I.
\end{rem}

\begin{rem}[Why the minimal construction may require shrinking the witness]\label{rem:shrink}
The shrinking step is not a convenience. For a \emph{given} witness the function
$\theta=\tfrac\mu2\Psi^2$ can fail to be differentiable at the origin, so the
argument really does need the freedom to replace $\alpha_0$. Take $f(x)=x^2$ on
$\R$, so $h=x^2$ and $\|\nabla f\|=2\sqrt h$, and let
\[
  \alpha(s) \ =\ \sqrt s\,\bigl(1+\epsilon\sin(\log s)\bigr) ,
  \qquad 0<\epsilon<\tfrac12 ,
\]
with $\alpha(0)=0$. This is a witness, since $1+\epsilon\sin(\log s)<2$ gives
$\alpha(s)\le2\sqrt s=\|\nabla f\|$ at level $s$; it is positive definite, and
$\alpha(s)^2\ge s(1-\epsilon)^2$ gives \eqref{eq:A} with $2\mu=(1-\epsilon)^2$.
Substituting $s=rt$ gives
$\Psi(r)/\sqrt r = F(\log r)$ with
$F(\tau) = \int_0^1 t^{-1/2}\bigl(1+\epsilon\sin(\tau+\log t)\bigr)^{-1}dt$, a
nonconstant function of period $2\pi$; so $\Psi(r)^2/r$ oscillates and has no
limit as $r\downarrow0$, and $\theta'(0)$ does not exist.

The obstruction is attached to the witness and not to $f$, and it is worth saying so,
since the example is easily misread. Indeed $f(x)=x^2$ is itself globally $2$-P\L{}, so
$\theta=\operatorname{id}$ already serves. Even keeping the witness $\alpha$ above, the
linear choice $\theta(r)=\lambda r$ is admissible as soon as
$\lambda\ge2/(1-\epsilon)^2$, since then
$\theta'(s)^2\alpha(s)^2/(2\theta(s)) = \lambda\alpha(s)^2/(2s)
\ge \lambda(1-\epsilon)^2/2 \ge 1$, which is global P\L{} with constant $1$. What
fails is only the passage from this witness to the \emph{minimal} $\theta$, namely
$\tfrac12\Psi^2$, and it is that passage which the shrinking step repairs in general,
when no better witness is in hand.
\end{rem}

\begin{rem}[Every admissible witness is of exact square-root order]\label{rem:sqrtorder}
It is worth mentioning what \emph{cannot} go wrong, since it is easy to look for the
wrong obstruction. One might hope to defeat the shortcut with a witness that is too
large at the origin, say $\alpha(s)=s^{1/4}$, for which \eqref{eq:A} holds near $0$
while $\Psi(r)=\tfrac43r^{3/4}$ and $\tfrac\mu2\Psi^2$ has vanishing derivative at
the origin. No such witness exists for any \emph{nonconstant} $f$ in our class.
Indeed, by
Lemma~\ref{lem:MB} the Hessian is nondegenerate in the normal directions along
$S$, so along a unit normal geodesic $t\mapsto\Exp_{x^*}(tv)$ at $x^*\in S$ we have
$h = at^2+O(t^3)$ and $\|\nabla f\| = bt+O(t^2)$ with $a,b>0$; solving $h=s$ gives
$t\asymp\sqrt s$ and hence $\alpha_f(s) = O(\sqrt s)$. Since every witness is dominated by
$\alpha_f$, \emph{no} witness can be asymptotically larger than order $\sqrt s$, which
is what disqualifies $s^{1/4}$; and \eqref{eq:A} prevents an admissible witness from
being asymptotically smaller, so every witness satisfying \eqref{eq:A} obeys
$\alpha(s)=\Theta(\sqrt s)$ at the origin. The restriction to admissible witnesses in
that last clause is needed: for $f(x)=x^2$ the positive definite $\alpha(s)=s/(1+s)$
is a witness, since $s/(1+s)\le2\sqrt s$, and it is $o(\sqrt s)$; it simply fails
\eqref{eq:A}. (The qualifier is needed: for constant $f$
one has $h\equiv0$, so \eqref{eq:GD} is tested only at $s=0$ and every positive
definite $\alpha$ is a witness, $s^{1/4}$ included. Proposition~\ref{prop:shortcut}
excludes that case anyway.)

Exact square-root order does not by itself make the reparametrization of a given
witness smooth, however, and it would be a mistake to think that oscillation is the
only obstruction. The ratio $\alpha(s)/\sqrt s$ may also approach a limit with
insufficient regularity. For $f(x)=x^2$ again, whose sharp witness is
$\alpha_f(s)=2\sqrt s$, take
\[
  \alpha(s) \ =\
  \begin{cases}
    \sqrt s\,\bigl(1+s^{1/4}\bigr) , & 0\le s\le 1 ,\\[1mm]
    2\sqrt s , & s\ge1 .
  \end{cases}
\]
This is continuous (both branches give $2$ at $s=1$), positive definite, and
satisfies $\alpha\le\alpha_f$ on every level, so it is a witness on all of
$[0,\infty)$; it satisfies \eqref{eq:A} with $2\mu=1$. The second branch matters
only for the definition: all the asymptotics below come from the first. Indeed,
substituting $s=u^4$ there gives the closed form
for $0\le r\le1$,
$\Psi(r) = 4\bigl[r^{1/4}-\log(1+r^{1/4})\bigr] = 2r^{1/2}-\tfrac43r^{3/4}+r+O(r^{5/4})$
(beyond $r=1$ the witness is $2\sqrt s$ and $\Psi(r)=4(1-\log2)+\sqrt r-1$, which
plays no role, since only the behaviour at the origin is at issue),
so
\[
  \theta(r) \ =\ \tfrac14\Psi(r)^2 \ =\ r - \tfrac43 r^{5/4} + O(r^{3/2}) ,
\]
which is $C^1$ but not $C^2$ at the origin: $\theta''(r)$ grows like $r^{-3/4}$.
Here $\alpha(s)/\sqrt s\to1$ monotonically, with no oscillation at all. The
shrinking argument of Proposition~\ref{prop:shortcut} disposes of both
obstructions at once, which is the reason for taking the witness to be exactly
$c\sqrt s$ near the origin rather than merely of that order.

Here too the pathology belongs to the pairing of $f$ with a poorly chosen witness
rather than to $f$: the function is again $x^2$, for which $\theta=\operatorname{id}$
serves, and even for the witness displayed above the linear $\theta(r)=2r$ is
admissible, since $\alpha(s)^2\ge s$ gives
$\theta'(s)^2\alpha(s)^2/(2\theta(s)) = \alpha(s)^2/s \ge 1$. What
neither remark should be taken to say is that a smooth reparametrization fails to
exist; what fails is the recipe $\tfrac12\Psi^2$ applied to the witness one happens to
have been handed.
\end{rem}

\begin{rem}[Which reparametrizations work, and the minimal one]\label{rem:minimal}
The $\theta$ built in the proof is far from the only one that serves. Both of the
preceding remarks were settled by exhibiting some other reparametrization for the
same witness, which raises the question of how much freedom there is; and in
examples a more convenient $\theta$ is often available. Normalize the target
P\L{} constant to $1$; the constant is in any
case free, since replacing $\theta$ by $\lambda\theta$ scales it by $\lambda$. On the
strength of a witness $\alpha$ alone, a candidate $\theta$ with $\theta(0)=0$ and
$\theta'>0$ delivers the global inequality for $g=\theta(h)$ exactly when
$\theta'\alpha\ge\sqrt{2\theta}$ throughout. Measure the room in that inequality by
the \emph{level-wise slack}
\[
  \rho_\theta(h) \ :=\ \frac{\theta'(h)^2\alpha(h)^2}{2\,\theta(h)} ,
\]
which is the P\L{} constant that the witness guarantees for $g$ at level $h$, so that
admissibility reads $\rho_\theta\ge1$. Since $\theta'/\sqrt\theta = 2(\sqrt\theta)'$
and $\Psi'=1/\alpha$, we have $(\sqrt\theta)' = \sqrt{\rho_\theta/2}\;\Psi'$, and
integrating from $0$ gives the identity
\begin{equation}\label{eq:average}
  \sqrt{\frac{\theta(h)}{\theta_{\min}(h)}}
  \ =\ \frac{\int_0^h\sqrt{\rho_\theta}\;d\Psi}{\int_0^h d\Psi} ,
  \qquad \theta_{\min} \ :=\ \tfrac12\Psi^2 .
\end{equation}
That is, $\sqrt{\theta/\theta_{\min}}$ is the $d\Psi$-weighted average of
$\sqrt{\rho_\theta}$ on $[0,h]$. Two consequences. First, $\rho_\theta\ge1$ forces
$\theta\ge\theta_{\min}$ pointwise, and equality holds throughout only for
$\theta_{\min}$ itself. So $\tfrac12\Psi^2$ is the pointwise lower envelope of the
solutions of the differential inequality, and it is the minimal \emph{admissible}
reparametrization whenever it happens to be smooth. That proviso is not idle:
Remarks~\ref{rem:shrink} and~\ref{rem:sqrtorder} exhibit witnesses for which
$\tfrac12\Psi^2$ is not differentiable, or only $C^1$, at the origin, and it is then
no reparametrization at all. The $\theta$ of Proposition~\ref{prop:shortcut} is this
minimal solution for the shrunken witness, which is chosen precisely so that it is
smooth. Second, a competitor exceeds the minimal one by the average
of its own slack, and if $\rho_\theta\to L$ while $\Psi(\infty)=\infty$ then
$\theta/\theta_{\min}\to L$. Example~\ref{ex:bounded} illustrates both.
\end{rem}

\section{What is not necessarily true in this generality}
\label{sec:fails}

\subsection{Global quadratic growth}

The global inequality $h \ge \tfrac{\mu}{2}\dist(\cdot,S)^2$ of $\bcr$ (QG) is
lost, and with it every estimate derived from it far from $S$. What remains is
\eqref{eq:growth}, $h \ge \Psi^{-1}(\dist(\cdot,S))$, which is quadratic only in
the regime $h\le h_0$ where \eqref{eq:A} is active. The degradation can be
severe: for $f(x)=\log(1+\|x\|^2)$ on $\R^n$ (Example~\ref{ex:log}) we have
$h = \log(1+\dist(x,S)^2)$, so growth is logarithmic, slower than any positive
power. Since the equivalences among P\L{}, quadratic growth, Morse--Bott, error
bound and the restricted secant inequality for $C^2$ functions \cite{RB2024a}
are local statements, they persist near $S$; but globally our class is strictly
larger than each of them.

Rates change too, of course: along the negative gradient flow
$\eta' = -\|\nabla f\|^2 \le -\alpha(\eta)^2$, so global exponential decay of the
relative loss is available only when $\alpha(s)^2\gtrsim s$ globally, which is
$gl$-P\L{}I. We do not pursue this here, since it is not what the present note is
about; the rate question is the subject of \cite{Sontag2025L4DC} and the papers
cited there, where it is answered class by class, and nothing we do here bears on
it either way.

\subsection{Quantitative control on $\psi$}

The open question in $\bcr$ Section 8 about controlling the diffeomorphism
$\psi$ is unaffected \emph{at} $S$: if $\nabla f$ is $L$-Lipschitz near $S$,
then $\nabla^2 f(x) = 2\,D\phi(x)^*\circ D\phi(x)$ together with \eqref{eq:MB}
still gives $m$ singular values of $D\psi(x)$ equal to $1$ and $k$ of them in
$[\sqrt{\mu/2},\sqrt{L/2}]$. Away from $S$, however, any bound one might hope to
propagate along the flow now depends on $\alpha$ and not on $\mu$ alone, so the
question becomes strictly harder in our setting.

\section{Sharpness: the two immediate weakenings both fail}
\label{sec:examples}

By Proposition~\ref{prop:sgl} our hypothesis is $sgl$-P\L{}I, which is the
conjunction of two requirements: a $\sqrt{\,\cdot\,}$ lower bound near the origin
(that is, $loc$-P\L{}I, our \eqref{eq:A}) and a positive definite bound on every
level (that is, $\mathcal{PD}$). This section shows that neither can be dropped.

Examples \ref{ex:exp}--\ref{ex:cross} keep $\mathcal{PD}$ and drop
\eqref{eq:A}; the last of them is the sharpest, since it destroys the manifold
structure of $S$ itself. Example~\ref{ex:tanh} does the reverse: it retains the
local square-root estimate (that is, $loc$-P\L{}I) while dropping level-wise
uniform positivity. (One cannot literally say that it ``satisfies \eqref{eq:A}'',
since \eqref{eq:A} is a condition on a positive definite witness $\alpha$ and in
that example no such witness exists.) Examples \ref{ex:bounded}
and~\ref{ex:log} are not counterexamples at all: they separate $sgl$ from $sat$
and $sat$ from $gl$, and so show that the generalization is not vacuous.

We open with a remark that uses several of these examples at once, and that is
best read as a summary of what they collectively show. It asks the converse
question: the conclusion $f = f^*+\|\phi\|^2$ is weaker than the hypothesis, and
the examples below are exactly what measures the gap. A reader who wants the
examples first may skip to Example~\ref{ex:exp} and return here afterwards.

\begin{rem}[How much of the hypothesis does the conclusion remember?]
\label{rem:converse}
Theorem~\ref{thm:main} runs in one direction only, and it is worth asking what
its conclusion gives back. Suppose then that we are handed
\[
  f \ =\ f^* + \|\phi\|^2 , \qquad
  \phi\colon\M\to\R^k \ \text{ a smooth submersion} ,
\]
and nothing else. Everything below comes from a single identity. Put
\[
  G(x) \ :=\ D\phi(x)\circ D\phi(x)^* ,
\]
a positive definite operator on $\R^k$ because $\phi$ is a submersion. Then
$\nabla f = 2\,D\phi^*\phi$, so at every $x$ with $\phi(x)\ne0$,
\begin{equation}\label{eq:rayleigh}
  \|\nabla f(x)\|^2
  \ =\ 4\bigl\langle \phi(x), G(x)\phi(x)\bigr\rangle
  \ =\ 4\,h(x)\,\bigl\langle u(x), G(x)u(x)\bigr\rangle ,
  \qquad u \ :=\ \phi/\|\phi\| .
\end{equation}
Every P\L{} condition on $f$ therefore translates into a uniformity statement
about the Rayleigh quotient of $G$ in the direction of the residual $\phi$;
global $\mu$-P\L{} is exactly $\langle u, Gu\rangle \ge \mu/2$ throughout.

The qualitative half of Section~\ref{sec:preliminaries} comes for free. Since
$D\phi(x)^*$ is injective, $\nabla f(x)=0$ if and only if $\phi(x)=0$, so $f$ is
invex and $S = \phi^{-1}(0)$ is a closed, properly embedded submanifold of
codimension $k$, being the preimage of a regular value; and at $x\in S$ we have
$\nabla^2f(x) = 2\,D\phi(x)^*D\phi(x)$, whose kernel is $T_xS$ and whose
eigenvalues on $N_xS = \operatorname{im}D\phi(x)^*$ are the numbers
$2\lambda_i(G(x))>0$. So the Morse--Bott property holds pointwise.

Every trace of uniformity is lost, however, and in four separate ways.
\begin{enumerate}
\item[(i)] $S$ may be empty. Take $\phi(x) = e^{-x/2}$ on $\R$, a submersion, and
$f = e^{-x}$ is Example~\ref{ex:exp} below.
\item[(ii)] $S$ need not be connected, so the deformation retraction of
Proposition~\ref{prop:pi} is lost as well. Identify $\R^2$ with $\C$ and take
$\phi(z) = e^z-1$, whose differential is never singular, so that
$f = |e^z-1|^2$ has $S = 2\pi i\,\Z$, a discrete set.
\item[(iii)] $\inf_S \lambda_{\min}(G)$ may vanish, so \eqref{eq:MB} with a
single constant $\mu$ is strictly more than the normal form supplies.
\item[(iv)] No P\L{}I of any of the four kinds follows, not even $loc$-P\L{}I.
\end{enumerate}

For (iv) we give two examples, because the first suggests a repair that the
second defeats. Example~\ref{ex:tanh} below is of the present form:
$\tanh(x^2) = \phi(x)^2$ with $\phi(x) = x\sqrt{\tanh(x^2)/x^2}$, the second
factor extending smoothly and positively to the origin, and $\phi'>0$
everywhere, so $\phi$ is a submersion; but its image is $(-1,1)$. One might hope
that asking $\phi$ to be onto $\R^k$ repairs matters. It does not. On $\M=\R^2$
take
\[
  \phi(x,y) \ =\ y\,e^{-x^2} , \qquad
  f \ =\ \phi^2 \ =\ y^2e^{-2x^2} ,
\]
a surjective submersion with $S = \{y=0\}$, for which \eqref{eq:rayleigh} reads
\[
  \frac{\|\nabla f\|^2}{h} \ =\ 4e^{-2x^2}\bigl(1+4x^2y^2\bigr) .
\]
Along $y=\varepsilon$ with $x\to\infty$ the left side tends to $0$ while
$h\to0$, so the infimum over $\{0<h\le\rho\}$ vanishes for every $\rho>0$ and
$loc$-P\L{}I fails. (Item (iii) is visible here too, since
$\nabla^2f(x,0) = \operatorname{diag}(0, 2e^{-2x^2})$.) Yet every conclusion of
Theorem~\ref{thm:main} holds for this $f$. Reparametrizing time by
$d\tau = e^{-2x^2}dt$ turns the negative gradient flow into
$\dot y = -2y$, $\dot x = 4xy^2$, which gives $y\to0$ and $x\to xe^{y^2}$, so
$\pi(x,y) = (xe^{y^2},0)$; and $\psi = (\pi,\phi)$, that is
$(x,y)\mapsto(xe^{y^2}, ye^{-x^2})$, is a diffeomorphism onto $S\times\R$,
its Jacobian determinant being $e^{y^2-x^2}(1+4x^2y^2)>0$ and its restriction to
each fiber $\{xe^{y^2}=p\}$, parametrized by $y$ as $x = pe^{-y^2}$, carrying
that fiber increasingly onto $\{p\}\times\R$.

Two converses do hold. Quantifying over metrics restores the implication in
full: if $\psi=(\pi,\phi)$ is a diffeomorphism onto $S\times\R^k$ then pulling
back the product metric makes $f$ globally $1$-P\L{}, which is
Theorem~\ref{thm:gconvex} read backwards and is the substance of
Remark~\ref{rem:collapse}. In a fixed metric it is enough to add properness: if
$f$ is coercive then $\lambda_{\min}(G)$ is bounded below on each compact
sublevel set, so \eqref{eq:rayleigh} yields $sgl$-P\L{}I outright. That second
converse has limited reach, since coercivity makes $S$ compact and hence, on
contractible $\M$, a single point.

So Theorem~\ref{thm:main} cannot be reversed in a fixed metric for a structural
reason: the normal form is invariant under diffeomorphism while $sgl$-P\L{}I is
not, and \eqref{eq:rayleigh} locates the discrepancy exactly, in that a
submersion controls $G$ pointwise whereas the hierarchy asks for control level by
level. The identity also sharpens the sufficient condition of
\cite[\S1.1]{BCR2026}, where $f=\tfrac12\|F-b\|^2$ is shown to be globally P\L{}
as soon as the smallest singular value of $DF^*$ has positive infimum, with the
parenthetical warning that this is not necessary: by \eqref{eq:rayleigh} the
sharp quantity is the Rayleigh quotient of $G$ along the residual, which that
singular value bounds below uniformly in the direction.
\end{rem}

\begin{ex}[Positive definiteness alone does not even give a minimizer]\label{ex:exp}
Let $\M=\R$ and $f(x)=e^{-x}$, so $f^*=0$ and $h=f$. Then
$\|\nabla f\| = e^{-x} = h$, so \eqref{eq:GD} holds with equality for the
positive definite $\alpha(s)=s$. Yet $S=\varnothing$: no conclusion of
Section~\ref{sec:survive} can hold. Of course \eqref{eq:A} fails, since
$s^2 \ge 2\mu s$ is false near $0$; consistently,
$\Psi(h)=\int_0^h ds/s = \infty$, and the trajectory
$x' = e^{-x}$ runs to $+\infty$ with infinite length.
\end{ex}

\begin{ex}[Finiteness of $\Psi$ is not enough either]\label{ex:x4}
Let $\M=\R$ and $f(x)=x^4$. Then $h=x^4$ and $\|\nabla f\| = 4|x|^3 = 4h^{3/4}$,
so \eqref{eq:GD} holds with $\alpha(s)=4s^{3/4}$, which is positive definite
with $\Psi(h) = h^{1/4} < \infty$. Here $S=\{0\}$ is a manifold and
$\M\cong\R$, so nothing visible has gone wrong, but the conclusion of
Theorem~\ref{thm:one} fails. To see this, suppose $x^4=\phi(x)^2$ for a smooth
$\phi$. Then $|\phi(x)|=x^2$, so $\phi(x)=O(x^2)$ and therefore $\phi'(0)=0$;
hence $\phi$ is not a local diffeomorphism at the origin. (One cannot argue
instead that $\phi=\pm x^2$ is non-injective: the sign may switch at the origin,
and indeed $\phi(x)=x|x|$ is an injective (though not smooth enough)
solution.) The obstruction is
the degenerate Hessian at the origin, i.e.\ the failure of Lemma~\ref{lem:MB},
i.e.\ the failure of \eqref{eq:A}. This is the minimal illustration that
\eqref{eq:A} is not bookkeeping: it is the hypothesis that makes the normal form
quadratic.
\end{ex}

\begin{ex}[The minimizer set need not be a manifold]\label{ex:cross}
Let $\M=\R^2$ and $f(x,y) = x^2y^2$, a polynomial, hence $C^\infty$. Then
$f^*=0$ and $S = \{xy=0\}$ is a cross, which is not a manifold, so
Lemma~\ref{lem:MB}, Proposition~\ref{prop:pi} and Theorem~\ref{thm:main} all
fail. Nonetheless $f$ is gradient dominated by a positive definite $\alpha$ with
finite $\Psi$: from $\|\nabla f\|^2 = 4x^2y^2(x^2+y^2) = 4f\cdot(x^2+y^2)$ and
$x^2+y^2 \ge 2|xy| = 2\sqrt f$ we get
\[
  \|\nabla f(x,y)\|^2 \ \ge\ 8\, f(x,y)^{3/2} ,
  \qquad\text{i.e.}\qquad \alpha(s) = 2\sqrt2\, s^{3/4} ,
\]
positive definite, with $\Psi(h) = \sqrt2\,h^{1/4}$. Only \eqref{eq:A} fails.
This is a smooth analogue of the $C^1$ example in $\bcr$ footnote 3, where
$f(x,y)=x^2y^2/(x^2+y^2)$ is globally P\L{} with the same cross as its minimizer
set; the trade is that we gain full smoothness and lose the exponent. It seems to
us that this is the cleanest available demonstration that the \L{}ojasiewicz
exponent $\tfrac12$, and not smoothness, is what makes $S$ a manifold.
\end{ex}

\begin{ex}[$loc$-P\L{}I without $\mathcal{PD}$: the gradient may die at infinity]
\label{ex:tanh}
Now the reverse deficiency. Let $\M=\R$ and
\[
  f(x) \ =\ \tanh(x^2) ,
\]
so that $f^*=0$, attained only at $x^*=0$, and $f$ is smooth with
$f''(0)=2>0$. Near the origin $f(x) = x^2 + O(x^6)$ and $f'(x)=2x+O(x^5)$, so
$\|\nabla f\|^2 \ge 2h$ for $h$ small: $loc$-P\L{}I holds. The unique critical
point is the minimizer, so $f$ is invex.
Moreover $\alpha_f(r)>0$ for every $r$ in the range $[0,1)$ of $h$: writing
$x=\sqrt{\operatorname{arctanh} r}$,
\[
  \alpha_f(r) \ =\ 2\sqrt{\operatorname{arctanh} r}\;(1-r^2) \ >\ 0
  \qquad (0<r<1) .
\]
What fails is uniformity: $\alpha_f(r)\to0$ as $r\uparrow 1$, so no continuous
positive definite $\alpha$ can minorize $\alpha_f$ (continuity would force
$\alpha(1)=0$), and $f$ is not $sgl$-P\L{}I. Consistently, $1/\alpha_f(r)$ blows
up at $r=1$ at the order of $\bigl[(1-r)\sqrt{\log\tfrac{1}{1-r}}\bigr]^{-1}$,
which is not integrable, so $\int_0^1 dr/\alpha_f(r) = \infty$. (We write the
integral of $1/\alpha_f$ rather than $\Psi(1)$, since $\Psi$ was defined from a
continuous positive definite witness and here there is none.)

And the conclusion of Theorem~\ref{thm:one} does fail: $f$ is bounded, so
$f=\phi^2$ forces $\phi$ bounded, and no bounded $\phi$ is a diffeomorphism onto
$\R$. Of course $f$ is not coercive either, which is the same fact seen from
another angle.

It should be said at once how narrowly the conclusion fails, since less is lost here
than the previous paragraph may suggest. Take $\theta=\operatorname{arctanh}$ on the
range $[0,1)$ of $h$, which is a smooth diffeomorphism onto $[0,\infty)$ fixing the
origin with $\theta'(0)=1$; then $g=\theta(f)=x^2$ is globally $2$-P\L{}. So $f$ is a
smooth monotone reparametrization of a globally P\L{} function even though it is not
$sgl$-P\L{}I, and $f=\Phi^2$ with
$\Phi(x)=x\sqrt{\tanh(x^2)/x^2}$ a submersion, the quotient extending smoothly and
positively to the origin. What the reparametrization cannot do is make $\theta$ a
diffeomorphism of $[0,\infty)$ onto itself, and correspondingly $\Phi$ is not onto
$\R$: its image is $(-1,1)$. This is exactly the distinction drawn at the end of
Remark~\ref{rem:equiv}: requiring $\theta$ to be a diffeomorphism of $[0,\infty)$
onto itself is strictly stronger than requiring it only on the range of $h$, and this
example separates the two. It also locates the deficiency precisely. The nonlinear
least-squares form survives, together with the
fiber bundle structure and the topology of $S$; what fails is the identification of
the target with $\R^k$, hence the pure-quadratic normal form of
Theorem~\ref{thm:one}, and with it Theorem~\ref{thm:gconvex}, since a nonconstant
function that is globally P\L{} for a complete metric is unbounded above.

Two further points about this example should be noted. First, every individual
trajectory of $\dot x = -f'(x)$ does converge to $x^*$, with finite length
$|x_0|$. There is even an exact formula for that length in terms of $h(x_0)$,
namely $|x_0| = \sqrt{\operatorname{arctanh} h(x_0)}$. What fails is that this
function is not finite on all of $[0,\infty)$: it blows up as $h\uparrow1$, and for
$\rho\ge1$ the sublevel set $\{h\le\rho\}$ is all of $\R$, with unbounded
trajectory lengths. A finite continuous bound on $[0,\infty)$ is precisely what
Lemma~\ref{lem:traj} supplies and precisely what the escape argument of
Theorem~\ref{thm:bundle} needs. Second, $f$ \emph{does} satisfy
the compact form of $sgl$-P\L{}I discussed in Remark~\ref{rem:twosgl}: on
$[-N,N]$ the ratio $f'(x)^2/f(x)$ is continuous and positive (tending to $4$ at
the origin), hence bounded below. So the sublevel form is genuinely stronger than
the compact form, and it is the sublevel form that the structure theory needs.
\end{ex}

\begin{ex}[$sat$-P\L{}I: the motivating condition]\label{ex:bounded}
Let $\alpha(s) = \sqrt{2\mu s/(1+s)}$, that is,
\[
  \|\nabla f(x)\|^2 \ \ge\ \frac{2\mu\,h(x)}{1+h(x)} .
\]
This is exactly $sat$-P\L{}I with $a = 2\mu$ and $b=1$. This $\alpha$ is positive
definite, and it satisfies \eqref{eq:A}, though not with the same constant
$\mu$, since $2\mu s/(1+s)<2\mu s$ for every $s>0$. Because $s/(1+s)\ge s/2$ on
$[0,1]$ we get $\alpha(s)^2 \ge \mu s$ there, so \eqref{eq:A} holds with
\[
  \mu_{\eqref{eq:A}} = \tfrac{\mu}{2} , \qquad h_0 = 1 ,
\]
and more generally with any $\tilde\mu<\mu$ on a small enough interval.
Substituting
$s=\sinh^2 u$ gives the closed form
\[
  \Psi(h) = \frac{1}{\sqrt{2\mu}}\Bigl[\sqrt{h(1+h)} + \operatorname{arcsinh}\sqrt h\,\Bigr]
  = \sqrt{\tfrac{2h}{\mu}}\bigl(1+O(h)\bigr) \ \ (h\to0),
  \qquad \sim \frac{h}{\sqrt{2\mu}} \ \ (h\to\infty) ,
\]
so the local estimate of Lemma~\ref{lem:traj} recovers $\bcr$ Lemma 2.1 exactly
in the limit, while the global one degrades from quadratic to linear growth.
Here $\Psi^2$ is real-analytic and $\tfrac\mu2\Psi(h)^2 = h(1+O(h))$, so this is
one of the cases in which the reparametrization of
Proposition~\ref{prop:shortcut} needs no shrinking of the witness at all:
$g=\tfrac\mu2\Psi(h)^2$ is already smooth and genuinely globally $\mu$-P\L{}, with
the same critical set, the same orbits and, by Lemma~\ref{lem:reparam}, the same
end-point map as $f$.

A much simpler $\theta$ is available here, and it is a good illustration of
Remark~\ref{rem:minimal}. Take
\[
  \theta(t) \ =\ \frac{t(1+t)}{\mu} ,
  \qquad\text{so that}\qquad
  \frac{\|\nabla g\|^2}{2g} \ \ge\ \frac{\theta'(h)^2\alpha(h)^2}{2\theta(h)}
  \ =\ \frac{(1+2h)^2}{(1+h)^2} \ \ge\ 1 ,
\]
and $g=\theta(h)$ is globally $1$-P\L{}: a polynomial reparametrization does the work
of the transcendental one. The two are genuinely different, not rescalings of each
other. Normalizing both to the constant $1$, the minimal choice is
$\theta_{\min}=\tfrac12\Psi^2 = \tfrac{1}{4\mu}\bigl[\sqrt{h(1+h)}+\operatorname{arcsinh}\sqrt h\,\bigr]^2$, and
\[
  \theta_{\min} \ =\ \frac{h}{\mu} + \frac{h^2}{3\mu} + O(h^3) ,
  \qquad\text{against}\qquad
  \theta \ =\ \frac{h}{\mu} + \frac{h^2}{\mu} .
\]
They agree to first order at the origin, as \eqref{eq:average} requires, since the
slack $\rho_\theta(h)=(1+2h)^2/(1+h)^2$ equals $1$ there. At the other end
$\rho_\theta\to4$, and \eqref{eq:average} evaluates in closed form: for every $h>0$,
\[
  \sqrt{\frac{\theta(h)}{\theta_{\min}(h)}}
  \ =\ \frac{\int_0^h\frac{1+2s}{\sqrt{s(1+s)}}\,ds}{\sqrt{2\mu}\,\Psi(h)}
  \ =\ \frac{2\sqrt{h(1+h)}}{\sqrt{h(1+h)}+\operatorname{arcsinh}\sqrt h} ,
\]
the antiderivative being $2\sqrt{s+s^2}$ exactly. The right-hand side increases to
$2$ as $h\to\infty$, so $\theta$ exceeds the minimal reparametrization by the factor
$4$ asymptotically, which is precisely the limiting slack in its own inequality.

An equivalent and more transparent reading of this condition is worth mentioning,
as it explains why nothing was lost. Because $s/(1+s) \ge s/2$ for $s\le1$ and
$\ge 1/2$ for $s\ge1$, the inequality is equivalent (up to constants) to the
conjunction of a local P\L{} inequality on $\{h\le1\}$ and a uniform lower bound
$\|\nabla f\| \ge c > 0$ on $\{h\ge1\}$. It is, in other words, jobs (J1) and
(J2) with nothing else, which is exactly the general condition of this note,
specialized to a witness that is bounded below away from $S$.

Finally, the class is strictly larger than global P\L{}:
$f(x) = \sqrt{1+\|x\|^2}-1$ on $\R^n$ satisfies the displayed inequality with
$\mu=\tfrac12$, since $\|\nabla f\|^2 = h(h+2)/(1+h)^2$ and
$(h+2)/(1+h)\ge1$; but $\|\nabla f\| < 1$ while $h\to\infty$, so $f$ is not
globally P\L{}. Consistently with Theorem~\ref{thm:main}, $f = \|\phi\|^2$ with
$\phi(x) = x\bigl(1+\sqrt{1+\|x\|^2}\bigr)^{-1/2}$, a diffeomorphism of $\R^n$.
\end{ex}

\begin{ex}[Gradients may vanish at infinity]\label{ex:log}
Let $\M=\R^n$ and $f(x) = \log(1+\|x\|^2)$. Then $h=f$, and with $r=\|x\|$ we
have $\|\nabla f\| = 2r/(1+r^2)$ and $r^2 = e^h-1$, so
\[
  \alpha(s) = 2\sqrt{e^s-1}\;e^{-s} ,
\]
which is positive definite. Since $\alpha(s)^2/s\to4$ as $s\downarrow0$,
\eqref{eq:A} holds with any $\mu<2$ on a small enough interval, but not with
$\mu=2$ on any interval. An explicit admissible pair is
\[
  \mu = \tfrac12 , \qquad h_0 = \tfrac12 ,
\]
because $e^s-1\ge s$ and $e^{-2s}\ge e^{-1}$ for $0\le s\le\tfrac12$, so that
$\alpha(s)^2 \ge 4s/e > s = 2\mu s$ there. (The pair $\mu=1$, $h_0=\tfrac12$ would
not do: at $s=\tfrac12$ one has $\alpha(s)^2\approx0.955<1$.) So $f$ is in our
class. It is not globally P\L{},
nor does it satisfy the condition of Example~\ref{ex:bounded}, because
$\|\nabla f\|\to0$ as $h\to\infty$ while both of those conditions keep the
gradient bounded below on $\{h\ge1\}$. The conclusions nevertheless hold, as one
checks by hand: $\Psi(h) = \int_0^h e^s(e^s-1)^{-1/2}\,ds/2 = \sqrt{e^h-1}$, so
$\Psi(h(x)) = \|x\| = \dist(x,S)$ with equality in \eqref{eq:growth}, the
radial flow $\dot r = -2r/(1+r^2)$ has length exactly $r_0$, and
$f = \|\phi\|^2$ with
\[
  \phi(x) \ =\ x\,\sqrt{\frac{\log(1+\|x\|^2)}{\|x\|^2}} ,
\]
which is a diffeomorphism of $\R^n$ because $t\mapsto\log(1+t)/t$ extends to a
smooth positive function at $t=0$; written this way there is no need to except the
origin, whereas the equivalent $x\,\|x\|^{-1}\sqrt{\log(1+\|x\|^2)}$ would have to
be completed by $\phi(0)=0$.

This example is instructive twice over. It shows that our class is strictly
larger than the motivating class; and it shows that
$\dist(\cdot,S)$ can grow exponentially in $h$, so that no polynomial global
growth estimate is available.
\end{ex}

\section{What the conclusions say for LQR and logistic regression}
\label{sec:applications}

We return to the two problems of Section~\ref{sec:why}. Both satisfy
$sgl$-P\L{}I, hence the hypotheses of Section~\ref{sec:preliminaries}, so the
whole of Section~\ref{sec:survive} applies to them and it seems worth reviewing
what it actually says. (Both also satisfy a saturated inequality, under the
additional estimates recalled below; but that is more than the structural results
require, and in the scalar problem treated first we prove it rather than assume
it.) The passage from the ingredients we do use to a saturated estimate is in any
case immediate, and we may as well give it once. Suppose $\|\nabla f\|^2\ge ch$ on
$\{h\le1\}$ and $\|\nabla f\|\ge\xi(h)$ throughout for some class $\mathcal{K}$
function $\xi$. On $\{h\ge1\}$ monotonicity gives $\|\nabla f\|\ge\xi(1)>0$, while
$h/(1+h)<1$ there; on $\{h\le1\}$ we have $h/(1+h)\le h$. So
\[
  \|\nabla f\|^2 \ \ge\ \frac{a\,h}{1+h}
  \qquad\text{globally, with } a := \min\{c,\ \xi(1)^2\} ,
\]
which is $sat$-P\L{}I. In this
section we revert to the notation of \cite{Sontag2025L4DC}, reversing the
dictionary of Section~\ref{sec:setting}: $\mathcal{L}$ for the loss,
$\underline{\mathcal{L}} = \inf_D\mathcal{L}$ for its minimal value, $k$ for the
parameter being optimized, $D$ for the (open) domain, and
$h = \mathcal{L}-\underline{\mathcal{L}}$ as before. The reason for reverting is
that $x$ is now the state of the control system and $k$ the feedback matrix, so
retaining $x$ for the optimization variable would be more confusing than the
change of alphabet.

\subsection{The scalar continuous-time LQR problem}

Take the loss \eqref{eq:lqrscalar} on $D=(0,+\infty)$, with
$\underline{\mathcal{L}}=1$ attained at $k=1$. We saw in Section~\ref{sec:why}
that no $gl$-P\L{}I estimate exists, and indeed no estimate with unbounded
$\alpha$. Nevertheless:

\begin{prop}[$sat$-P\L{}I with sharp constants]\label{prop:lqrsat}
For all $k>0$,
\[
  |\mathcal{L}'(k)|^2 \ \ge\ \frac{h(k)}{1+4h(k)}
  \qquad\text{that is, } sat\text{-P\L{}I with } a=b=\tfrac14 ,
\]
and hence also $|\mathcal{L}'|^2 \ge \tfrac14\,h/(1+h)$, which is the condition of
Example~\ref{ex:bounded} with $2\mu=\tfrac14$. In both forms the constant
$a=\tfrac14$ is sharp.
\end{prop}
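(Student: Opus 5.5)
The plan is to reduce the inequality to a polynomial inequality in $k$, checking it directly rather than invoking any of the structural results. I first compute $h$ and $\mathcal{L}'$ in closed form from \eqref{eq:lqrscalar}:
\[
  h(k) \ =\ \frac{1+k^2}{2k}-1 \ =\ \frac{(k-1)^2}{2k},
  \qquad
  \mathcal{L}'(k) \ =\ \frac12-\frac{1}{2k^2} \ =\ \frac{(k-1)(k+1)}{2k^2}.
\]
Then $1+4h = (2k^2-3k+2)/k$, which is positive for all $k$, so
\[
  \frac{h}{1+4h} \ =\ \frac{(k-1)^2}{2(2k^2-3k+2)} ,
  \qquad
  |\mathcal{L}'(k)|^2 \ =\ \frac{(k-1)^2(k+1)^2}{4k^4}.
\]
At $k=1$ both sides vanish. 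For $k\neq1$, I cancel the common factor $(k-1)^2$ and clear the positive denominators. The claim then becomes
\[
  (k+1)^2(2k^2-3k+2) \ \ge\ 2k^4 \qquad (k>0).
\]

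Next I expand the left side, which gives $2k^4+k^3-2k^2+k+2$. The inequality is therefore equivalent to $k^3-2k^2+k+2\ge0$. Since $k^3-2k^2+k+2 = k(k-1)^2+2 > 0$ for $k>0$, the inequality holds, and in fact strictly for $k\neq1$. This establishes $sat$-P\L{}I with $a=b=\tfrac14$, since $h/(1+4h) = \tfrac14 h/(\tfrac14+h)$. The second form follows from the elementary bound $1+4h\le 4(1+h)$, which gives $h/(1+4h)\ge \tfrac14\,h/(1+h)$.

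For sharpness of $a=\tfrac14$, I use that $|\mathcal{L}'(k)|^2<\tfrac14$ for every $k>0$, because $|1-k^{-2}|<1$. At the same time $h(k)\to\infty$ as $k\to\infty$, so $a\,h/(b+h)\to a$ for any fixed $b>0$. Hence no bound $|\mathcal{L}'|^2\ge a\,h/(b+h)$ can hold with $a>\tfrac14$, whatever $b$ is; this covers both displayed forms at once. Asymptotic tightness is also visible directly: both $|\mathcal{L}'|^2$ and $h/(1+4h)$ tend to $\tfrac14$.

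The only step needing care is the algebraic reduction, namely getting the expansion right and spotting the factorization $k(k-1)^2+2$. Everything else is routine.
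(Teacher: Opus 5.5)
Your proof follows the paper's. You use the same closed forms for $h$ and $\mathcal{L}'$, cancel $(k-1)^2$ in the same way, and reach the same cubic $p(k)=k^3-2k^2+k+2$. The two arguments differ in the positivity step. The paper locates the critical points of $p$ via $p'(k)=(3k-1)(k-1)$ and checks $p(1)=2$ and $p(0^+)=2$. Your identity $p(k)=k(k-1)^2+2$ gives $p\ge2$ on $(0,\infty)$ in one line. You also make explicit the step $1+4h\le4(1+h)$ behind the second form, which the paper leaves implicit.

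One claim in your sharpness paragraph is false: it is not true that $|\mathcal{L}'(k)|^2<\tfrac14$ for every $k>0$. Since $|\mathcal{L}'(k)|=\tfrac12|1-k^{-2}|$, the bound $|1-k^{-2}|<1$ holds only for $k>1/\sqrt2$. In fact $\mathcal{L}'(k)\to-\infty$ as $k\to0^+$; for example, $|\mathcal{L}'(\tfrac12)|^2=\tfrac94$.

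The conclusion is unaffected, because the sharpness argument only uses large $k$. Either restrict the claim to $k>1/\sqrt2$, or rely on the limit argument in your next sentence: $|\mathcal{L}'|^2\to\tfrac14$ while $a\,h/(b+h)\to a$ as $k\to\infty$. That limit argument is exactly the paper's.
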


\begin{proof}
Here $h = (k-1)^2/(2k)$ and $\mathcal{L}'(k) = (k-1)(k+1)/(2k^2)$. For $k\ne1$ we
may divide the desired inequality by $(k-1)^2$; using
$1+4h = (k+2(k-1)^2)/k = (2k^2-3k+2)/k$, it becomes
\[
  \frac{(k+1)^2}{4k^4} \ \ge\ \frac{1}{2(2k^2-3k+2)} ,
\]
that is, $2(k+1)^2(2k^2-3k+2)\ge 4k^4$. Expanding the left side gives
$4k^4 + 2k^3 - 4k^2 + 2k + 4$, so the inequality reduces to
\[
  p(k) \ :=\ k^3-2k^2+k+2 \ \ge\ 0 \qquad (k>0) .
\]
Since $p'(k) = (3k-1)(k-1)$, the only interior local minimum on $(0,\infty)$ is at
$k=1$, where $p(1)=2>0$; and $p(0^+)=2>0$ with $p(k)\to\infty$. So $p>0$ on
$(0,\infty)$, with the deficit $2p(k)\ge4$. Sharpness: as $k\to\infty$ we have
$h\to\infty$, $|\mathcal{L}'|^2\to\tfrac14$ and $h/(1+4h)\to\tfrac14$, so no
$a>\tfrac14$ is admissible.
\end{proof}

The loss is coercive on $D$, since $\mathcal{L}(k)\to\infty$ both as $k\to0^+$
and as $k\to\infty$; so Remark~\ref{rem:coercive} applies even though $D$ is not
complete in the Euclidean metric. The unique critical point is $k=1$, with
$\mathcal{L}''(1)=1>0$. Theorem~\ref{thm:one} therefore asserts the existence of
a diffeomorphism $\phi\colon(0,\infty)\to\R$ with
$\mathcal{L}=\underline{\mathcal{L}}+\phi^2$, and in this example one can write it
down:
\[
  \mathcal{L}(k) - 1 \ =\ \frac{(k-1)^2}{2k} \ =\ \phi(k)^2 ,
  \qquad \phi(k) \ :=\ \frac{k-1}{\sqrt{2k}} ,
  \qquad \phi'(k) \ =\ \frac{k+1}{(2k)^{3/2}} \ >\ 0 ,
\]
with $\phi(k)\to\mp\infty$ as $k\to0^+,\infty$. So $\phi$ is indeed a
diffeomorphism onto $\R$, exhibiting the normal form and, incidentally, the
diffeomorphism $D\cong\R$ predicted by Theorem~\ref{thm:one}.

Two comments. First, the normal form is emphatically \emph{not} an isometry: in
the $\phi$ coordinate the loss is a perfect parabola, yet the gradient flow of
$\mathcal{L}$ is not the gradient flow of $\phi^2$, because $\phi$ distorts the
metric by the factor $\phi'$. So the normal form is a statement about the smooth
structure and leaves the metric behavior of Figure~\ref{fig:lqr}(a) exactly as it
was; nothing in either statement constrains the other. Second, $\Psi$ can be computed here from
Example~\ref{ex:bounded} with $2\mu=\tfrac14$ (the weaker of the two forms in
Proposition~\ref{prop:lqrsat} suffices), giving
\[
  |k-1| \ \le\ \Psi(h) \ =\ 2\bigl[\sqrt{h(1+h)}+\operatorname{arcsinh}\sqrt h\,\bigr] ,
\]
a bound which is $O(\sqrt h)$ near $k=1$ and $O(h)$ far away, matching the two
regimes of the figure. (Note that $\dist(k,S)=|k-1|$ here, the metric being
Euclidean on $D=(0,\infty)$. The bound is asymptotically sharp as $k\to\infty$:
there $h\sim k/2$, so both sides are $\sim k$.)

\subsection{The general continuous-time LQR problem}

Now let $\dot x = Ax+Bu$ with $(A,B)$ stabilizable, $Q\succeq0$, $R\succ0$,
$u=-kx$, and
\[
  \mathcal{L}(k) \ :=\ \operatorname{tr}(P_k)
\]
on the open set $D = D_{(A,B)} = \{k : A-Bk \text{ Hurwitz}\}\subset\R^{m\times n}$,
with $P_k$ and $Y_k$ solving the two Lyapunov equations of
\cite[\S2]{Sontag2025L4DC}; assume as there that $(A,B)$ is stabilizable, $R\succ0$,
$Q\succeq0$ and $(A,\sqrt Q)$ observable. We have written the loss as
$\operatorname{tr}(P_k)$ directly rather than as an expected cost, since
$\mathbb{E}[x_0^\top P_k x_0] = \operatorname{tr}\bigl(P_k\,\mathbb{E}[x_0x_0^\top]\bigr)$
agrees with $\operatorname{tr}(P_k)$ only when the initial covariance is the
identity; a general positive definite covariance can be accommodated, but it is a
separate matter and we do not want it left implicit. Everything we need is proved
in \cite{CuiJiangSontag2024}, in two separate places.

\begin{thm}[Cui, Jiang and Sontag \cite{CuiJiangSontag2024}]\label{thm:cjs}
\phantom{.}
\begin{enumerate}
\item[(a)] \emph{(Coercivity.)} $\mathcal{L}$ is coercive on $D$: if $k_j\to\partial D$
or $\|k_j\|_F\to\infty$ then $\mathcal{L}(k_j)\to\infty$. Equivalently, the
sublevel sets of $\mathcal{L}$ are compact subsets of $D$.
\item[(b)] \emph{(The CJS-P\L{} estimate, for ``comparison just saturated''.)}
There is a class $\mathcal{K}$ function
$\xi_1$ with $\|\nabla\mathcal{L}(k)\|_F \ge \xi_1(h(k))$ for all $k\in D$.
\end{enumerate}
\end{thm}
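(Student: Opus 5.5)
The two parts are independent, and I would prove both from the Lyapunov representations. Write $A_k := A-Bk$, $Q_k := Q+k^\top Rk$, and let $Y_k=\int_0^\infty e^{A_kt}e^{A_k^\top t}\,dt$ solve $A_kY+YA_k^\top+I=0$. Then $\mathcal{L}(k)=\operatorname{tr}(Q_kY_k)$. I would use the standard gradient formula $\nabla\mathcal{L}(k)=2E_kY_k$, where $E_k:=Rk-B^\top P_k$. I also need the comparison identity $P_k-P_{k^*}=\int_0^\infty e^{A_k^\top t}\,\Delta^\top R\Delta\,e^{A_kt}\,dt$, where $\Delta$ is the gain error measured against the optimal feedback. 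I would verify this identity by subtracting the two Riccati/Lyapunov equations, and it gives $h(k)=\operatorname{tr}(\Delta^\top R\Delta\,Y_k)$.

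\emph{Part (a).} I would treat the two ways of leaving compact subsets of $D$ separately.

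\begin{itemize}
\item For $\|k\|_F\to\infty$, use $e^{A_kt}e^{A_k^\top t}\succeq e^{-2\|A_k\|t}I$. This gives $Y_k\succeq I/(2\|A_k\|)$, and hence
\[
\mathcal{L}(k)\ \ge\ \frac{\lambda_{\min}(R)\,\|k\|_F^2}{2(\|A\|+\|B\|\|k\|)}\ \to\ \infty .
\]

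\item For $k_j\to\bar k\in\partial D$ with $\|k_j\|$ bounded, $A_{k_j}$ has an eigenvalue $\lambda_j$ with $\operatorname{Re}\lambda_j\uparrow0$. Take unit eigenvectors $v_j$ of $A_{k_j}$ for $\lambda_j$. Then
\[
v_j^*P_{k_j}v_j=\frac{v_j^*Q_{k_j}v_j}{2|\operatorname{Re}\lambda_j|}.
\]
The numerator stays bounded below, by compactness and a PBH argument. Observability of $(A,\sqrt Q)$ implies that $(A_{\bar k},Q_{\bar k}^{1/2})$ has no unobservable eigenvector. Indeed, $Q^{1/2}v=0$ and $Rkv=0$ would give $Av=\lambda v$ together with $\sqrt Q v=0$, which observability forbids. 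Hence $\mathcal{L}(k_j)\ge v_j^*P_{k_j}v_j\to\infty$.
\end{itemize}

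Compactness of sublevel sets then follows, since $\mathcal{L}$ is continuous on $D$.

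\emph{Part (b).} I would first get a positive definite minorant. From $h=\operatorname{tr}(\Delta^\top R\Delta Y_k)\le\|Y_k\|\,\|R\|\,\|\Delta\|_F^2$ and $\|\nabla\mathcal{L}\|_F\ge 2\lambda_{\min}(Y_k)\|E_k\|_F$, it remains to relate $E_k$ and $\Delta$. Using $Rk^*=B^\top P_{k^*}$, I would write $E_k=R\Delta-B^\top(P_k-P_{k^*})$ and compare the two terms on the compact sublevel set $\{h\le\rho\}$, where $Y_k$, $P_k$ and $\|k\|$ are all bounded by part (a). This yields $\|\nabla\mathcal{L}\|^2\ge c_\rho h$ on $\{h\le\rho\}$; alternatively, the same follows from the uniqueness of the critical point $k^*$ and Lemma~\ref{lem:nondeg} near $k^*$, combined with compactness away from it. This is $sgl$-P\L{}I, hence a positive definite witness by Proposition~\ref{prop:sgl}. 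To upgrade to class $\mathcal{K}$, I would set $\xi(r):=\inf\{\|\nabla\mathcal{L}(k)\|_F : h(k)\ge r\}$, which is nondecreasing by construction. It then suffices to show $\xi(r)>0$ for every $r>0$, after which $\xi$ can be minorized by a continuous strictly increasing function.

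\emph{Expected obstacle.} By compactness of sublevel sets, $\xi(r)>0$ reduces to showing that $\|\nabla\mathcal{L}\|$ stays bounded away from zero as $h\to\infty$. That is the real content of CJS-P\L{}: the lower bound $Y_k\succeq I/(2\|A_k\|)$ degrades as $\|k\|\to\infty$, so naive estimates give only $\|\nabla\mathcal{L}\|\gtrsim\sqrt h/(1+h)$. I would attack it separately along the two escape routes of part (a).

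\begin{itemize}
\item Near $\partial D$, $Y_k$ blows up in the direction of the slow mode, and I expect this to dominate $E_k$.
\item For $\|k\|\to\infty$, I would exploit that $E_k\approx Rk$ grows linearly while $Y_k$ shrinks only like $1/\|k\|$. The product $E_kY_k$ should then stay of order one, and extracting this requires tracking $Y_k$ in the direction of $k$ rather than through its least eigenvalue.
\end{itemize}

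If this direct estimate resists, I would fall back on citing the corresponding proof in \cite{CuiJiangSontag2024}. The structural results of Section~\ref{sec:survive} need only the $sgl$-P\L{}I part, which is established above.
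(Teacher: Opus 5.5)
Your part (a) is correct and self-contained; the paper itself does not prove either part and simply imports both from \cite{CuiJiangSontag2024}. Your observation is also sound that the $\mathcal{PD}$ and $sgl$-P\L{}I information used downstream already follows from (a), uniqueness of the critical point, Lemma~\ref{lem:nondeg} and compactness of the bands $\{a\le h\le b\}$.

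\textbf{The gap is in (b).} Part (b) asks for a class-$\mathcal{K}$ witness. By your own correct reduction, this amounts to $\liminf_{h\to\infty}\|\nabla\mathcal{L}\|_F>0$. At exactly that point you offer only ``I expect'' and ``should'', then fall back on citation. What you actually prove is the weaker positive-definite statement, not (b).

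The heuristic is also wrong as it stands. The approximation $E_k\approx Rk$ fails, because $B^\top P_k$ grows at the same linear rate as $Rk$; in the scalar example $E_k=(k^2-1)/(2k)\sim k/2$. What you really need is a lower bound $\|E_k\|_F\gtrsim\|k\|_F$, and nothing in the proposal supplies it. The standard bound $h\le\lambda_{\min}(R)^{-1}\|Y_{k^*}\|\,\|E_k\|_F^2$ gives only $\|E_k\|_F\gtrsim\sqrt h$. Where $h$ grows linearly in $\|k\|$, this yields merely $\|\nabla\mathcal{L}\|_F\gtrsim\|k\|^{-1/2}$, which decays.

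The same weakness affects your ``compare the two terms'' route to $sgl$-P\L{}I. Away from $k^*$, $R\Delta$ and $B^\top(P_k-P_{k^*})$ have comparable size, so the triangle inequality cannot rule out cancellation. Only your fallback argument works there.

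\textbf{The missing idea} is to test $P_k-P_{k^*}$ against the Lyapunov equation of the \emph{optimal} closed loop rather than that of $A-Bk$. Set $\Delta=k-k^*$, $E_k=Rk-B^\top P_k$ and $Y_*:=Y_{k^*}$. A direct computation using $Rk^*=B^\top P_{k^*}$ gives $(A-Bk^*)^\top(P_k-P_{k^*})+(P_k-P_{k^*})(A-Bk^*)=\Delta^\top R\Delta-\Delta^\top E_k-E_k^\top\Delta$. Hence
\[
  h(k)+\operatorname{tr}\bigl(\Delta^\top R\Delta\,Y_*\bigr)
  \ =\ 2\operatorname{tr}\bigl(\Delta^\top E_k\,Y_*\bigr)
  \ \le\ 2\|Y_*\|\,\|\Delta\|_F\,\|E_k\|_F .
\]
This gives two bounds on all of $D$ at once:
\begin{itemize}
\item a linear bound $\|E_k\|_F\ge c_1\|\Delta\|_F$, with $c_1=\lambda_{\min}(R)\lambda_{\min}(Y_*)/(2\|Y_*\|)$;
\item by AM--GM, $\|E_k\|_F\ge c_2\sqrt h$.
\end{itemize}

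With these, your crude bound $\lambda_{\min}(Y_k)\ge1/(2\|A-Bk\|)$ is enough, and there is no need to track $Y_k$ along $k$. Write $a_0=\|A-Bk^*\|$ and $b=\|B\|$, so that $\|A-Bk\|\le a_0+b\|\Delta\|_F$. Splitting at $\|\Delta\|_F=1$ gives $\|\nabla\mathcal{L}\|_F\ge(a_0+b)^{-1}\min\{c_1,c_2\sqrt h\}$. This dominates the class-$\mathcal{K}$ function $(a_0+b)^{-1}c_1c_2\sqrt h/(c_1+c_2\sqrt h)$; squaring, it even gives $sat$-P\L{}I. It handles the boundary regime and the regime $\|k\|\to\infty$ simultaneously.
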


Part (a) is the coercivity lemma of \cite{CuiJiangSontag2024}, proved there for
self-containedness and noted to be already known; it also appears as one of the
analytical properties collected in \cite[\S3.4]{BuMesbahi2020}. Part (b) is the
gradient dominance lemma, and it improves on what was previously available, where
the P\L{} inequality was known only on compact subsets
\cite{BuMesbahi2020, Mohammadi2022}. The stronger statement that $\mathcal{L}$
satisfies $sat$-P\L{}I is asserted in \cite{Sontag2025L4DC}, and would of course
also suffice; but we do not need it, and it is instructive to see why not.

A remark on vocabulary, since the two literatures name this differently. In
\cite{CuiJiangSontag2024} the conclusion is packaged as the single statement that
$\mathcal{L}$ is a \emph{proper objective function}: by definition, one whose
gradient is locally Lipschitz, whose relative loss
$\mathcal{L}-\underline{\mathcal{L}}$ is a \emph{size function} for
$(D,\{k_{\mathrm{opt}}\})$, and which satisfies the CJS-P\L{} estimate. The middle
clause is where coercivity lives: a size function is required to be proper, that
is, to blow up at $\partial D$ and at infinity, so \emph{properness in that sense
is exactly the coercivity of Theorem~\ref{thm:cjs}(a)} and exactly the hypothesis
of Remark~\ref{rem:coercive}. Of the three clauses we use the second and the third;
the local Lipschitz condition on $\nabla\mathcal{L}$, which the first
clause supplies and which is what the ISS argument needs, we do not use at all.

\begin{prop}[CT LQR: connectedness, critical set, and $sgl$-P\L{}I]\label{prop:lqrsgl}
The critical set of $\mathcal{L}$ is the single point $k_{\mathrm{opt}}$; the
domain $D$ is connected; and $\mathcal{L}$ satisfies $sgl$-P\L{}I on $D$.
\end{prop}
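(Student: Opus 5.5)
The plan is to derive all three claims from Theorem~\ref{thm:cjs}, in the order critical set, $sgl$-P\L{}I, connectedness. The last claim should not need any control-theoretic topology, because it will follow from Lemma~\ref{lem:traj} through Remark~\ref{rem:coercive}.

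\emph{Critical set.} Let $k\in D$ with $\nabla\mathcal{L}(k)=0$. Theorem~\ref{thm:cjs}(b) gives $\xi_1(h(k))\le 0$. Since $\xi_1$ is of class $\mathcal{K}$, this forces $h(k)=0$, so $k$ is a global minimizer. To see that the minimizer is unique and equal to $k_{\mathrm{opt}}$, I would quote the standard LQR fact under the stated assumptions ($R\succ0$, $(A,\sqrt Q)$ observable, initial covariance the identity). Every $P_k$ satisfies $P_k\succeq P^{\mathrm{are}}$, the stabilizing solution of the Riccati equation, and the completion-of-squares identity shows that equality of traces forces $(k-R^{-1}B^\top P^{\mathrm{are}})$ to vanish against $Y_k\succ0$. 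Hence $k=k_{\mathrm{opt}}=R^{-1}B^\top P^{\mathrm{are}}$. Existence of the minimizer is also guaranteed abstractly, by coercivity (Theorem~\ref{thm:cjs}(a)) together with nonemptiness of $D$, which holds because $(A,B)$ is stabilizable.

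\emph{$sgl$-P\L{}I.} Fix $\rho>0$. On the band $\{\rho_0\le h\le\rho\}$, for any $\rho_0\in(0,\rho)$, Theorem~\ref{thm:cjs}(b) gives $\|\nabla\mathcal{L}\|^2\ge\xi_1(\rho_0)^2\ge(\xi_1(\rho_0)^2/\rho)\,h$. What remains is a linear-in-$h$ bound near the minimum. For this I would use Lemma~\ref{lem:nondeg}: $\mathcal{L}$ is smooth (indeed real-analytic on $D$) and coercive, and it has a unique critical point. It remains to check that $\nabla^2\mathcal{L}(k_{\mathrm{opt}})\succ0$. I would compute the second variation at $k_{\mathrm{opt}}$: in direction $E$ it equals $2\operatorname{tr}(E^\top R E\,Y_{k_{\mathrm{opt}}})$, since the first-order Lyapunov-variation terms drop out at the optimum. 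This is positive definite because $R\succ0$ and $Y_{k_{\mathrm{opt}}}\succ0$.

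Lemma~\ref{lem:nondeg} then yields $\rho_0,c>0$ with $\|\nabla\mathcal{L}\|^2\ge c\,h$ on $\{h\le\rho_0\}$. Taking $c_\rho=\min\{c,\xi_1(\rho_0)^2/\rho\}$ covers all of $\{h\le\rho\}$. This is exactly the argument of Proposition~\ref{prop:sgl}, (i)$\Rightarrow$(ii).

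\emph{Connectedness.} By Remark~\ref{rem:coercive}, Lemma~\ref{lem:traj} holds on each connected component of $D$, with coercivity substituting for completeness, because the gradient flow stays in a compact sublevel set. Each component is an open set on which $\mathcal{L}$ is still coercive and bounded below, and its sublevel sets are closed in $D$. So from any point, the negative gradient flow converges to a critical point lying in the same component. Since the only critical point is $k_{\mathrm{opt}}$, every component contains $k_{\mathrm{opt}}$, and therefore $D$ is connected.

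The main obstacle is the Hessian computation at $k_{\mathrm{opt}}$. The Lyapunov perturbation bookkeeping is easy to get wrong. If it proves delicate, I would fall back on citing the known nondegeneracy of the LQR Hessian at the optimum \cite{BuMesbahi2020}.
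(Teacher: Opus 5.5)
Your proof is correct, and its $sgl$-P\L{}I part matches the paper's. You take $\mathcal{PD}$ from Theorem~\ref{thm:cjs}(b) and $loc$-P\L{}I from Lemma~\ref{lem:nondeg} with the Hessian at $k_{\mathrm{opt}}$, writing out the band estimate rather than quoting Proposition~\ref{prop:sgl}. The other two parts take different routes.

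For the critical set, the paper does not use the CJS estimate. It reads $Rk=B^\top P_k$ off the formula $\nabla\mathcal{L}(k)=2(Rk-B^\top P_k)Y_k$ with $Y_k\succ0$. It then feeds this into the Lyapunov equation, which makes $P_k$ a stabilizing Riccati solution, and invokes uniqueness of that solution. You instead get invexity from Theorem~\ref{thm:cjs}(b) and uniqueness of the minimizer from completion of squares. This is valid, but it makes the critical-set claim depend on the deep gradient-dominance lemma, which the paper keeps confined to the $\mathcal{PD}$ half. In exchange, your identity $h(k)=\operatorname{tr}\bigl((k-k_{\mathrm{opt}})^\top R\,(k-k_{\mathrm{opt}})\,Y_k\bigr)$ removes what you call the main obstacle. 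With $k=k_{\mathrm{opt}}+E$ and $k\mapsto Y_k$ smooth, it gives $h=\operatorname{tr}(E^\top RE\,Y_{k_{\mathrm{opt}}})+O(\|E\|_F^3)$ directly, with no Lyapunov-variation bookkeeping.

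For connectedness, the paper needs no flow. A component of $D$ is open and closed in $D$, so $\mathcal{L}$ restricted to it is still coercive and attains its minimum; that interior minimum is a critical point. Your flow argument works too, but it is heavier; it is why you had to establish $sgl$-P\L{}I first. Saying Lemma~\ref{lem:traj} ``holds on each component'' is also loose. That lemma is stated for connected $\M$ with $f^*=\inf_\M f$, and a component's infimum is not known a priori to be $\underline{\mathcal{L}}$. What you really use is the lemma's length estimate, run with the global $h$ on all of $D$, with coercivity replacing completeness as in Remark~\ref{rem:coercive}; that estimate never uses connectedness. You also need that the limit stays in the starting component, which holds because components are closed in $D$.
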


\begin{proof}
\emph{The critical set.} Since $\nabla\mathcal{L}(k) = 2(Rk-B^\top P_k)Y_k$ with
$Y_k\succ0$, $\nabla\mathcal{L}(k)=0$ forces $Rk = B^\top P_k$; feeding this back
into the Lyapunov equation for $P_k$ makes $P_k$ a stabilizing solution of the
algebraic Riccati equation, and that solution is unique, so
$k=k_{\mathrm{opt}}$.

\emph{Connectedness.} This must be settled before the results of
Section~\ref{sec:survive} may be applied, since those assume a connected domain.
By Theorem~\ref{thm:cjs}(a) the restriction of $\mathcal{L}$ to any connected
component of $D$ is coercive, so that component contains a minimizer of
$\mathcal{L}$, hence a critical point. By the previous paragraph there is only
one critical point in all of $D$. So $D$ has exactly one component.

\emph{$sgl$-P\L{}I.} By Proposition~\ref{prop:sgl} it is enough to produce
$\mathcal{PD}$ and $loc$-P\L{}I, and the two come from different places.

$\mathcal{PD}$ is immediate from Theorem~\ref{thm:cjs}(b), since
$\mathcal{K}\subset\mathcal{PD}$.

For $loc$-P\L{}I we verify the hypotheses of Lemma~\ref{lem:nondeg}. Coercivity
is Theorem~\ref{thm:cjs}(a) and uniqueness of the critical point is the first
paragraph, so only nondegeneracy of the Hessian is left. Use the second-order
expansion of \cite{CuiJiangSontag2024}: for
$k,k+E\in D$,
\[
  \mathcal{L}(k+E) = \mathcal{L}(k) + 2\operatorname{tr}\bigl(E^\top(Rk-B^\top P_k)Y_k\bigr)
  + \operatorname{tr}(E^\top R E\,Y_k)
  + 2\operatorname{tr}\bigl(E^\top(Rk-B^\top P_k)\Delta Y_k\bigr) + O(\|E\|_F^3) .
\]
At $k=k_{\mathrm{opt}}$ the optimality condition $Rk = B^\top P_k$ kills both the
first-order term and the fourth term, leaving
\[
  \tfrac12\bigl\langle E, \nabla^2\mathcal{L}(k_{\mathrm{opt}})E\bigr\rangle
  = \operatorname{tr}\bigl(E^\top R E\,Y_{k_{\mathrm{opt}}}\bigr)
  \ \ge\ \lambda_{\min}(R)\,\lambda_{\min}(Y_{k_{\mathrm{opt}}})\,\|E\|_F^2 \ >\ 0 ,
\]
since $R\succ0$ and $Y_k\succ0$. So $\nabla^2\mathcal{L}(k_{\mathrm{opt}})\succ0$,
and Lemma~\ref{lem:nondeg} gives $loc$-P\L{}I in the sublevel form we require.
\end{proof}

Observe how the two halves of that proof come from different places. This is
the transversality of Remark~\ref{rem:transverse} in its most concrete form. The
CJS-P\L{} estimate is designed to be sharp at infinity (that is the point of
removing the restriction to compact sets) and it is deliberately not sharp at
the origin. In the scalar instance worked out in \cite{CuiJiangSontag2024} the
witness is
\[
  \xi_1(p) \ =\ \frac{2p}{\sqrt2+4p} ,
\]
which saturates at $\tfrac12$ as $p\to\infty$ but is \emph{linear} in $p$ at the
origin, whereas $loc$-P\L{}I demands order $\sqrt p$. So class $\mathcal{K}$, on
its own, does not deliver $loc$-P\L{}I; the origin has to be supplied separately,
and above it comes from the Hessian. Conversely the Hessian says nothing about
infinity. Each hierarchy contributes exactly the half the other lacks.

With Proposition~\ref{prop:lqrsgl} and Theorem~\ref{thm:cjs}(a) in hand,
Remark~\ref{rem:coercive} and Theorem~\ref{thm:one} give:

\begin{cor}\label{cor:lqr}
There is a diffeomorphism $\phi\colon D\to\R^{mn}$ with
$\mathcal{L} = \underline{\mathcal{L}} + \|\phi\|^2$. In particular $D$ is
diffeomorphic to $\R^{mn}$, and by Theorem~\ref{thm:gconvex} there is a complete
Riemannian metric on $D$ making $\mathcal{L}$ geodesically convex and globally
$1$-P\L{}.
\end{cor}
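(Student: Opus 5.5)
The plan is to assemble Corollary~\ref{cor:lqr} from three ingredients: Proposition~\ref{prop:lqrsgl}, the coercivity of Theorem~\ref{thm:cjs}(a), and the conformal-change device of Remark~\ref{rem:coercive}. The only real issue is that $D$ is an open subset of $\R^{m\times n}$ and is not complete in the Euclidean (Frobenius) metric, while Theorem~\ref{thm:one} and Theorem~\ref{thm:gconvex} are stated for complete manifolds.

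First, Proposition~\ref{prop:lqrsgl} shows that $D$ is connected and that $\mathcal{L}$ has the single critical point $k_{\mathrm{opt}}$. It also shows that $\mathcal{L}$ is $sgl$-P\L{}I, hence, by Proposition~\ref{prop:sgl}, satisfies \eqref{eq:GD} for some positive definite $\alpha$ obeying \eqref{eq:A}. By Theorem~\ref{thm:cjs}(a), $h=\mathcal{L}-\underline{\mathcal{L}}$ is coercive on $D$.

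Next, apply Remark~\ref{rem:coercive}(c) and replace the Frobenius metric $g$ by $g'=e^{2\chi(h)}g$, with $\chi$ smooth, nondecreasing, and satisfying $e^{\chi(s)}\ge M(s)(1+s)$. This makes $D$ a connected, complete Riemannian manifold. By part (a) of that remark, $\mathcal{L}$ still satisfies \eqref{eq:GD} and \eqref{eq:A} for $g'$, with witness $e^{-\chi}\alpha$. By part (b), together with Lemma~\ref{lem:reparam} for $\sigma=e^{-2\chi}$, the critical set is unchanged and is still $\{k_{\mathrm{opt}}\}$. Theorem~\ref{thm:one} then applies on $(D,g')$. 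It gives a diffeomorphism $\phi\colon D\to\R^{mn}$ with $\mathcal{L}=\underline{\mathcal{L}}+\|\phi\|^2$, and in particular $D\cong\R^{mn}$. This conclusion refers to no metric, so it holds as stated.

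Alternatively, one can skip the metric change and use the coercive variant directly. Lemma~\ref{lem:nondeg} and Lemma~\ref{lem:MB} give $\nabla^2\mathcal{L}(k_{\mathrm{opt}})\succ0$ (the proof of Proposition~\ref{prop:lqrsgl} computed this already). Theorem~\ref{thm:BCR33}, which needs only coercivity, a unique critical point and a nondegenerate Hessian, then yields $\phi$ immediately. I would present this route as the cleanest and mention the other for consistency with the remark.

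For the last clause: $D\cong\R^{mn}$ is contractible, and $(D,g')$ is complete and satisfies the hypotheses of Section~\ref{sec:survive}. Theorem~\ref{thm:gconvex} therefore provides a complete metric making $\mathcal{L}$ geodesically convex and globally $1$-P\L{}. Since $S=\{k_{\mathrm{opt}}\}\cong\R^0$, that metric can even be taken to be Euclidean; concretely, it is the pullback of the Euclidean metric under $\phi$, rescaled as in the proof.

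The main obstacle, mild as it is, is this completeness bookkeeping. One must check that the metric change preserves both the hypothesis and the unique critical point, and Remark~\ref{rem:coercive}(a)--(b) handles both. One should also note that connectedness of $D$, needed throughout Section~\ref{sec:survive}, was established in Proposition~\ref{prop:lqrsgl}.
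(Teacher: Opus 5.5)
Your main route is the paper's own argument. It combines Proposition~\ref{prop:lqrsgl} with the coercivity of Theorem~\ref{thm:cjs}(a), uses the conformal change of Remark~\ref{rem:coercive} to reach a complete metric without losing the hypothesis or the unique critical point, applies Theorem~\ref{thm:one}, and then Theorem~\ref{thm:gconvex} for the last clause, so it is correct as written. On the alternative route, two cautions: Theorem~\ref{thm:BCR33} is imported under the standing completeness convention of \cite{BCR2026}, so you cannot literally skip the metric change on $D$ with the Frobenius metric, though since coercivity, the critical set and positive definiteness of the Hessian at a critical point are metric-independent, applying it with $g'$ (or any complete metric) repairs this; and Lemma~\ref{lem:nondeg} runs from $\nabla^2\mathcal{L}(k_{\mathrm{opt}})\succ0$ to $loc$-P\L{}I rather than conversely, so the positivity should be credited to the direct computation in Proposition~\ref{prop:lqrsgl}.
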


The last clause deserves a comment, because \cite{Sontag2025L4DC} remarks in
passing, next to the picture of the non-convex sublevel sets, that one ``may
convexify by coordinate change, but [it is] hard to see comparison functions for
the original problem''. Corollary~\ref{cor:lqr} says that a convexifying
coordinate change not merely exists but can be taken to make the loss an exact
convex quadratic, $\mathcal{L}\circ\phi^{-1}(y) = \underline{\mathcal{L}}+\|y\|^2$.
What it does not do is produce comparison functions, for the reason given above:
the change of variables is not an isometry, so it moves the gradient around.

\begin{rem}[Relation to the known topology of $D$]\label{rem:topology}
Two statements about the stabilizing set have now been made, and both should be
checked against what is known.

First, $D$ is connected (Proposition~\ref{prop:lqrsgl}). This is consistent with
the literature, where $D$ is known to be open, unbounded and in fact contractible
\cite{BuMesbahiMesbahi2021}; we note only that the coercivity of $\mathcal{L}$
gives a one-line proof. (The
disconnectedness phenomena in that literature concern restricted or structured
gains, output feedback, and the discrete-time setting, not unrestricted
continuous-time state feedback.)

Second, and this is the part that is not merely a repackaging,
$D\cong\R^{mn}$ \emph{as a smooth manifold}. Contractibility does not give this
by itself: that implication is exactly what fails for contractible open manifolds
in dimension three and above, which is the whole point of the Whitehead manifold
and of $\bcr$ Corollary 1.7. Upgrading ``contractible'' to ``diffeomorphic to
Euclidean space'' is what Theorem~\ref{thm:one} does, and it does it using the
LQR cost as the Morse function. We have not found the diffeomorphism statement in
the literature, but we have not looked exhaustively, and the differential-geometric
parametrizations of $D$ cited in \cite{BuMesbahiMesbahi2021} may well yield it by
another route.

One warning, since the shape of the argument invites a shortcut that does not
work. Knowing that $D\cong\R^{mn}$ does \emph{not} dispose of the completeness
hypothesis, and it is not what does so here. Completeness is a property of a
metric, not of a smooth manifold: a diffeomorphism $D\to\R^{mn}$ tells us only
that $D$ \emph{admits} a complete metric, whereas $sgl$-P\L{}I is an inequality
about $\|\nabla\mathcal{L}\|$ in the ambient Euclidean metric of $\R^{m\times n}$,
in which $D$ is not complete. Transporting along such a diffeomorphism carries the
metric with it, and one is left verifying the hypothesis in a pushed-forward metric
over which one has no control. What does the work is
Remark~\ref{rem:coercive}: coercivity replaces completeness in each of the three
places where the latter is used, and the conformal rescaling stays on $D$, changing
only the metric and weakening the witness in a controlled way. That route also
avoids a circularity, since $D\cong\R^{mn}$ is a conclusion here, not a hypothesis;
the diffeomorphism type of $D$ plays no role at all in verifying that
$\mathcal{L}$ satisfies $sgl$-P\L{}I.
\end{rem}

\subsection{Logistic regression}

Return to the two-sample example \eqref{eq:logistic}: samples $(1,0)$ and
$(2,1)$, scalar parameter, $\mathcal{L}(k)=\log(1+e^k)+\log(1+e^{-2k})$. As noted
in Section~\ref{sec:why}, $\mathcal{L}\to\infty$ at both ends while
$\mathcal{L}'\to1$ and $-2$, so there is no $\mathcal{K}_\infty$ estimate. But
$\mathcal{L}''>0$ everywhere, so $\mathcal{L}$ is strictly convex with a unique
minimizer $\bar k$ ($\bar k\approx0.4196$,
$\underline{\mathcal{L}}\approx1.2839$, $\mathcal{L}''(\bar k)\approx1.082$), and
$|\mathcal{L}'|$ increases away from $\bar k$ to the limits $1$ and $2$. Hence
$\alpha_{\mathcal{L}}$ is positive definite (increasing to
$\min\{1,2\}=1$, so in fact of class $\mathcal{K}$); and $loc$-P\L{}I holds by
Lemma~\ref{lem:nondeg}, since $\mathcal{L}$ is coercive with the single critical
point $\bar k$ and $\mathcal{L}''(\bar k)>0$. So $sgl$-P\L{}I holds and
Theorem~\ref{thm:one} applies: $\mathcal{L}=\underline{\mathcal{L}}+\phi^2$ for a
diffeomorphism $\phi\colon\R\to\R$, which here is simply
$\phi(k)=\operatorname{sign}(k-\bar k)\sqrt{\mathcal{L}(k)-\underline{\mathcal{L}}}$
and is smooth at $\bar k$ by the Morse lemma, strictly increasing by strict
convexity, and onto $\R$ by coercivity.

The general case is the same story, but the hypothesis on the data must be stated
carefully, because ``not linearly separable'' is ambiguous and the weaker readings
are not enough. Let $\{(x_i,y_i)\}_{i=1}^N$ have $x_i\in\R^n$ and
$y_i\in\{0,1\}$, write $s_i := 2y_i-1\in\{\pm1\}$, so that the empirical binary
cross-entropy is
\[
  \mathcal{L}(k) \ =\ \frac1N\sum_{i=1}^N \log\bigl(1+e^{-s_i x_i^\top k}\bigr) ,
\]
and assume
\begin{equation}\label{eq:nosep}
  \text{there is no } v\in\R^n\setminus\{0\} \text{ with }
  s_i\,x_i^\top v \ \ge\ 0 \ \text{ for every } i .
\end{equation}
This is the exact condition used in \cite{CuiJiangSontag2026}: it excludes not
only strict separation but also weak, or quasi-complete, separation, and it is what
makes $\mathcal{L}$ coercive: if a $v$ as in \eqref{eq:nosep} existed, every
summand would be nonincreasing along $t\mapsto tv$ and $\mathcal{L}$ would stay
bounded. Observe that \eqref{eq:nosep} also forces the $x_i$ to span $\R^n$, since
any $v$ orthogonal to all of them would satisfy $s_ix_i^\top v=0\ge0$; so no
separate spanning hypothesis is needed.

Under \eqref{eq:nosep}, \cite{CuiJiangSontag2026} give that $\mathcal{L}$ is
coercive, that $\nabla\mathcal{L}$ is globally Lipschitz, and that $\mathcal{L}$
satisfies the $\mathcal{K}$-P\L{} condition,
$\|\nabla\mathcal{L}(k)\|\ge\alpha(\mathcal{L}(k)-\underline{\mathcal{L}})$ for
some $\alpha\in\mathcal{K}$. Since $\mathcal{K}\subset\mathcal{PD}$, that gives
the level-wise bound; and since
\[
  \nabla^2\mathcal{L}(k) \ =\ \frac1N\sum_{i=1}^N p_i(k)\bigl(1-p_i(k)\bigr)\,x_ix_i^\top
  \ \succ\ 0
\]
(the weights are strictly positive and the $x_i$ span), $\mathcal{L}$ is strictly
convex with a unique, nondegenerate minimizer. Lemma~\ref{lem:nondeg} then gives
$loc$-P\L{}I, and Proposition~\ref{prop:sgl} gives $sgl$-P\L{}I.
Theorem~\ref{thm:one} then yields:

\begin{cor}[Cross-entropy is a nonlinear sum of squares]\label{cor:logistic}
Under the above hypotheses there is a diffeomorphism $\phi\colon\R^n\to\R^n$ with
\[
  \mathcal{L}(k) \ =\ \underline{\mathcal{L}} + \|\phi(k)\|^2
  \qquad\text{for all } k\in\R^n .
\]
That is, the logistic-regression landscape is diffeomorphic to the standard
paraboloid, by a global smooth change of parameters.
\end{cor}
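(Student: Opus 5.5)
The plan is to apply Theorem~\ref{thm:one} on $\M=\R^n$ with the Euclidean metric, which is complete and connected. Its hypotheses are that $\mathcal{L}$ be smooth and bounded below, that it satisfy \eqref{eq:GD} for some positive definite $\alpha$ obeying \eqref{eq:A}, and that it have a unique critical point. Smoothness is clear, since each summand $\log(1+e^{-s_ix_i^\top k})$ is $C^\infty$. Boundedness below is also clear, since every summand is positive.

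First I establish strict convexity and uniqueness of the critical point. The Hessian formula displayed above has weights $p_i(1-p_i)>0$. Condition \eqref{eq:nosep} forces $\{x_i\}$ to span $\R^n$, so $\nabla^2\mathcal{L}(k)\succ0$ for every $k$. Hence $\mathcal{L}$ is strictly convex and has at most one critical point. The coercivity supplied by \cite{CuiJiangSontag2026} under \eqref{eq:nosep} gives existence of a minimizer $\bar k$. That minimizer is the unique critical point, and $\nabla^2\mathcal{L}(\bar k)\succ0$.

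Next I verify $sgl$-P\L{}I through Proposition~\ref{prop:sgl}, splitting it into its two halves as in the LQR proof. For the level-wise half, the $\mathcal{K}$-P\L{} estimate of \cite{CuiJiangSontag2026} gives $\|\nabla\mathcal{L}\|\ge\xi(h)$ with $\xi\in\mathcal{K}\subset\mathcal{PD}$. On each band $\{h\le\rho\}$ away from $h=0$ this bounds the gradient below by $\xi$ evaluated at the lower end of the band. For the local half, Lemma~\ref{lem:nondeg} applies, because $\mathcal{L}$ is coercive with a unique nondegenerate critical point, and it yields $\|\nabla\mathcal{L}\|^2\ge c\,h$ on $\{h\le\rho_0\}$. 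Combining the two as in the (i)$\Rightarrow$(ii) argument, on $\{\rho_0\le h\le\rho\}$ we have $\|\nabla\mathcal{L}\|^2\ge\xi(\rho_0)^2\ge(\xi(\rho_0)^2/\rho)h$. This gives $sgl$-P\L{}I, and Proposition~\ref{prop:sgl}(ii)$\Rightarrow$(i) then produces a witness satisfying \eqref{eq:A}.

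Finally, Theorem~\ref{thm:one} yields a diffeomorphism $\phi\colon\R^n\to\R^n$ with $\mathcal{L}=\underline{\mathcal{L}}+\|\phi\|^2$. Alternatively, coercivity together with the nondegenerate unique critical point already lets one invoke Theorem~\ref{thm:BCR33} directly, which needs no P\L{} hypothesis at all; I would mention this as a cross-check.

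The only step needing care is the imported one, namely the coercivity and $\mathcal{K}$-P\L{} estimate under \eqref{eq:nosep}. If one prefers not to rely on \cite{CuiJiangSontag2026}, coercivity follows from a compactness argument on the unit sphere. By \eqref{eq:nosep}, every unit $v$ has some $i$ with $s_ix_i^\top v<0$, and by continuity and compactness of the sphere, $\min_{\|v\|=1}\max_i(-s_ix_i^\top v)=:\delta>0$. Hence $\mathcal{L}(tv)\ge\tfrac1N\log(1+e^{t\delta})\to\infty$ uniformly in $v$. With coercivity established this way, the Theorem~\ref{thm:BCR33} route finishes the proof without any gradient-domination estimate.
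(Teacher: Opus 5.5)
Your main line is the paper's own argument. The $\mathcal{K}$-P\L{} estimate of \cite{CuiJiangSontag2026} gives the level-wise bound, and Lemma~\ref{lem:nondeg} gives $loc$-P\L{}I from coercivity and the unique nondegenerate critical point. Proposition~\ref{prop:sgl} turns these into a witness obeying \eqref{eq:A}, and Theorem~\ref{thm:one} concludes. Your explicit splicing of the two bounds on $\{h\le\rho_0\}$ and $\{\rho_0\le h\le\rho\}$ is correct.

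What you offer as a cross-check is, however, a genuinely shorter and more economical proof. The proof of Theorem~\ref{thm:one} does nothing but extract coercivity and a positive definite Hessian from \eqref{eq:GD} and \eqref{eq:A}, and then call Theorem~\ref{thm:BCR33}. Here both facts are already in hand, so for the corollary itself the passage through $sgl$-P\L{}I is idle.

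Your route needs nothing from \cite{CuiJiangSontag2026}. Your sphere-compactness argument is correct and even gives $\mathcal{L}(k)\ge\frac{1}{N}\log(1+e^{\delta\|k\|})$. It also shows that the $\mathcal{K}$-P\L{} estimate is dispensable for the nonconvex variants mentioned after the corollary, although the paper lists it among the requirements there. Coercivity, a unique critical point and nondegeneracy already suffice. In fact they imply $sgl$-P\L{}I by themselves, since each band $\{a\le h\le b\}$ with $a>0$ is compact and free of critical points.

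What the paper's route buys in exchange is placing logistic regression inside the $sgl$-P\L{}I class. That makes the quantitative by-products of Section~\ref{sec:preliminaries} available, such as $\dist(k,\bar k)\le\Psi(h(k))$ for the original witness. It also illustrates the transversality of Remark~\ref{rem:transverse}.
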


Observe that of the three properties quoted from \cite{CuiJiangSontag2026}, we use
only the first and the third, and of the third only the fact that it implies the
$\mathcal{PD}$ property, together with the local nondegeneracy. The globally Lipschitz gradient, which is
what that paper needs for its noise-to-state stability conclusions, plays no role
here; conversely the nondegeneracy of $\nabla^2\mathcal{L}(\bar k)$, which is what
we need, plays no role there. This is the transversality of
Remark~\ref{rem:transverse} showing up in a single worked example.

We should also be careful not to claim too much for the corollary. It is a statement about
the smooth structure of the landscape, not about the algorithm: the change of
variables is not linear, so it does not commute with gradient descent, and the
algorithmic content remains the $\mathcal{K}$-P\L{} estimate of
\cite{CuiJiangSontag2026}. What Corollary~\ref{cor:logistic} does say, and
convexity by itself does not, is that there is a \emph{global} normal form with no
residual structure whatsoever: the loss is exactly $\|\phi\|^2$, not merely
convex, and its sublevel sets are diffeomorphic images of round balls. It is also
worth noting that the corollary uses less than convexity (only $sgl$-P\L{}I plus
a nondegenerate minimizer) and so applies to regularized or mildly nonconvex
variants, provided all of coercivity, a unique critical point, and nondegeneracy
at it survive alongside the $\mathcal{K}$-P\L{} estimate. The comparison-function
estimate alone would not suffice.

\subsection{What the structure theory adds, and what it does not}

To summarize the position of these two examples in the two hierarchies of
Remark~\ref{rem:transverse}. Both problems satisfy conditions on \emph{both}
axes: a class $\mathcal{K}$ estimate (hence small-input ISS by
\cite{CuiJiangSontag2024, Sontag2022}, and hence small-covariance noise-to-state
stability of the associated stochastic gradient dynamics by
\cite{CuiJiangSontag2026}) and $sgl$-P\L{}I (hence the normal form above). But
they satisfy them for different reasons, and Proposition~\ref{prop:lqrsgl} shows
just how different: in the LQR problem the two halves come from two different
lemmas, one sharp at infinity and one sharp at the origin. The saturating
behavior at large loss is what buys robustness; the nondegenerate minimizer plus
level-wise uniformity is what buys the normal form. Neither implies the other:
Example~\ref{ex:exp} is $\mathcal{K}_\infty$ with no minimizer, and
Example~\ref{ex:log} has a perfect normal form with $\alpha\to0$ at infinity.

It is worth stressing how systematic this is. The robustness ladder of
\cite{Sontag2025L4DC, CuiJiangSontag2026} is indexed entirely by the comparison
class, in both the deterministic and the stochastic setting:
\[
  \mathcal{K}_\infty \Rightarrow \text{ISS} , \quad
  \mathcal{K} \Rightarrow \text{siISS} , \quad
  \mathcal{PD} \Rightarrow \text{iISS} ;
  \qquad
  \mathcal{K}_\infty \Rightarrow \text{NSS} , \quad
  \mathcal{K} \Rightarrow \text{scNSS} , \quad
  \mathcal{PD} \Rightarrow \text{iNSS} .
\]
Every rung is a statement about $\alpha$ at infinity. The structure theory reads
$\alpha$ only at the origin, plus level-wise positivity. So the two bodies of
results are not competing refinements of one hypothesis; each requires a
different half of it.

So the structure theory should be read as adding a third kind of conclusion to the
list in \cite{Sontag2025L4DC}, alongside rates and robustness: the \emph{shape} of
the landscape and of its minimizer set. In the two problems above the minimizer
set is a point and the shape conclusion is a normal form. It would be more
interesting in a problem where the minimizer set is
positive-dimensional (the
overparametrized LQR formulation of \cite{deOliveiraSiamiSontag2025}, where the
critical points are the low-rank approximations of $k^*$, is the obvious
candidate), but there the critical set includes strict saddles, so no P\L{}I of
any kind holds globally, and one would have to work on the uniformly imbalanced
invariant sets where \cite{Sontag2025L4DC, WafiEtAl} recover $gl$-P\L{}I. We have
not pursued this.

\section{Discussion}
\label{sec:discussion}


The structure theory of \cite{BCR2026} is insensitive to the global size of the
gradient. What it needs is that the \L{}ojasiewicz exponent at the minimizer set
be $\tfrac12$ (this is condition \eqref{eq:A}, and Example~\ref{ex:cross}
shows it is indispensable) and that the gradient not vanish on any positive
level, uniformly along that level. Example~\ref{ex:tanh} shows that this second
requirement cannot simply be dropped either, though there the failure is narrow and
worth stating exactly: the nonlinear least-squares form and the fiber bundle
structure survive, and what is lost is the identification of the target with $\R^k$,
hence the pure quadratic of Theorem~\ref{thm:one} and the hidden convexity of
Theorem~\ref{thm:gconvex}. By Proposition~\ref{prop:sgl} those two requirements together are
exactly $sgl$-P\L{}I in the sense of \cite{Sontag2025L4DC}. Given them, the
negative gradient flow reaches $S$ in finite length, $S$ is a Morse--Bott
manifold of minimizers, the end-point map is a trivial smooth fiber bundle over
$S$ when $S$ is contractible, and $f = f^* + \|\phi\|^2$. In \KL{} language, the
hypothesis is a global inequality of \KL{} type whose desingularizer is
$O(\sqrt h)$ at the origin, ``of \KL{} type'' because we do not require the
desingularizer to be concave.

So the main conclusion is short: \emph{$gl$-P\L{}I may be replaced by $sgl$-P\L{}I
throughout \cite{BCR2026}, and neither of the two conditions that make up
$sgl$-P\L{}I can be dropped.} The conclusions are therefore available in the
continuous-time LQR and logistic regression problems that motivated the hierarchy
in the first place, both of which satisfy $sgl$-P\L{}I
(Section~\ref{sec:applications}); $sat$-P\L{}I, which sits strictly between, is
more than is needed.

Let us be careful not to claim too much. Four points seem worth stating.

First, the generalization is, in a precise sense, only apparent, though the
precise sense matters, and Remark~\ref{rem:collapse} is where it is stated. By
Theorem~\ref{thm:gconvex}, on a contractible manifold any $f$ in our class
becomes globally $1$-P\L{} after a change of complete metric, the new metric
depending on $f$; and by Corollary~\ref{cor:charS} the family of realizable
pairs $(\M,S)$ is unchanged. Relative to a \emph{fixed} metric the class is
genuinely larger, by Examples~\ref{ex:bounded} and \ref{ex:log}; it is only
after quantifying over metrics that the two classes agree. So the enlargement
adds functions but no geometry. One could regard
this as a defect of the result or as the point of it, depending on taste; we
lean towards the latter, since a hypothesis that can be normalized away is a
hypothesis one should not have been assuming.

Second, the cost is quantitative: global quadratic growth is genuinely lost, and
Example~\ref{ex:log} shows how badly. Anyone wanting a global estimate must
reintroduce control on $\alpha$ away from $S$, at which point one is back to
something like (J3).

Third, we have not touched the two hypotheses that \cite{BCR2026} itself flags
as the interesting ones: contractibility of $\M$, where the four examples of
$\bcr$ Section 7 remain in force verbatim and continue to obstruct the obvious
weakenings; and $C^\infty$ regularity, which enters through
\cite{RB2024a} in Lemma~\ref{lem:MB} and could presumably be relaxed to $C^p$
with $p$ large, at the price of the usual bookkeeping.

Fourth, and this is the point we would most want a reader from control theory to
take away: the normal form says nothing about robustness, and the
comparison-function class says nothing about the normal form
(Remark~\ref{rem:transverse}). The two live at opposite ends of the same
inequality. It is tempting, having proved that $\mathcal{L}$ is
$\underline{\mathcal{L}}+\|\phi\|^2$, to expect ISS-like conclusions to follow;
they do not, because $\phi$ is not an isometry and the perturbed flow is not the
perturbed flow of $\|\phi\|^2$. The robustness theory of \cite{Sontag2025L4DC,
CuiJiangSontag2024, CuiJiangSontag2026} is not superseded by anything here.

Four questions we would like to see answered.

\begin{enumerate}
\item Exactly how much uniformity is needed? Example~\ref{ex:tanh} shows that the
compact form of $sgl$-P\L{}I does not suffice, and that the sublevel form is the
right one. What remains open is the narrow gap described at the end of
Remark~\ref{rem:invex}: an $f$ with $h(\M)=[0,\infty)$, that is, attaining its
infimum and unbounded above, whose sharpest $\alpha_f$ is positive on every
positive level but is not bounded below on some band of levels. There
no continuous positive definite witness exists, so $\Psi$ is unavailable and our
proof fails; we know of no counterexample to the conclusions.
\item Is the exponent $\tfrac12$ needed, or only the nondegeneracy it implies?
Example~\ref{ex:cross} rules out exponent $\tfrac34$, but the natural
formulation of the question is in terms of a Morse--Bott hypothesis imposed
directly, as in the first bullet of $\bcr$ Section 8.
\item Given $\alpha$, what quantitative control on the diffeomorphism $\psi$ of
Theorem~\ref{thm:main} can be asserted away from $S$? Near $S$ the singular
values of $D\psi$ are pinned by \eqref{eq:MB}, and a clean statement in terms of
$\Psi$ ought to exist further out.
\item Is there an application with a positive-dimensional minimizer set? The
conclusions of \cite{BCR2026} are most interesting when $\dim S>0$, and both
applications in Section~\ref{sec:applications} have $S$ a single point, where the
content reduces to a global Morse lemma. The overparametrized formulations of
\cite{deOliveiraSiamiSontag2025, WafiEtAl} do have positive-dimensional critical
sets, but they also have strict saddles, so no P\L{}I holds globally there; the
question is whether the invariant sets on which $gl$-P\L{}I is recovered are large
enough, and shaped well enough, for the fiber bundle picture to say something.
\end{enumerate}

\subsection*{Acknowledgment}

The structural results reformulated here are entirely those of Boumal,
Criscitiello and Rebjock \cite{BCR2026}; our contribution is only to notice which
hypothesis their proofs use, and where. The comparison-function hierarchy in which
that hypothesis turns out to sit, and the two motivating problems of
Sections~\ref{sec:why} and~\ref{sec:applications}, are from
\cite{Sontag2025L4DC} and the work summarized there, in particular
\cite{CuiJiangSontag2024, CuiJiangSontag2026}. The Morse--Bott input is due to
Rebjock and Boumal \cite{RB2024a}, and the smoothness of the end-point map to
Falconer \cite{Falconer1983}.

It is a pleasure to thank Nicolas Boumal for a careful reading of an earlier
version of this note. Two of his suggestions are incorporated directly: the
polynomial reparametrization $\theta(t)=t(1+t)/\mu$ in Example~\ref{ex:bounded},
and the single formula for $\Lambda$ in the proof of
Proposition~\ref{prop:shortcut}, which replaces a clumsier two-step construction.
His observation that his $\theta$ serves the same purpose as $\tfrac12\Psi^2$ while
differing from it is what led us to the identity \eqref{eq:average} and to the
characterization of $\tfrac12\Psi^2$ as a pointwise lower bound for the admissible
reparametrizations, and as the minimal admissible one whenever it is smooth
(Remark~\ref{rem:minimal}), and thence to the fact that the obstructions exhibited
in Remarks~\ref{rem:shrink} and~\ref{rem:sqrtorder} are attached to the witness
rather than to $f$. He also proposed a reformulation of the whole argument in which
the existence of the reparametrization is taken as the hypothesis and the work of
this note is to supply it; that vantage point is what suggested
Remark~\ref{rem:equiv}, and, because his formulation allows the image of $\Lambda$
to be a ball of finite radius, the closer reading of Example~\ref{ex:tanh}.

\subsection*{A note on the use of AI}

An extended dialogue with several AI systems assisted the author in drafting and polishing the exposition, carrying out computational and logical checks, suggesting ideas, and locating references.


\end{document}